\def\today{February 12, 2009}
\newtheorem{theo}{Theorem}[section]
{\theorembodyfont{\rm} \newtheorem{defi}[theo]{Definition}}
{\theorembodyfont{\rm} \newtheorem{exa}[theo]{Example}}
{\theorembodyfont{\rm} \newtheorem{rem}[theo]{Remark}}
\newtheorem{prop}[theo]{Proposition}
\newtheorem{cor}[theo]{Corollary}
\newenvironment{proof}{{\sc Proof:}}{\mbox{}\hfill$\Box$\par}
\newcommand{\eqnref}[1]{~\mbox{$(${\rm \ref{#1}}$)$}}
\renewcommand{\theequation}{\thesection.\arabic{equation}}
\newcommand{\junk}[1]{}
\newcommand{\N}{{\mathbb N}}
\newcommand{\F}{{\mathbb F}}
\newcommand{\Z}{{\mathbb Z}}
\newcommand{\C}{{\mathbb C}}
\newcommand{\cC}{{\mathcal C}}
\newcommand{\cM}{{\mathcal M}}
\newcommand{\cR}{{\mathcal R}}
\newcommand{\cS}{{\mathcal S}}
\newcommand{\cCbar}{\mbox{$\overline{\cC}$}}
\newcommand{\cCpol}{\mbox{${\mathcal C}_{\text{pol}}$}}
\newcommand{\cCbarpol}{\mbox{$\overline{{\mathcal C}}_{\text{pol}}$}}
\newcommand{\Gbar}{\mbox{$\bar{G}$}}
\newcommand{\CC}{convolutional code}
\newcommand{\rank}{\text{rk}\,}
\newcommand{\cMzn}{\mbox{$\cM_{n,z}$}}
\newcommand{\zME}{\mbox{$z$}\text{ME}}
\newcommand{\imR}{\mbox{$\text{im}_{\mathcal R}$}}
\newcommand{\imFz}{\mbox{$\text{im}_{{\mathbb F}[z]}$}}
\newcommand{\edge}[2]{\mbox{$-\!\!\!-\!\!\!\longrightarrow$}\hspace{-2em}
     \raisebox{.3ex}{$\begin{array}{c}{\scriptscriptstyle{#1}}\\[-1.4ex]{\scriptscriptstyle{#2}}\end{array}$}\hspace{.6em}}
\newcommand{\im}{\mbox{\rm im}\,}
\newcommand{\dist}{\mbox{\rm dist}}
\newcommand{\del}{\mbox{\rm del}\,}
\newcommand{\wt}{\mbox{${\rm wt}$}}
\newcommand{\we}{\mbox{\rm we}}
\newcommand{\WAM}{\mbox{\rm WAM}}
\newcommand{\T}{\mbox{$\!^{\sf T}$}}
\newcounter{alp}
\newcounter{ara}
\newcounter{rom}
\newenvironment{romanlist}{\begin{list}{(\roman{rom})\hfill}{\usecounter{rom}
     \topsep0ex \labelwidth.7cm \leftmargin.7cm \labelsep0cm
     \rightmargin0cm \parsep0ex \itemsep.6ex
     \partopsep1.6ex}}{\end{list}}
\newenvironment{alphalist}{\begin{list}{(\alph{alp})\hfill}{\usecounter{alp}
     \topsep0ex \labelwidth.7cm \leftmargin.7cm \labelsep0cm
     \rightmargin0cm \parsep0ex \itemsep.6ex
     \partopsep1.6ex}}{\end{list}}
\newenvironment{arabiclist}{\begin{list}{(\arabic{ara})\hfill}{\usecounter{ara}
     \topsep0ex \labelwidth.7cm \leftmargin.7cm \labelsep0cm
     \rightmargin0cm \parsep0ex \itemsep.6ex
     \partopsep1.6ex}}{\end{list}}
\title{On Isometries for Convolutional Codes}
\date\today
\author{Heide Gluesing-Luerssen\footnote{University of Kentucky, Department of Mathematics,
       715 Patterson Office Tower, Lexington, KY 40506-0027, USA; heidegl@ms.uky.edu}
       }
\begin{document}
\maketitle

{\bf Abstract:} In this paper we will discuss isometries and strong isometries for convolutional codes.
Isometries are weight-preserving module isomorphisms whereas strong
isometries are, in addition, degree-preserving. Special cases of these maps are certain types of monomial transformations.
We will show a form of MacWilliams Equivalence Theorem, that is, each isometry between convolutional codes is given by
a monomial transformation.
Examples show that strong isometries cannot be characterized this way, but special attention paid to the weight adjacency matrices
allows for further descriptions.
Various distance parameters appearing in the literature on convolutional codes will be discussed as well.

{\bf Keywords:} Convolutional codes, strong isometries, state space realizations, weight adjacency matrix,
monomial equivalence, MacWilliams Equivalence Theorem

{\bf MSC (2000):} 94B10, 94B05, 93B15, 93B20

\section{Introduction and Basic Notions}\label{SS-Intro}
\setcounter{equation}{0}

One of the most famous results in the theory of linear block codes is MacWilliams' Equivalence Theorem~\cite{MacW62,MacW63}; see also
\cite[Sec.~7.9]{HP03}.
It tells us that two block codes are isometric if and only if they are monomially equivalent.
Stated more precisely, weight-preserving isomorphisms between codes are given by a permutation and rescaling of the
coordinates.
Hence the intrinsic notion of isometry coincides with the extrinsic notion of monomial equivalence and this settles the
question of a classification of block codes over fields with respect to their error-corrrecting properties.
Since the discovery of the importance of linear block codes over~$\Z_4$ for nonlinear binary codes, the result has enjoyed various
generalizations to block codes over certain finite rings, see for instance the articles \cite{WaWo96,Wo99a,GrSch00,DiLP04a}.

For convolutional code  such a result is not yet known.
In other words, a classification taking all relevant parameters of the codes into account has not yet been established and it is not yet clear as to when two such codes may be declared the same with respect to their error-correcting performance.
In this paper, we will make a step in this direction by studying isometries, that is, weight-preserving $\F[z]$-module isomorphisms between convolutional codes in $\F[z]^n$.
It is immediate that the notion of isometry is too weak in order to give a meaningful classification of convolutional codes.
Indeed, one can readily present convolutional codes of positive degree that are isometric to certain block codes.
For this reason we will also consider degree-preserving isometries, in other words, isometries that also preserve
the length of the codeword sequence.
They will be called strong isometries throughout this paper.

In the next section we will recall various distance parameters known for convolutional codes, see, e.~g., \cite{JoZi99, HJZZ99,JPB90}, and in
Section~\ref{SS-Iso} we will discuss which of them are preserved under (strong) isometries.
In Section~\ref{SS-WAM} we will also introduce a crucial invariant of convolutional codes, the weight adjacency matrix (WAM, for short).
This matrix has been studied in detail in the literature
\cite{McE98a,GL05p,GS07}.
It is indexed by the states of a chosen state transition diagram and contains, for each pair of states, the weight enumerator
polynomial of the set of outputs corresponding to the respective state transition.
Factoring out the group of state space isomorphisms the WAM turns into an invariant that
contains an abundance of information about the code.
Indeed, all the distance parameters mentioned above can be deduced from the WAM.
However, the WAM is an even more detailed invariant of the code.
Indeed, in Section~\ref{SS-Iso} we will present an example showing that (strongly isometric) codes sharing all those distance parameters do not
necessarily have the same WAM.
Further proof of the strength of the WAM is provided by the MacWilliams Duality Theorem which tells us that the WAM of the dual code can be computed
directly from the WAM of the primary code without having to compute weights of branches in the state transition diagram of the dual
code~\cite{GS08,GS08p}.

In Section~\ref{SS-MacWE} we will show that, similar to the MacWilliams Equivalence Theorem for block codes, two convolutional codes are
isometric if and only if they differ only by a $z$-monomial transformation, that is, by a permutation of the codeword coordinates and a rescaling with
monomials of the form $\alpha z^s,\,\alpha\in\F^*,\,s\in\Z$.
However, even though this result is very useful due to its explicit description of all isometries, it is not quite the answer one is looking for
because isometries are too weak of a notion in order to classify convolutional codes in a meaningful way.
But the characterization just mentioned is the best one can hope for because, as we will see in
Section~\ref{SS-Iso}, no such result can be expected for strong isometries.
Indeed, there exist strongly isometric codes that are not necessarily monomially equivalent in the classical sense (allowing, besides
permutations, only rescaling by nonzero constants) --- even if the codes share the same WAM.
Fortunately, there are two positive results concerning strongly isometric codes.
Firstly, codes with only positive Forney indices and sharing the same WAM are even monomially equivalent and no a priori knowledge about a strong
isometry is needed; this has been proven earlier in~\cite{GS07}.
Secondly, strongly isometric {\sl encoders\/} sharing the same WAM with respect to the same fixed basis of their common state space are
monomially equivalent.
Stated in other words, two strongly isometric codes for which there exists a pair of strongly isometric bases sharing the same state
transitions and the same WAM are monomially equivalent.

All the above shows that a classification of convolutional codes is not as straightforward as for block codes.
Due to the abundance of information about the performance of a code stored in its WAM, we think it is worthwhile to investigate those
strong isometries that also preserve the WAM of a code.
An explicit characterization of those maps would enable us to tell whether two codes may be regarded completely the same with respect
to their error-correcting capabilities.
We have to leave this as an open question to future research.

Finally, in the last section of this paper we will discuss generalizations of our results in two different directions.
Firstly, we will briefly consider weight-preserving $\F$-linear (but not necessarily $\F[z]$-linear) isomorphisms between convolutional
codes.
This much weaker notion of isometry raises plenty of interesting questions, which, again, we have to leave open to future research.
First answers can be found in the monograph~\cite{Pi88b}.
Secondly, we will discuss (strong) isometries for convolutional codes defined in the setting of rational functions and Laurent series.
We will see that the results from the polynomial setting extend in an expected way to this larger context.

Let us now close the introduction with recalling the basic notions of convolutional coding theory as needed throughout the paper.
From now on let
\begin{equation}\label{e-Fq}
    \F=\F_q\text{ be a finite field with~$q$ elements}.
\end{equation}
We will need the {\sl weight enumerator\/} of sets $S\subseteq \F^n$.
It is given as
\begin{equation}\label{e-we}
  \we(S):=\sum_{i=0}^n \lambda_iW^i\in\C[W], \text{ where }
  \lambda_i:=\#\{v\in S\mid \wt(v)=i\},
\end{equation}
where, of course, $\wt(v)$ denotes the Hamming weight of $v\in\F^n$.
The weight enumerator $\we(\cC)$ of a block code $\cC\subseteq\F^n$ has been investigated intensively in the
block coding literature.
For instance, the famous MacWilliams Identity Theorem~\cite{MacW63} tells us how to completely derive
$\we(\cC^{\perp})$ from $\we(\cC)$, where $\cC^{\perp}$ is the dual of~$\cC$ with
respect to the standard inner product on~$\F^n$.

Throughout the main part of this article a convolutional code will be defined in the polynomial setting.
Only in Section~\ref{SS-Gen} we will briefly address a broader context.
Thus, a {\sl convolutional code of length\/} $n$ is a submodule~$\cC$ of $\F[z]^n$ of the form
\[
    \cC=\im G:=\{uG\,\big|\, u\in\F[z]^k\}
\]
where~$G$ is a {\sl basic\/} matrix in $\F[z]^{k\times n}$, i.~e.
\begin{equation}\label{e-Gbasic}
  \rank G(\lambda)=k \text{ for all }\lambda\in\overline{\F},
\end{equation}
with $\overline{\F}$ being an algebraic closure of~$\F$.
We call such a matrix $G$ an {\sl encoder}, and the number
\begin{equation}\label{e-degG}
   \deg(\cC):=\deg(G):=\max\{\deg(M)\mid M\text{ is a $k$-minor of }G\}
\end{equation}
is said to be the {\sl degree\/} of the encoder~$G$ or of the code~$\cC$.
It is clear that for two basic matrices $G,\,G'\in\F[z]^{k\times n}$ one has
$\im G=\im G'$ if and only if $G'=UG$ for some $U\in GL_k(\F[z])$.
For each basic matrix the sum of its row degrees is at least $\deg(G)$, where the degree of a
polynomial row vector is defined as the maximal degree of its entries.
A matrix $G\in\F[z]^{k\times n}$ is said to be {\sl reduced\/} if the sum of its row degrees equals
$\deg(G)$.
Among the many characterizations of reducedness that can be found in, e.~g., \cite[Main~Thm.]{Fo75} or
\cite[Thm.~A.2]{McE98}, one will be particularly useful for us.
If~$G\in\F[z]^{k\times n}$ is reduced with row degrees $\nu_1,\ldots,\nu_k$ then
\begin{equation}\label{e-reduced}
   \deg(uG)=\max\{\deg u_i+\nu_i\mid i=1,\ldots,k\} \text{ for all }u=(u_1,\ldots,u_k)\in\F[z]^k.
\end{equation}
It is well known~\cite[p.~495]{Fo75} that each convolutional code admits a reduced encoder.
The row degrees of a reduced encoder are, up to ordering, uniquely determined by the code and are called the
{\sl Forney indices\/} of the code or of the encoder.
The maximal Forney index is called the {\sl memory\/} of the code.
It follows that a convolutional code has a constant encoder matrix if and only if the degree is zero.
In that case the code is, in a natural way, a block code.
Finally, the dual of a code $\cC\subseteq\F[z]^n$ is defined as usual as
\begin{equation}\label{e-Cdual}
  \cC^{\perp}=\{w\in\F[z]^n\mid wv\T=0\text{ for all }v\in\cC\}.
\end{equation}
It is well known that if~$\cC$ is a $k$-dimensional code of degree~$\delta$, then $\cC^{\perp}$ is a code of dimension $n-k$ and
degree~$\delta$.

Besides these algebraic notions the main concept in error-control coding is the weight.
For a vector $v=\sum_{t=0}^N v^{(t)}z^t\in\F[z]^n$, $v^{(t)}\in\F^n$, we define
its weight as $\wt(v)=\sum_{t=0}^N \wt(v^{(t)})$.
The {\sl distance\/} of a (block or convolutional) code~$\cC$ is
$\dist(\cC)=\min\{\wt(v)\,|\, v\in\cC,\,v\not=0\}$.
In convolutional coding theory several other distance parameters have been introduced.
They all give detailed information about the performance of the code under various decoding algorithms.
We will present some of those parameters below in Definition~\ref{D-distparam}.

\section{The Weight Adjacency Matrix and Distance Parameters}\label{SS-WAM}
\setcounter{equation}{0}
Let $\cC=\im G$ be a \CC\ of degree~$\delta$ and with reduced encoder $G\in\F[z]^{k\times n}$.
It is well-known that the encoding process can be described in terms of a minimal
state space system, that is, there exist matrices
$(A,B,C,D)\in\F^{\delta\times\delta+k\times\delta+\delta\times n+k\times n}$ such that
\begin{equation}\label{e-iso}
  v=uG
        \Longleftrightarrow
        \left\{\begin{array}{rcl} x_{t+1}&=&x_tA+u_tB\\v_t&=&x_tC+u_tD\end{array}
          \;\text{ for all }t\geq0\right\} \text{ where }x_0=0
\end{equation}
for any $u=\sum_{t\geq0}u_tz^t\in\F[z]^k$ and $v=\sum_{t\geq0}v_tz^t\in\F[z]^n$,
see also \cite[Thm.~2.3]{GL05p} and \cite{McE98a}.
A particular realization will be introduced in Proposition~\ref{P-CCF} below.
We call the space $\F^{\delta}$, where the states $x_t$ assume their values, the
{\em state space\/} of the code, and the matrix quadruple $(A,B,C,D)$ is called a
{\em minimal state space realization}.
In the sequel we will review a few basic properties of minimal state space realizations,
for details  see, e.~g., \cite[Sec.~2]{GL05p}.
In general a code has many state space realizations.
Each realization uniquely determines the encoder~$G$ satisfying\eqnref{e-iso} via
$G=B(z^{-1}I-A)^{-1}C+D$.
The dynamical system given by the state space realization\eqnref{e-iso} can be visualized by a state
transition diagram, which is defined as the directed and labeled graph with the vertices
given by the states in $\F^{\delta}$ and the labeled edges given by the set
\[
   \Big\{X\edge{u}{v}\;Y\,\Big|\, X,\,Y\in\F^{\delta}, u\in\F^k,\, v\in\F^n: Y=XA+uB,\, v=XC+uD\Big\}.
\]
Using\eqnref{e-iso} we may identify the codewords $v=\sum_{t=0}^Nv_tz^t=(\sum_{t=0}^Nu_tz^t)G,\,u_t\in\F^k$, of degree~$N$
with the path of length~$N+1$ through the state transition diagram given by
\begin{equation}\label{e-path}
   x_0=0\edge{u_0}{v_0}x_1\edge{u_1}{v_1}x_2\ldots x_N\edge{u_N}{v_N}x_{N+1}=0.
\end{equation}
Here the state $x=\sum_{t=0}^N x_tz^t\in\F[z]^{\delta}$ can either be computed from\eqnref{e-iso} or via
\begin{equation}\label{e-state}
     x=uB(z^{-1}I-A)^{-1}=uB\sum_{t=1}^{\delta}A^{t-1}z^t\in\F[z]^{\delta}.
\end{equation}
We call the codeword~$v\in\cC$ of degree~$N$ {\sl atomic\/} if $x_t\not=0$ for all $t=1,\ldots,N$ in the associated
path\eqnref{e-path}.
Alternatively,~$v$ is atomic if and only if its constant coefficient is nonzero and~$v$ cannot be
written as the sum of two nonzero codewords $\hat{v},\,\tilde{v}\in\cC$ such that $\deg\hat{v}\leq L$ and
$\tilde{v}\in z^{L+1}\F[z]^n$ for some $L\in\N_0$.
Thus, being atomic is a codeword property that does not depend on the choice of the realization.

For the purpose of this paper it will be sufficient to consider one particular form of
state space realization, the controller canonical form.
This realization is standard in systems and coding theory.
Since the precise form will play an important role later on, we think it is worthwhile to
present it explicitly; see also \cite[Prop.~2.3]{GS07}.

\begin{prop}\label{P-CCF}
Let $G\in\F[z]^{k\times n}$ be a basic and reduced matrix with row degrees $\nu_1,\,\ldots,\nu_k$.
Put $\delta:=\sum_{i=1}^k\nu_i$.
Let~$G$ have rows $g_i=\sum_{\ell=0}^{\nu_i}g_{i,\ell}z^{\ell},\,i=1,\ldots,k,$ where
$g_{i,\ell}\in\F^n$.
For $i=1,\ldots,k$ define the matrices
\[
 A_i=\left(\begin{smallmatrix} 0&1& & \\ & &\ddots& \\& & &1\\ & & &0\end{smallmatrix}\right)
      \in\F^{\nu_i\times\nu_i},\
 B_i=\begin{pmatrix}1&0&\cdots&0\end{pmatrix}\in\F^{\nu_i},\
 C_i=\begin{pmatrix}g_{i,1}\\ \vdots\\ g_{i,\nu_i}\end{pmatrix}\in\F^{\nu_i\times n}.
\]
Then the {\sl controller canonical form (CCF)\/} of~$G$ is defined as
the matrix quadruple
$(A,B,C,D)\in\F^{\delta\times\delta}\times\F^{k\times\delta}\times
             \F^{\delta\times n}\times\F^{k\times n}$
where
\[
   A=\left(\begin{smallmatrix} A_1&  & \\ &\ddots &\\ & &A_k\end{smallmatrix}\right),\:
   B=\left(\begin{smallmatrix}
            B_1& &\\ &\ddots & \\ & &B_k\end{smallmatrix}\right),\:
   C=\left(\begin{smallmatrix}C_1\\ \vdots\\[.5ex]C_k\end{smallmatrix}\right),\:
   D=\left(\begin{smallmatrix}g_{1,0}\\[.4ex] \vdots\\[.6ex]g_{k,0}\end{smallmatrix}\right)=G(0).
\]
In the case where $\nu_i=0$ the $i$th block is missing and in~$B$ a zero row occurs.
The CCF forms a minimal state space realization of the code $\cC=\im G$, that is,\eqnref{e-iso}
is satisfied.
\end{prop}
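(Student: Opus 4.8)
The plan is to verify directly that the quadruple $(A,B,C,D)$ defined from the reduced encoder $G$ satisfies the equivalence\eqnref{e-iso}, and that the realization is minimal (i.e.\ the state space has dimension $\delta$ and the pair $(A,B)$ is reachable while $(A,C)$ is observable, or equivalently that $G=B(z^{-1}I-A)^{-1}C+D$ together with the dimension count). Since everything is block-diagonal in the $k$ index, I would first treat a single row: fix $i$, write $g_i=\sum_{\ell=0}^{\nu_i}g_{i,\ell}z^\ell$, and analyze the subsystem $(A_i,B_i,C_i,g_{i,0})$ with scalar input $u^{(i)}=\sum_t u^{(i)}_t z^t\in\F[z]$. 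The key observation is that $A_i$ is the nilpotent shift with $A_i^{\nu_i}=0$, so $(z^{-1}I-A_i)^{-1}=\sum_{t=1}^{\nu_i}A_i^{t-1}z^t$ is a polynomial (this is exactly formula\eqnref{e-state} restricted to one block), and a direct computation gives $B_i(z^{-1}I-A_i)^{-1}C_i=\sum_{\ell=1}^{\nu_i}g_{i,\ell}z^\ell$, whence $B_i(z^{-1}I-A_i)^{-1}C_i+g_{i,0}=g_i$. Reassembling the $k$ blocks yields $B(z^{-1}I-A)^{-1}C+D=G$.

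Next I would connect this transfer-matrix identity to the time-domain description\eqnref{e-iso}. Starting from $x_0=0$ and iterating $x_{t+1}=x_tA+u_tB$, $v_t=x_tC+u_tD$, one obtains by induction $v_t=u_tD+\sum_{j\ge 1}u_{t-j}BA^{j-1}C$; encoding this as a generating function in $z$ and using that $BA^{j-1}C=0$ for $j>\delta$ (again nilpotency, block by block) gives $v=u\bigl(D+\sum_{j\ge1}BA^{j-1}Cz^j\bigr)=u\bigl(D+B(z^{-1}I-A)^{-1}C\bigr)=uG$. Conversely, given $v=uG$, defining $x$ by\eqnref{e-state} and reading off coefficients shows the state equations hold. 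This establishes\eqnref{e-iso}.

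Finally, minimality: the state space has dimension $\delta=\sum_i\nu_i$ by construction, so it suffices to check reachability of $(A,B)$ and observability of $(A,C)$. Reachability is immediate and purely combinatorial: within block $i$, the vectors $B_i,B_iA_i,\ldots,B_iA_i^{\nu_i-1}$ are the standard basis vectors $e_1,\ldots,e_{\nu_i}$ of $\F^{\nu_i}$, so the reachability matrix has full rank $\delta$. Observability is the only place where the hypotheses on $G$ are actually used: one must show that $G$ being \emph{reduced} with row degrees $\nu_1,\ldots,\nu_k$ forces the observability matrix $\operatorname{col}(CA^j)_{j\ge0}$ to have rank $\delta$; here the characterization\eqnref{e-reduced} (equivalently, that the ``highest row degree coefficient matrix'' of $G$ has full row rank) is exactly what guarantees that no nonzero linear combination of the rows of $C$ lies in the unobservable subspace. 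I expect this observability argument to be the main obstacle, since it is the one step that is not a formal nilpotent-matrix computation; the cleanest route is probably to quote the known equivalence between reducedness of $G$ and minimality of its controller canonical form (as in \cite[Prop.~2.3]{GS07} or the references\eqnref{e-iso} to \cite{McE98a,GL05p}), or to argue directly that any state unobservable from $C$ would produce a basic-matrix row relation contradicting\eqnref{e-Gbasic} together with the reducedness normalization.
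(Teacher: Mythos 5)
The paper itself gives no proof of this proposition: it is presented as a standard fact from realization theory, with a pointer to \cite[Prop.~2.3]{GS07} (and to \cite{McE98a,GL05p} for the surrounding statement\eqnref{e-iso}), so there is no in-paper argument for your sketch to match. That said, your reconstruction is the standard proof and its computational core is correct: $A$ is nilpotent, so $(z^{-1}I-A)^{-1}=\sum_{t=1}^{\delta}A^{t-1}z^t$ is polynomial (consistent with\eqnref{e-state}); $B_iA_i^{t-1}=e_t$ and $e_tC_i=g_{i,t}$ give $B(z^{-1}I-A)^{-1}C+D=G$ block by block; and unwinding $x_{t+1}=x_tA+u_tB$ from $x_0=0$ converts this transfer-function identity into the time-domain equivalence\eqnref{e-iso}, with the converse obtained by defining $x$ via\eqnref{e-state}. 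Reachability is indeed the purely combinatorial observation you make. Two caveats. First, your observability step is announced rather than executed; the missing few lines are that an unobservable state $x=(x_{i,t})$ satisfies $\sum_i\sum_t x_{i,t}\,g_{i,t+j}=0$ for all $j\geq 0$, and taking the largest $j$ for which some $x_{i,\nu_i-j}\neq 0$ yields a nontrivial dependence among the leading coefficient rows $g_{i,\nu_i}$, contradicting the full-row-rank characterization of reducedness; deferring this to the literature is, of course, exactly what the paper does for the entire proposition. Second, be aware that ``minimal realization'' in this convolutional-coding context is not literally classical Kalman minimality for a known input: the conditions actually used later in the paper (cf.\ \cite[Thm.~2.4]{GS07}, e.g.\ full row rank of $\Smallfourmat{A}{C}{B}{D}$) account for the input being unknown to the observer. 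For the CCF of a basic and reduced encoder these notions coincide, and the dimension count $\sum_i\nu_i=\delta=\deg\cC$ already gives minimality in the sense the paper uses, but a self-contained proof should state which notion of minimality it is establishing.
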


Now we are in a position to summarize various distance parameters for convolutional codes.
The following parameters are, in different ways, closely related to the error-correcting performance of~$\cC$ and
are studied intensively in the engineering-oriented literature, see, e.~g.,~\cite{JPB90,HJZZ99} and
\cite[Sec.~3.2]{JoZi99}.
For instance, the weight enumerator, defined below, may be used to derive an upper bound for the burst error
probability of the code used on a binary symmetric channel with maximum-likelihood decoding,
see \cite{Vi71} and \cite[Thm.~4.2]{JoZi99}.

For a vector polynomial $v=\sum_{t=0}^Nv_tz^t\in\F[z]^n$ let $v_{[0,j]}=\sum_{t=0}^j v_tz^t$ be the
{\sl truncation\/} of~$v$ at time~$j$.

\begin{defi}\label{D-distparam}
Let $\cC=\im G\subseteq\F[z]^n$ be a convolutional code of degree~$\delta$ with reduced encoder~$G$ and minimal state
space realization $(A,B,C,D)$, which need not be the CCF of~$G$.
Let $\nu_1,\ldots,\nu_k$ be the Forney indices of~$\cC$ and let $m:=\max\{\nu_1,\ldots,\nu_k\}$
be its memory. For $j\in\N_0$ define
\[
    \cS_j:=\{v\in\cC\mid (x_i,x_{i+1})\not=(0,0)\text{ for all }i=0,\ldots,j\},
\]
where, as usual, $x\in\F[z]^{\delta}$ denotes the state sequence associated with the codeword~$v\in\cC$.
\begin{alphalist}
\item For $j\in\N_0$ define
      \begin{align*}
             &\text{{\sl $j$th column distance\/}:}& &d^c_j:=\min\{\wt(v_{[0,j]})\mid v\in\cC,\,v_0\not=0\},\\[.6ex]
             &\text{{\sl $j$th extended row distance\/}:}& &\hat{d}^r_j:=\min\{\wt(v)\mid v\text{ atomic}, \deg(v)=j\},\\[.6ex]
             &\text{{\sl $j$th active column distance\/}:}& &a^c_j:=\min\{\wt(v_{[0,j]})\mid v\in\cS_j\},\\[.6ex]
             &\text{{\sl $j$th active row distance\/}:}& &a^r_j:=\min\{\wt(v_{[0,m+j]})\mid v=uG\in\cS_{j},\,\deg(u)=j\},\\[.6ex]
             &\text{{\sl $j$th active segment distance\/}:}& &a^s_j:=\min\Big\{\wt(v_{[m,m+j]})\,\Big|\,
                             \begin{array}{l} (x_i,x_{i+1})\not=(0,0)\\ \text{for }i=m,\ldots, m+j\end{array}\Big\}
      \end{align*}
      and for $j\geq\min\{1,\nu_1,\ldots,\nu_k\}$ define
      \begin{align*}
             &\text{{\sl $j$th active burst distance\/}:}\hspace*{1em}& &a^b_j:=\min\{\wt(v_{[0,j]})\mid v\in\cS_j,\,x_{j+1}=0\}.\hspace*{5.8em}
      \end{align*}
\item The {\sl weight enumerator\/} of~$\cC$ is defined as
      \[
         \we(\cC)=1+\sum_{l=1}^{\infty}\Omega_lL^l\in\C[\![W,L]\!],
         \text{ where }\Omega_l=\sum_{\alpha=1}^{\infty}\omega_{l,\alpha}W^{\alpha}=\sum_{\alpha=1}^{nl}\omega_{l,\alpha}W^{\alpha}\in\C[W]
      \]
      with
      $\omega_{l,\alpha}:=\big|\{v\in\cC\mid v\text{ atomic}, \deg v=l-1,\, \wt(v)=\alpha\}\big|$
      for all $\alpha,\,l\in\N$.
      Thus, $\Omega_l$ is the weight enumerator of the set of atomic codewords of degree~$l$.
\end{alphalist}
\end{defi}

A few comments are in order.
First of all observe that the sets~$\cS_j$ consists of those codewords for which the according path through the state transition
diagram does not contain edges of the form $x_i=0\,\edge{u_i}{v_i}\,0=x_{i+1}$ for any $0\leq i\leq j$.
In particular, the path diverges immediately from the zero state, that is, $u_0\not=0$.
Secondly, notice that for the active row distance $\deg(u)=j$ implies $\deg(uG)\leq m+j$ and thus $v=v_{[0,m+j]}$ is a codeword.
Likewise for the active burst distance the requirement $x_{j+1}=0$ implies that $v_{[0,j]}$ is a codeword,
see\eqnref{e-path}.
This in turn implies that~$j$ is not smaller than the smallest Forney index of the code for such a codeword to exist,
see\eqnref{e-reduced}.
Since, by definition, $\{v\in\cS_0\mid x_1=0\}=\emptyset$ all this leads to the restriction on~$j$ as given in~(a) above
for the active burst distances.

It should be noted that the column distances satisfy $d^c_0\leq d^c_1\leq\ldots\leq\dist(\cC)$ and $d^c_j=\dist(\cC)$
for sufficiently large~$j$.
Likewise, the extended row distances satisfy $\hat{d}^r_j=\dist(\cC)$ for sufficiently large~$j$.
Thus, the sequences $(d^c_j)_{j\in\N_0}$ and $(\hat{d}^r_j)_{j\in\N_0}$ are stationary.
This is not the case for the active distances: due to the restriction on the state sequence enforced by the sets~$\cS_j$ these
distance parameters ``stay active'' for all $j\in\N_0$.
In the following remark we discuss which of the parameters introduced above are invariants of the code.

\begin{rem}\label{R-distparam}
\begin{arabiclist}
\item The sets $\cS_j$ do not depend on the chosen realization.
      This can be seen as follows.
      From \cite[Thm.~2.6]{GS07} it is known that if $(A,B,C,D)$ and $(\bar{A},\bar{B},\bar{C},\bar{D})$ are two minimal realizations
      of~$\cC$, then there exist matrices $T\in GL_{\delta}(\F),\, U\in GL_k(\F)$, and $M\in\F^{\delta\times k}$ such that
      $(\bar{A},\bar{B},\bar{C},\bar{D})=\big(T^{-1}(A-MB)T,UBT,T^{-1}(C-MD),UD)$.
      Now it is straightforward to check that for any $(x_t,\,x_{t+1},\,u_t,\,v_t)\in\F^{\delta+\delta+k+n}$ one has
      \[
        \big[x_{t+1}=x_tA+u_tB,\ v_t=x_tC+u_tD\big] \Longleftrightarrow
        \big[x_{t+1}T=x_tT\bar{A}+\bar{u}_t\bar{B},\ v_t=x_tT\bar{C}+\bar{u}_t\bar{D}\big],
      \]
      where $\bar{u}_t=u_tU^{-1}+x_tMU^{-1}$.
      This shows that for a given codeword $v\in\cC$ the polynomial
      $x=\sum_{t\geq0}x_tz^t\in\F[z]^{\delta}$ is the state sequence with respect to $(A,B,C,D)$ if and only if
      $xT=\sum_{t\geq0}x_tTz^t$ is the state sequence with respect to $(\bar{A},\bar{B},\bar{C},\bar{D})$.
      Since $x_t\not=0\Longleftrightarrow x_tT\not=0$ this proves that $\cS_j$ does not depend on the choice of the realization.
\item All distance parameters defined in Definition~\ref{D-distparam}, with the exception of the active row distances, are invariants
      of the code and do not depend on the choice of the realization.
      This is obvious for the column distances, the extended row distances as well as the weight enumerator and follows from~(1)
      for the active distances.
\item The active row distances are invariants of the reduced encoder and do not depend on the choice of the realization.
      This follows directly from the definition and~(1).
      But, due to the degree constraint on the message polynomial $u$, they are not invariants of the code.
      Indeed, this is due to the trivial fact that for a given codeword the degree of the associated message
      depends on the choice of the reduced encoder.
      For example, consider $\cC=\im G=\im G'$, where
      \[
         G=\begin{pmatrix}1&z&0\\1&1&1\end{pmatrix},\
         G'=\begin{pmatrix}1&z\\0&1\end{pmatrix}G=\begin{pmatrix}1+z&0&z\\1&1&1\end{pmatrix}\in\F_2[z]^{2\times3}.
      \]
      Using the CCF's of~$G$ and~$G'$ it is easy to see that the first active row distances are given by
      $a^r_1(G)=\wt\big((1,z)G\big)=3$ and $a^r_1(G')=\wt\big((1,z)G'\big)=\wt\big((1,0)G\big)=2$.
\end{arabiclist}
\end{rem}

As we will show next, the extended row distances and the active burst distances can be retrieved, at least in theory, from the
weight enumerator.

\begin{prop}\label{P-extrowactiveburst}
For a nonzero polynomial $a=\sum_{i=0}^Ra_i W^i\in\C[W]$ define the delay to be
$\del(a):=\min\{i\mid a_i\not=0\}$ and put $\del(0)=\infty$.
Then the extended row distances and the active burst distances are given by
\[
  \hat{d}^r_j=\del(\Omega_{j+1})\ \text{ and }\
  a^b_j=\min\Big\{\sum_{l=1}^r\del(\Omega_{M_l})\,\Big|\, r\in\N,\,M_l\in\N_{\geq2},\,\sum_{l=1}^r M_l=j+1\Big\}.
\]
\end{prop}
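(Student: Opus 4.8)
The plan is to show that each equation follows by interpreting the relevant distance parameter directly in terms of atomic codewords and their degrees.

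\textbf{First part: $\hat{d}^r_j=\del(\Omega_{j+1})$.}
Recall from Definition~\ref{D-distparam}(b) that $\Omega_{j+1}$ is the weight enumerator of the set $\cA_{j+1}$ of atomic codewords of degree~$j$ (since $\Omega_l$ collects atomic codewords of degree $l-1$). Thus the coefficient of $W^{\alpha}$ in $\Omega_{j+1}$ counts the atomic codewords of degree~$j$ and weight~$\alpha$. By definition of $\del$, the quantity $\del(\Omega_{j+1})$ is the smallest $\alpha$ for which there exists an atomic codeword of degree~$j$ and weight~$\alpha$ --- provided $\Omega_{j+1}\neq0$, in which case $\del(\Omega_{j+1})=\infty$ matches $\hat{d}^r_j=\infty$ (the minimum over an empty set). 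This is precisely the definition $\hat{d}^r_j=\min\{\wt(v)\mid v\text{ atomic},\deg(v)=j\}$. So this identity is essentially unwinding the two definitions and requires only a remark that the degree-shift between the index~$l$ of $\Omega_l$ and the degree $l-1$ of the codewords is consistent.

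\textbf{Second part: $a^b_j$ in terms of delays.}
Here the key observation is a decomposition of the codewords counted by $a^b_j$ into atomic pieces. A codeword $v\in\cS_j$ with $x_{j+1}=0$ corresponds, via\eqnref{e-path}, to a closed path from the zero state back to the zero state of length $j+1$ that does \emph{not} revisit the zero state at any intermediate time $1,\ldots,j$; equivalently $v_{[0,j]}$ is a codeword and it cannot be split at any intermediate time into a sum of two nonzero codewords. Wait --- actually $v\in\cS_j$ merely forbids edges $0\to0$; the path may pass through the zero state at isolated times. The correct picture is that such a $v_{[0,j]}$ decomposes uniquely into consecutive atomic blocks $v_{[0,j]}=\hat{v}^{(1)}+z^{?}\hat{v}^{(2)}+\cdots$, where each $\hat v^{(l)}$ is an atomic codeword (supported on the segment between consecutive visits to the zero state) of some degree $M_l-1\geq1$, and $\sum_l M_l=j+1$. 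The condition $(x_i,x_{i+1})\neq(0,0)$ for all $i=0,\ldots,j$ forces each atomic block to have degree at least~$1$, i.e.\ $M_l\geq2$, which explains the constraint $M_l\in\N_{\geq2}$; and $x_{j+1}=0$ together with $u_0\neq0$ ensures the path starts and ends at the zero state so the blocks tile the interval $[0,j]$. Since weights add over these blocks, $\wt(v_{[0,j]})=\sum_{l=1}^r\wt(\hat v^{(l)})$, and minimizing over all such $v$ amounts to independently choosing, for each prescribed block-degree profile $(M_1,\ldots,M_r)$ with $\sum M_l=j+1$, a minimum-weight atomic codeword of degree $M_l-1$, whose weight is exactly $\del(\Omega_{M_l})$ by the first part. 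Taking the minimum over $r$ and over all admissible profiles gives the stated formula.

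\textbf{Main obstacle.}
The delicate point is justifying the atomic decomposition of codewords in $\cS_j$ with $x_{j+1}=0$: one must argue that cutting the associated path at each intermediate visit to the zero state yields genuine atomic codewords (using the intrinsic characterization of ``atomic'' recalled after\eqnref{e-state}, so as to stay independent of the realization), that these pieces have degrees summing to $j+1$ with each piece of degree $\geq1$, and conversely that any concatenation of atomic codewords of degrees $M_l-1\geq1$ produces an element of $\cS_j$ with $x_{j+1}=0$. One subtlety to handle carefully is the lower bound $j\geq\min\{1,\nu_1,\ldots,\nu_k\}$ in Definition~\ref{D-distparam}(a): for $j$ below this bound the set is empty and $a^b_j=\infty$, which must be matched on the right-hand side by the fact that no profile $(M_l)$ with all $M_l\geq2$ summing to $j+1$ can be realized by atomic codewords (equivalently $\Omega_{M_l}=0$ for the small values involved, so every term is $+\infty$). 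Once the bijection between path-segments and atomic codewords is set up cleanly, the weight-additivity and the resulting min-over-compositions formula are routine.
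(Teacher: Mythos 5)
Your route is the same as the paper's: the first identity is definition-unwinding, and for $a^b_j$ one cuts the path of a codeword $v\in\cS_j$ with $x_{j+1}=0$ at its visits to the zero state, obtaining atomic blocks whose degree-plus-one values form a composition of $j+1$ into parts $M_l\geq2$. The step you label the ``main obstacle'' and leave unresolved is exactly the one nontrivial point, and it does need an argument: you must show that each block has degree \emph{exactly} $M_l-1$, i.e.\ that its leading output coefficient is nonzero; otherwise a block could be an atomic codeword of smaller degree and would be counted by the wrong $\Omega$, breaking the composition formula (the same issue arises in showing $\deg v_{[0,j]}=j$, so that the truncated codeword is genuinely in $\{v\in\cS_j\mid\deg v=j\}$). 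The paper closes this with one fact you should invoke: the matrix $\bigl(\begin{smallmatrix}A&C\\B&D\end{smallmatrix}\bigr)$ of a minimal realization has full row rank, so the relation $(x_{t+1},v_t)=(x_t,u_t)\bigl(\begin{smallmatrix}A&C\\B&D\end{smallmatrix}\bigr)$ shows that $x_t\neq0$ and $x_{t+1}=0$ force $v_t\neq0$; applying this at $t=j$ and at each return time $t_i-1$ (where $x_{t_i-1}\neq0$ because $M_l\geq2$ rules out consecutive zero states) gives the required nonzero leading coefficients. With that ingredient added, your decomposition, the converse concatenation direction, the weight additivity, and your treatment of the empty-set/$\infty$ conventions all go through as you describe.
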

\begin{proof}
The statement for the extended row distances is clear from Definition~\ref{D-distparam}.
For the active burst distances we show first that $\{v_{[0,j]}\mid v\in\cS_j,\,x_{j+1}=0\}=\{v\in\cS_j\mid \deg v=j\}$.
Indeed, the inclusion ``$\supseteq$'' is obvious.
As for ``$\subseteq$'' notice first that $v_{[0,j]}\in\cC$ due to $x_{j+1}=0$, see\eqnref{e-path}.
Moreover,\eqnref{e-iso} implies $(0,\,v_j)=(x_j,u_j)\big(\begin{smallmatrix}A&C\\B&D\end{smallmatrix}\big)$.
Since the rightmost matrix has full row rank \cite[Thm.~2.4]{GS07} and $x_j\not=0$ due to the definition of
the set~$\cS_j$, we obtain $v_j\not=0$ and thus $v_{[0,j]}$ is in the set on the right hand side.
This proves ``$\subseteq$''.
In terms of paths the set~$\{v\in\cS_j\mid \deg v=j\}$ consists exactly of all paths as in\eqnref{e-path} of
length~$j+1$ that do not pass through the zero state at consecutive time instants and for which $v_j\not=0$.
Consider such a path and suppose it passes through the zero state exactly at the time instants $t_0=0,\,t_1,\ldots,\,t_r=j+1$.
Then each subpath between $x_{t_{i-1}}=0$ and $x_{t_i}=0$ represents an atomic codeword~$v^{(i)}\in\cC$ of degree
$t_i-t_{i-1}-1\geq1$.
Moreover, $\sum_{i=1}^r(\deg v^{(i)}+1)=j+1$.
Now minimizing the weight of all such codewords is equivalent to minimizing the weight of each atomic subpath
and the result follows from the definition of~$\Omega_l$ in Definition~\ref{D-distparam}(b).
\end{proof}

Examples later on will show that the other distance parameters can, in general, not be retrieved from the weight enumerator.
However, all the distance parameters introduced above, except for the active row distances, can be computed from one
parameter, the weight adjacency matrix.
This matrix is defined via the state space system\eqnref{e-iso}, where $(A,B,C,D)$ is the CCF
of a reduced encoder.
For each pair of states $(X,Y)\in\F^{\delta}\times\F^{\delta}$ it counts all the weights of the outputs $v=XC+uD$ corresponding to
inputs~$u$ that realize a state transition from~$x_t=X$ to~$x_{t+1}=Y$, that is, satisfy the identity $Y=XA+uB$.
This leads to the following definition, see also \cite[Sec.~2]{McE98a} and \cite[Def.~3.4]{GL05p}.
Note that, due to\eqnref{e-Fq}, the system may assume $q^{\delta}$ different states~$X$.
Recall the weight enumerator defined in\eqnref{e-we}.

\begin{defi}\label{D-Lambda}
Let $G\in\F[z]^{k\times n}$ be a reduced encoder such that $\deg(G)=\delta$ and let
$(A,B,C,D)$ be the CCF of~$G$.
The {\sl weight adjacency matrix (WAM)\/} of~$G$ is defined to be the matrix
$\Lambda(G)\in\C[W]^{q^{\delta}\times q^{\delta}}$ that is
indexed by the state pairs $(X,Y)\in\F^{\delta}\times\F^{\delta}$ and has the entries
\begin{equation}\label{e-LambdaXY}
     \Lambda(G)_{X,Y}:=\we\{XC+uD\mid u\in\F^k: Y=XA+uB\}\in\C[W]\text{ for }(X,Y)\in\F^{\delta}\times\F^{\delta}.
\end{equation}
\end{defi}

If $\delta=0$ the matrices $A,\,B,\,C$ do not exist and $D=G$.
As a consequence, $\Lambda(G)=\Lambda(G)_{0,0}=\we(\cC)$ is the weight enumerator
of the block code $\{uG\mid u\in\F^k\}$.

By definition, the WAM of a given encoder is the adjacency matrix of the weighted state transition diagram
as considered in \cite[Sec.~3.10]{JoZi99} and \cite[Sec.~2]{McE98a}.
Its properties have been studied in detail in the papers~\cite{GL05p,GS08,GS08p}.
In a slightly different form the weight adjacency matrix appears also in other papers on convolutional
coding theory, see, e.~g., \cite[Sec.~3.10]{JoZi99}.
In the papers~\cite{GS08,GS08p} a MacWilliams Identity for convolutional codes and their
duals is established.
The latter makes sense only because the weight adjacency matrix can  be turned into an
invariant of the code.
Indeed, we have the following result, see~\cite[Sec.~4]{GL05p}.

\begin{theo}\label{T-Lambdaunique}
Let $\cC\subseteq\F[z]^n$ be a code of degree~$\delta$ and let~$G$ and~$\Gbar$ be any two reduced
encoders of~$\cC$.
Then there exists a state space isomorphism $T\in GL_{\delta}(\F)$ such that
\begin{equation}\label{e-Lambdatrafo}
  \Lambda(G)_{X,Y}=\Lambda(\Gbar)_{XT,YT}\text{ for all } (X,Y)\in\F^{\delta}\times\F^{\delta}.
\end{equation}
The orbit
\[
  \WAM(\cC):=
   \{\Lambda'\mid \exists \,T\in GL_{\delta}(\F):\,\Lambda'_{X,Y}=\Lambda(G)_{XT,YT}
   \text{ for all }(X,Y)\in\F^{\delta}\times\F^{\delta}\}
\]
is called the {\sl weight adjacency matrix\/} (WAM) of the code~$\cC$.
\end{theo}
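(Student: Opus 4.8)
The plan is to use the CCF of each reduced encoder as a minimal realization of $\cC$ and to invoke the known parametrization of all minimal realizations of a fixed code. Write $(A,B,C,D)$ for the CCF of $G$ and $(\bar A,\bar B,\bar C,\bar D)$ for the CCF of $\Gbar$. By Proposition~\ref{P-CCF} both are minimal state space realizations of $\cC$; here I use that $\im G=\im\Gbar=\cC$ and that any reduced encoder of $\cC$ has degree $\delta$ (the Forney indices are unique up to order), so that both realizations live on the state space $\F^{\delta}$. Remark~\ref{R-distparam}(1), quoting \cite[Thm.~2.6]{GS07}, then supplies $T\in GL_{\delta}(\F)$, $U\in GL_k(\F)$ and $M\in\F^{\delta\times k}$ with $(\bar A,\bar B,\bar C,\bar D)=\big(T^{-1}(A-MB)T,\,UBT,\,T^{-1}(C-MD),\,UD\big)$ together with the equivalence
\[
  \big[\,Y=XA+uB,\ v=XC+uD\,\big]\ \Longleftrightarrow\ \big[\,YT=XT\bar A+\bar u\bar B,\ v=XT\bar C+\bar u\bar D\,\big]
\]
for all $(X,Y,u,v)\in\F^{\delta+\delta+k+n}$, where $\bar u:=uU^{-1}+XMU^{-1}$. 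This $T$ will be the required state space isomorphism, and it remains to deduce \eqnref{e-Lambdatrafo}.

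Fix a state pair $(X,Y)$. For this fixed $X$ the map $u\mapsto\bar u=uU^{-1}+XMU^{-1}$ is an affine bijection of $\F^k$, with inverse $\bar u\mapsto\bar uU-XM$. I would then show that it restricts to a bijection between $F:=\{u\in\F^k\mid Y=XA+uB\}$ and $\bar F:=\{\bar u\in\F^k\mid YT=XT\bar A+\bar u\bar B\}$: if $u\in F$, then setting $v:=XC+uD$ the left-hand side of the equivalence holds, hence so does the right-hand side, which gives $\bar u\in\bar F$ and $v=XT\bar C+\bar u\bar D$; conversely, for $\bar u\in\bar F$ put $u:=\bar uU-XM$ and $v:=XT\bar C+\bar u\bar D$, check that indeed $\bar u=uU^{-1}+XMU^{-1}$, apply the equivalence in the other direction, and conclude $u\in F$ and $v=XC+uD$. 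Thus $u\mapsto\bar u$ is a bijection $F\to\bar F$ under which the output is invariant, $XC+uD=XT\bar C+\bar u\bar D$. Hence the two output sets agree,
\[
  \{XC+uD\mid u\in F\}=\{XT\bar C+\bar u\bar D\mid\bar u\in\bar F\}\subseteq\F^n,
\]
and applying $\we$ to both sides yields $\Lambda(G)_{X,Y}=\Lambda(\Gbar)_{XT,YT}$, i.e.\ \eqnref{e-Lambdatrafo}. The well-definedness of the orbit $\WAM(\cC)$ is then immediate: for any other reduced encoder $G'$ of $\cC$ the first part places $\Lambda(G')$ in the $GL_{\delta}(\F)$-orbit of $\Lambda(G)$, and since $GL_{\delta}(\F)$ is a group these orbits coincide, so $\WAM(\cC)$ does not depend on the encoder.

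The main point requiring care is the bijection $F\leftrightarrow\bar F$ together with output-invariance. The subtlety is that the reparametrization of inputs carries the state-feedback term $XMU^{-1}$, so it is genuinely more than a change of basis of the state space; but this is exactly the phenomenon that \cite[Thm.~2.6]{GS07} is designed to package, so no new obstacle arises once that structure theorem is invoked. Everything else --- minimality of the CCF and equality of the two state space dimensions --- is immediate from Proposition~\ref{P-CCF} and the uniqueness of the Forney indices.
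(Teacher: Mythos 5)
Your proposal is correct. Note that the paper itself gives no proof of this theorem --- it only cites \cite[Sec.~4]{GL05p} --- so there is nothing internal to compare against; what you have done is reconstruct the argument from ingredients the paper does make explicit, namely Proposition~\ref{P-CCF} (the two CCFs are minimal realizations of the same code on the same state space $\F^{\delta}$, since reduced encoders share the degree $\delta$) and the realization-equivalence quoted in Remark~\ref{R-distparam}(1) from \cite[Thm.~2.6]{GS07}. The one point genuinely needing care --- that the input reparametrization $u\mapsto uU^{-1}+XMU^{-1}$ contains a state-feedback term and hence must be checked to restrict to a bijection $F\to\bar F$ that leaves the output $XC+uD=XT\bar C+\bar u\bar D$ unchanged --- is exactly the point you isolate and verify, and it also disposes of any set-versus-multiset ambiguity in the definition of $\Lambda(G)_{X,Y}$ as a weight enumerator. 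The concluding remark that well-definedness of the orbit $\WAM(\cC)$ follows from $GL_{\delta}(\F)$ being a group is likewise correct.
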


It is worth being stressed that one can, of course, define the WAM with respect to any minimal realization rather
than the CCF.
In that case one obtains the same result as in Theorem~\ref{T-Lambdaunique} for the WAM's of a code, which in turn
leads to the same invariant as in that theorem; for details consult~\cite{GL05p}.

In the sequel we will show that $\WAM(\cC)$ is the most detailed weight enumerating
object for a given code in the sense that all distance parameters introduced earlier
can be derived from it.
Again, this does not include the active row distances because they depend on the chosen encoder whereas
$\WAM(\cC)$ does not.

Let $\cC=\im G\subseteq\F[z]^n$ be a code of degree~$\delta$ with reduced encoder~$G$.
Furthermore, let $(A,B,C,D)$ be the CCF of~$G$ and let $\Lambda=\Lambda(G)$.
Define the following {\sl reduced WAM's}
\begin{align}
   &\tilde{\Lambda}\in\C[W]^{q^{\delta}\times q^{\delta}}\text{ where }\tilde{\Lambda}_{0,0}:=0\text{ and }
     \tilde{\Lambda}_{X,Y}:=\Lambda_{X,Y}\text{ else,} \label{e-redWAM}\\ 
   &\hat{\Lambda}\in\C[W]^{q^{\delta}\times q^{\delta}}\text{ where }\hat{\Lambda}_{0,0}:=\Lambda_{0,0}-1\text{ and }
     \hat{\Lambda}_{X,Y}:=\Lambda_{X,Y}\text{ else.} \label{e-Lambdahat} 
\end{align}
The matrix~$\tilde{\Lambda}$ is the weight adjacency matrix of the state transition diagram
associated with  $(A,B,C,D)$ where all edges of the form $0\,\edge{u}{v}\;0$ have been removed
whereas~$\hat{\Lambda}$ is the weight adjacency matrix of the state transition diagram associated with $(A,B,C,D)$
and where the trivial edge $0\,\edge{0}{0}\;0$ has been removed.
If all Forney indices of the code are positive we have $\hat{\Lambda}=\tilde{\Lambda}$.
Indeed, in this case the matrix~$B$ has full row rank and therefore the only solution to
the state transition $0=0A+uB$ is $u=0$.
With all these data we have the following means to compute various distance parameters.

\begin{prop}\label{P-WAMdistparam}
\begin{alphalist}
\item The column distances satisfy $d^c_j=\min\{\del\big((\hat{\Lambda}\Lambda^{j})_{0,Y}\big)\mid Y\in\F^{\delta}\}$ for all $j\in\N_0$.
\item The weight enumerator of~$\cC$ can be computed from~$\Lambda$ via
      $\we(\cC)= 2-\Phi^{-1}$, where $\Phi=[(I-L\hat{\Lambda})^{-1}]_{0,0}\in\C(W,L)$.
      As a consequence, the extended row distances and the active burst distances can be retrieved from $\WAM(\cC)$ as well.
\item The active column distances and the active segment distances can be computed from $\WAM(\cC)$ as follows:
      \[
        a^c_j=\min\{\del(\tilde{\Lambda}^{j+1}_{0,Y})\mid Y\in\F^{\delta}\},\
        a^s_j=\min\{\del(\tilde{\Lambda}^{j+1}_{X,Y})\mid X,\,Y\in\F^{\delta}\}
      \]
      for $j\in\N_0$ (and $j\geq1$ and not smaller than the smallest Forney index in the case of the active burst
      distances).
      Moreover, the active burst distances satisfy $ a^b_j=\del(\tilde{\Lambda}^{j+1}_{0,0})$.
\end{alphalist}
As a consequence, codes~$\cC$ and~$\cCbar$ satisfying $\WAM(\cC)=\WAM(\cCbar)$ share all the distance
parameters discussed in (a)~--~(c).
\end{prop}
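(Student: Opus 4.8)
The plan is to prove each of the three parts (a), (b), (c) by translating the combinatorial definitions of the distance parameters into statements about walks in the weighted state transition diagram, and then observing that matrix multiplication of the various WAM's exactly counts such walks with the correct weights; the final ``as a consequence'' then follows trivially from Theorem~\ref{T-Lambdaunique}, since $\WAM(\cC)=\WAM(\cCbar)$ means the underlying weighted graphs agree up to a relabeling $X\mapsto XT$ of vertices by a state space isomorphism, and every quantity on the right-hand sides is invariant under such relabeling (each is a minimum of delays of entries, taken either over all $Y$, all $X,Y$, or at the fixed index $(0,0)$ which is preserved since $0T=0$).

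For part (a): a codeword $v$ with $v_0\neq 0$ and its truncation $v_{[0,j]}$ corresponds, via\eqnref{e-path}, to a walk $0\to x_1\to\cdots\to x_{j+1}$ of length $j+1$ whose first edge leaves the zero state along a nontrivial input (so the first edge is counted by $\hat\Lambda$, not by the full $\Lambda$, to exclude the trivial self-loop $0\edge00 0$), and whose remaining $j$ edges are arbitrary. The weight enumerator of all outputs along such walks ending at a fixed $Y$ is exactly $(\hat\Lambda\Lambda^j)_{0,Y}$, by the standard fact that the $(X,Y)$ entry of a product of WAM's is the weight enumerator of the concatenated output blocks over all walks of the corresponding length from $X$ to $Y$. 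Minimizing $\wt(v_{[0,j]})$ over all such $v$ is minimizing the delay over all these enumerators, i.e.\ $\min_Y\del((\hat\Lambda\Lambda^j)_{0,Y})$.

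For part (b): the key is that an atomic codeword of degree $l-1$ corresponds precisely to a walk $0\to x_1\to\cdots\to x_{l-1}\to 0$ of length $l$ that passes through the zero state only at its endpoints (by the characterization of ``atomic'' recalled before Proposition~\ref{P-CCF}), hence to a walk in the graph with the trivial self-loop at $0$ removed, i.e.\ the graph with WAM $\hat\Lambda$, that returns to $0$ for the first time after $l$ steps. So $\Omega_l$ is the ``first-return'' generating function at $0$, and $\Phi=[(I-L\hat\Lambda)^{-1}]_{0,0}=\sum_{r\ge0}[(L\hat\Lambda)^r]_{0,0}$ counts \emph{all} closed walks at $0$ of any length (weighted by $L^{\text{length}}$ and $W^{\text{weight}}$), which decompose uniquely into a sequence of first-returns; the standard renewal/generating-function identity then gives $\Phi=1/(1-\sum_{l\ge1}\Omega_l L^l)$. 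Since $\we(\cC)=1+\sum_{l\ge1}\Omega_l L^l$ by Definition~\ref{D-distparam}(b), we get $\Phi=1/(2-\we(\cC))$, i.e.\ $\we(\cC)=2-\Phi^{-1}$. The statement about extended row and active burst distances being recoverable is then immediate from Proposition~\ref{P-extrowactiveburst}.

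For part (c): by definition $\cS_j$ consists exactly of codewords whose associated walk never uses the trivial self-loop $0\edge u v 0$ at consecutive times $i,i+1\le j$ --- equivalently, walks in the graph with WAM $\tilde\Lambda$ (where \emph{all} edges $0\to 0$ are removed). For the active column distance we truncate at time $j$, so $v_{[0,j]}$ runs over outputs of length-$(j+1)$ walks in this graph starting at $0$ and ending anywhere; hence $a^c_j=\min_Y\del(\tilde\Lambda^{j+1}_{0,Y})$. The active segment distance drops the requirement on the starting state, giving $\min_{X,Y}\del(\tilde\Lambda^{j+1}_{X,Y})$ after the obvious time-shift by $m$. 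For the active burst distance we additionally impose $x_{j+1}=0$; by the argument already given in the proof of Proposition~\ref{P-extrowactiveburst} this forces $v_{[0,j]}$ to be a genuine codeword with $v_j\neq0$, so it corresponds to a walk of length $j+1$ from $0$ back to $0$ in the $\tilde\Lambda$-graph, and $a^b_j=\del(\tilde\Lambda^{j+1}_{0,0})$.

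The main obstacle, and the one point deserving care, is the careful bookkeeping of \emph{which} WAM ($\Lambda$, $\hat\Lambda$, or $\tilde\Lambda$) governs \emph{which} edge of a walk: whether the trivial self-loop at $0$ is to be excluded once (first edge only, as in (a)), excluded everywhere (as in (c)), or treated as a first-return decomposition (as in (b)). One must check that the "matrix product of WAM's counts concatenated outputs over walks" principle applies verbatim after these modifications --- this is routine because deleting edges from the state transition diagram corresponds exactly to zeroing out the relevant entries (or subtracting the relevant monomial $1=W^0$) of the adjacency matrix, which is precisely how $\hat\Lambda$ and $\tilde\Lambda$ were defined in\eqnref{e-redWAM} and\eqnref{e-Lambdahat}. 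Everything else is the standard identification of codewords with labeled paths from\eqnref{e-path} together with the delay/minimum-weight dictionary already used in Proposition~\ref{P-extrowactiveburst}.
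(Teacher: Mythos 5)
Your proposal is correct and follows the same overall route as the paper: identify each distance parameter with a minimum weight over walks of a prescribed length in the appropriate weighted state diagram, and observe that powers and products of $\Lambda$, $\hat{\Lambda}$, $\tilde{\Lambda}$ enumerate exactly those walks. The one place you genuinely diverge is part~(b): the paper simply cites \cite{McE98a} and \cite{GL05p} for the identity $\we(\cC)=2-\Phi^{-1}$, whereas you supply a self-contained proof via the first-return (renewal) decomposition of closed walks at the zero state in the $\hat{\Lambda}$-graph, yielding $\Phi=\bigl(1-\sum_{l\geq1}\Omega_lL^l\bigr)^{-1}=\bigl(2-\we(\cC)\bigr)^{-1}$. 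That argument is sound --- the atomic codewords of degree $l-1$ are precisely the first-return walks of length $l$ (including the length-one nontrivial self-loops at $0$, which account for $\Omega_1$), and the decomposition of an arbitrary closed walk into first-return segments is unique --- so your version has the merit of making the proposition independent of the cited literature. One point you should tighten: for the active segment distance you assert that the requirement on the starting state is simply ``dropped,'' but the formula $a^s_j=\min_{X,Y}\del(\tilde{\Lambda}^{j+1}_{X,Y})$ is only valid because \emph{every} state $X\in\F^{\delta}$ is reachable from the zero state in exactly $m$ steps (visible from the CCF, $m$ being the memory), so that every length-$(j+1)$ walk in the $\tilde{\Lambda}$-graph actually occurs as the segment $[m,m+j]$ of some codeword path; the paper makes this reachability argument explicit and your write-up should too. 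A similar (standard, and equally implicit in the paper) extendability remark --- that any finite walk can be completed to a codeword path by driving the state back to zero with zero inputs, since $A$ is nilpotent --- underlies the ``$\geq$'' direction in (a) and (c).
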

\begin{proof}
As for the consequence, note that, due to Theorem~\ref{T-Lambdaunique}, none of the formulas given in~(a)~--~(c) depends
on the choice of the encoder matrix~$G$.
Thus all these parameters depend only on the invariant $\WAM(\cC)$.
\\[.5ex]
(a) This follows from the fact that~$d^c_j$ is the minimum weight of the codewords associated  with paths of length $j+1$
starting at zero and where the first edge is not the trivial edge $0\edge{0}{0}0$.
\\
(b) This has been shown in \cite[Thm.~3.1]{McE98a}, see also \cite[Thm.~3.8]{GL05p}.
It is easy to see that~$\Phi$ does not depend on the choice of~$G$.
In fact,~$\Phi$ is the weight enumerator for the molecular codewords, see~\cite{McE98a} or
\cite[Proof of Thm.~3.8]{GL05p}.
The rest of~(b) follows from Proposition~\ref{P-extrowactiveburst}.
\\
(c) This can be seen using similar arguments as in~(a).
As for the active segment distance defined in Definition~\ref{D-distparam}(a) one should have in mind
that $v_{[m,m+j]}$ corresponds to the coefficient sequence over the time interval $[m,m+j]$,
where~$m$ is the largest Forney index.
Since one can reach every state $X\in\F^{\delta}$ from the zero state via suitable inputs $u_0,\ldots,u_{m-1}\in\F^k$
within~$m$ steps (see the CCF) this does not put any restriction on the states in the identity above for $a^s_j$.
\end{proof}

\section{Isometries and Monomial Equivalence}\label{SS-Iso}
\setcounter{equation}{0}
In this section we will introduce and discuss various notions of equivalence for convolutional codes.

\begin{defi}\label{D-isometries}
\begin{alphalist}
\item Let $\cC,\,\cCbar\subseteq\F[z]^n$ be two convolutional codes.
      We call~$\cC$ and~$\cCbar$ {\sl isometric\/} if there there exists a weight-preserving $\F[z]$-isomorphism
      $\varphi:\,\cC\longrightarrow\cCbar$, that is, $\wt(v)=\wt(\varphi(v))$ for all $v\in\cC$.
      If, in addition,~$\varphi$ is degree-preserving, that is, $\deg(v)=\deg(\varphi(v))$ for all $v\in\cC$, we
      call~$\varphi$ a {\sl strong isometry\/} and~$\cC$ and~$\cCbar$ are called {\sl strongly isometric}.
\item Two encoder matrices $G,\,\Gbar\in\F[z]^{k\times n}$ are called {\sl isometric\/} (respectively, {\sl strongly isometric\/})
      if the $\F[z]$-isomorphism $\im G\longrightarrow \im\Gbar,\ uG\longmapsto u\Gbar$ is an isometry
      (respectively, strong isometry).
\end{alphalist}
\end{defi}

Despite the obvious fact that isometric codes have the same distance the notion of isometry is too weak  in order to
classify convolutional codes in a meaningful way.
Indeed, the codes $\im(1,1)$ and $\im(1,z)$ in $\F_2[z]^2$ are isometric in the sense above
(in fact, the two given encoders are isometric), but the first code is a block code whereas the second one is not.
This shows that isometric codes do not even share the most basic algebraic parameters.
Due to this weakness it seems to be more reasonable to investigate strongly isometric codes.

Just like in block code theory monomial equivalence will play a central role for our considerations.
For convolutional codes this notion may be defined in various ways.
Consider the following matrix groups
\[
  \cM_n:=\cM_n(\F):=\Bigg\{M\in GL_n(\F)\,\Bigg|\,
              \begin{array}{l} M=PD\text{ for some permutation matrix}\\
                               P\in GL_n(\F) \text{ and some nonsingular diagonal}\\
                               \text{matrix } D\in GL_n(\F)
              \end{array}\Bigg\}
\]
and
\[
  \cM_{n,z}:=\cM_n(\F(z)):=\left\{M\in\F(z)^{n\times n}\,{\Large\mbox{$\Bigg|$}}\,
              \begin{array}{l} M=PD\text{ for some permutation matrix}\\
              P\in GL_n(\F)\text{ and some diagonal matrix}\\
              D=\text{diag}(\alpha_1z^{\mu_1},\ldots,\alpha_nz^{\mu_n})\text{ where }\\
              \alpha_1,\ldots,\alpha_n\in\F^* \text{ and }\mu_1,\ldots,\mu_n\in\Z\end{array}
              \right\}.
\]
The elements of~$\cM_n(\F)$ are called {\sl monomial matrices\/} while the elements of
$\cM_{n,z}$ are called {\sl $z$-monomial matrices.}

\begin{defi}\label{D-ME}
Let~$\cC$ and~$\cCbar\subseteq\F[z]^n$ be two convolutional codes and $G,\,\Gbar\in\F[z]^{k\times n}$ be any basic matrices.
\begin{alphalist}
\item The codes~$\cC$ and~$\cCbar$ are called {\em monomially equivalent\/} (ME) if there exists a monomial matrix
      $M\in \cM_n(\F)$ such that $\cCbar=\{vM\mid v\in\cC\}$.
\item The codes~$\cC$ and~$\cCbar$ are called {\em $z$-monomially equivalent\/} (\zME) if there exists a $z$-monomial matrix
      $M\in \cMzn$ such that $\cCbar=\{vM\mid v\in\cC\}$.
\item We call the matrices $G$ and~$\Gbar$ {\em monomially equivalent\/} ({\em $z$-monomially equivalent}, respectively)
      if there exists a matrix $M\in\cM_n$ ($\cMzn$, respectively) such that $\Gbar=GM$.
\end{alphalist}
\end{defi}

It is obvious that ME implies being strongly isometric and \zME\ implies being isometric.
Moreover, it is obvious that the only isometries on $\F[z]^n$ are given by monomial matrices and thus are even strong
isometries.
Furthermore, the following properties are trivial.

\begin{rem}\label{R-ME}
Let $\cC=\im G$ and $\cCbar=\im\Gbar\subseteq\F[z]^n$ be two codes with encoders~$G$ and~$\Gbar$, respectively.
\begin{arabiclist}
\item The following are equivalent:
      \begin{romanlist}
      \item $\cC$ and~$\cCbar$ are ME (\zME, respectively).
      \item There exist matrices $U\in GL_k(\F[z])$ and $M\in\cM_n$ ($\cMzn$, respectively) such that $\Gbar=UGM$.
      \item There exists an encoder $G'\in\F[z]^{k\times n}$ of~$\cCbar$ such that~$G$ and~$G'$ are ME (\zME, respectively).
      \end{romanlist}
\item If $G,\,\Gbar$ are ME then $\Lambda(G)=\Lambda(\Gbar)$.
      Thus, if $\cC,\,\cCbar$ are ME then $\WAM(\cC)=\WAM(\cCbar)$.
\item Suppose $G,\,\Gbar$ are \zME\ and both are reduced encoders with the same row degrees.
      Then $G,\,\Gbar$ are strongly isometric.
      Indeed, $G,\,\Gbar$ are isometric due to their $z$-monomial equivalence and
      the map $uG\longmapsto u\Gbar$ is degree-preserving due to\eqnref{e-reduced}.
\end{arabiclist}
\end{rem}

Obviously, monomial equivalence and $z$-monomial equivalence are equivalence relations.
However, whereas monomial equivalence can be described in an obvious way by a group action this is not the case for
$z$-monomial equivalence.
This is due to the fact that a matrix product~$GM$ need not be polynomial (and basic) for a given
basic polynomial matrix~$G$ and a given $M\in\cMzn$.

We are now in a position to quote two important results that will be particularly helpful to us later on.
The first one is a version of the famous MacWilliams Equivalence Theorem for block codes.
It completely describes the isometries between block codes.
For the proof see \cite[Sec.~7.9]{HP03} and \cite[Lem.~5.4]{GL05p}.

\begin{theo}\label{T-MacWE}
Let $G,\,\Gbar\in\F^{k\times n}$ be two constant matrices (not necessarily having rank~$k$) such that
$\wt(uG)=\wt(u\Gbar)$ for all $u\in\F^k$.
Then $G$ and~$\Gbar$ are ME.
As a consequence, isometric block codes are ME.
Stated differently, any isometry between block codes in~$\F^n$ extends to an isometry on $\F^n$.
\end{theo}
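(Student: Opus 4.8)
The statement is the classical MacWilliams Equivalence Theorem, so the plan is to reduce it to the standard argument about multiplicities of one-dimensional subspaces. First I would dispose of the degenerate cases: if either $G$ or $\Gbar$ fails to have full rank, I pass to a common complement by noting that $\wt(uG)=\wt(u\Gbar)$ for all $u$ forces $\ker G=\ker\Gbar$ (a vector $u$ lies in $\ker G$ iff $\wt(uG)=0$ iff $\wt(u\Gbar)=0$ iff $u\in\ker\Gbar$), so after factoring out this common kernel we may assume both matrices have rank $k'\le k$ and the induced codes $\cC=\im G$, $\cCbar=\im\Gbar$ are genuine $k'$-dimensional block codes with a weight-preserving isomorphism sending $uG\mapsto u\Gbar$; a monomial matrix realizing the equivalence of the codes then pulls back to one with $\Gbar=GM$ up to the ambiguity in the kernel, which is harmless. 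So it suffices to prove: an $\F$-linear weight-preserving isomorphism $\varphi:\cC\to\cCbar$ between two $k'$-dimensional block codes in $\F^n$ is induced by some $M\in\cM_n(\F)$.

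The core of the argument is a counting identity. For each coordinate $i\in\{1,\dots,n\}$ let $H_i=\{v\in\F^n\mid v_i=0\}$ be the $i$th coordinate hyperplane. For a codeword $v$, $\wt(v)=n-\#\{i\mid v_i=0\}$, and more usefully, for any subspace (or just subset) $S$ one computes $\sum_{v\in S}\wt(v)$ by double counting over coordinates. The standard trick is finer: for a one-dimensional subspace $\langle v\rangle$ of $\cC$, the number of coordinates $i$ with $v_i=0$ equals the number of $i$ such that $\langle v\rangle\subseteq H_i\cap\cC$. Summing the weight function over the $q-1$ nonzero multiples of $v$ and comparing, one shows that for every line $L\subseteq\cC$ the quantity $\#\{i\mid L\subseteq H_i\cap\cC\}$ — equivalently the number of coordinate functionals vanishing on $L$ — is determined by the multiset of weights on $L$, hence preserved by $\varphi$. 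Running this over all lines and using Möbius inversion (or a direct linear-algebra argument on the incidence between lines and the coordinate functionals $\pi_i|_{\cC}\in\cC^*$), one concludes that $\varphi$ carries the multiset of coordinate functionals $\{\pi_1|_{\cC},\dots,\pi_n|_{\cC}\}$, counted with multiplicity up to scalars, to $\{\pi_1|_{\cCbar},\dots,\pi_n|_{\cCbar}\}$. Concretely: there is a permutation $\sigma$ and scalars $\alpha_i\in\F^*$ with $\pi_{\sigma(i)}(\varphi(v))=\alpha_i\,\pi_i(v)$ for all $v\in\cC$, which is exactly the statement that $\varphi$ is the restriction of the monomial map $v\mapsto vM$ with $M=PD$.

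The main obstacle is the line-counting step, i.e. showing that preservation of weights forces preservation of the line–coordinate incidence structure. The clean way is: for a line $L=\langle v\rangle\subseteq\F^n$, set $a(L):=\#\{i:v_i\ne 0\}=\wt(v)$ for any generator $v$; then for a subspace $\cC$, and any line $L\subseteq\cC$, the value $a(L)$ together with, for each $j\le a(L)$, the number of lines $L'\subseteq\cC$ with $L'\not\subseteq H_i$ for the same set of coordinates — this is getting complicated, so instead I would invoke the cleanest formulation: define $s_i\in\cC^*$ by $s_i(v)=v_i$; two such codes are ME iff the multisets $\{[s_1],\dots,[s_n]\}$ and $\{[\bar s_1],\dots,[\bar s_n]\}$ of elements of $\mathbb P(\cC^*)$ (with multiplicity, allowing the zero functional) coincide up to the isomorphism $\varphi^*$. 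That the weights determine these multiplicities is precisely Bogart–Goldberg–Gordon's lemma, and I would reproduce that short argument: the number of $i$ with $s_i=0$ is $n-\max$... rather, one shows by inclusion–exclusion over subspaces that the function $W\mapsto \#\{i:s_i|_W=0\}$, for $W$ ranging over subspaces of $\cC$, is obtained from the weight distributions of the subcodes $W$ by Möbius inversion on the subspace lattice, hence is a weight invariant, hence preserved by $\varphi$; specializing $W$ to hyperplanes of $\cC$ recovers the multiplicities $\#\{i:[s_i]=[\ell]\}$ for each $\ell\in\mathbb P(\cC^*)$. Once the multiplicities match, choosing $\sigma$ and the $\alpha_i$ is immediate, and composing with the kernel bookkeeping from the first paragraph gives $\Gbar=GM$ for a suitable $M\in\cM_n(\F)$, and thus that isometric block codes are ME and every such isometry extends to $\F^n$.
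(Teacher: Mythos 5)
The paper does not prove Theorem~\ref{T-MacWE} at all: it is quoted as a known result with the proof delegated to the references (\cite[Sec.~7.9]{HP03} and \cite[Lem.~5.4]{GL05p}), so there is no in-paper argument to compare against. Your proposal is a correct rendition of the standard proof from that literature: the kernel reduction is sound (and the monomial matrix you obtain realizes the specific map $uG\mapsto u\Gbar$, so $\Gbar=GM$ follows as a matrix identity), and the core counting step works because $\sum_{v\in W}\wt(v)=|W|(1-q^{-1})\,\#\{i:\pi_i|_W\neq0\}$ for every subspace $W\subseteq\cC$, which makes $W\mapsto\#\{i:\pi_i|_W=0\}$ a weight invariant directly --- no M\"obius inversion is needed for that, and the inversion you invoke to extract the multiplicities degenerates to the trivial identity $m_H=f(H)-f(\cC)$ since the only hyperplane of $\cC$ containing a hyperplane $H$ is $H$ itself. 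Your exposition has a false start mid-argument before settling on the right formulation, but every essential ingredient is present and correct.
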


Due to the last part, this result also came to be known as the MacWilliams Extension Theorem.
Using this classical result one can establish the following result on a particular class of convolutional codes.

\begin{theo}\label{T-posFI}
Let $G,\,\Gbar\in\F[z]^{k\times n}$ be two reduced encoders such that all Forney indices of~$G$ and~$\Gbar$ are positive.
Then $G,\,\Gbar$ are ME if and only if $\Lambda(G)=\Lambda(\Gbar)$.
\\
As a consequence, if $\cC,\,\cCbar\subseteq\F[z]^n$ are codes such that all Forney indices of~$\cC$ and~$\cCbar$
are positive, then $\cC$ and $\cCbar$ are ME if and only if $\WAM(\cC)=\WAM(\cCbar)$.
\end{theo}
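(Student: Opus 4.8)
The plan is to prove the statement at the level of encoders; the consequence for codes then follows from Remark~\ref{R-ME}(1), since by Theorem~\ref{T-Lambdaunique} the equality $\WAM(\cC)=\WAM(\cCbar)$ means precisely that suitable reduced encoders $G$ of $\cC$ and $\Gbar$ of $\cCbar$ satisfy $\Lambda(G)_{X,Y}=\Lambda(\Gbar)_{XT,YT}$ for some $T\in GL_\delta(\F)$, and replacing $\Gbar$ by $\Gbar T$ (which amounts to a state space isomorphism and leaves $\im\Gbar$ and its Forney indices unchanged) reduces us to the case $\Lambda(G)=\Lambda(\Gbar)$. One direction is already noted: Remark~\ref{R-ME}(2) gives that ME encoders have equal WAM. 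So the substance is the converse: $\Lambda(G)=\Lambda(\Gbar)$ with all Forney indices positive forces $G$ and $\Gbar$ to be ME.

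The key idea is to pass from the WAM to a single constant matrix and then invoke the block-code MacWilliams Equivalence Theorem~\ref{T-MacWE}. Assume $G,\Gbar$ are reduced with the \emph{same} Forney indices $\nu_1,\dots,\nu_k$ (equality of the ordered tuples should first be extracted from $\Lambda(G)=\Lambda(\Gbar)$ — e.g. the degree $\delta$ is visible as $\log_q$ of the matrix size, and one can recover the multiset of Forney indices from the WAM, say via the active segment distances or by a direct combinatorial argument on the CCF; since all indices are positive this data is encoded cleanly). Let $(A,B,C,D)$ and $(\bar A,\bar B,\bar C,\bar D)$ be the CCFs. Because all $\nu_i>0$, the matrix $B$ (and likewise $\bar B$) has full row rank $k$, so for each state $X\in\F^\delta$ the map $u\mapsto XA+uB$ is a bijection $\F^k\to X A+\im B$, and therefore
\[
  \Lambda(G)_{X,Y}=\begin{cases}\we\{XC+uD\mid u\in\F^k,\ XA+uB=Y\}, & Y\in XA+\im B,\\ 0,&\text{else,}\end{cases}
\]
where the inner set, when nonempty, is a single point $XC+uD$ with $u$ the unique solution. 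Summing the entry $\Lambda(G)_{0,Y}$ over the $q^k$ reachable states $Y$ from $X=0$ recovers $\we\{uD\mid u\in\F^k\}=\we(\im D)$, i.e. the weight enumerator of the constant block code $\im G(0)$. More usefully, one can reconstruct the weight enumerator of the block code $\im\big(\begin{smallmatrix}C\\ D\end{smallmatrix}\big)\subseteq\F^{\delta+n}$: summing over all state pairs, $\sum_{X,Y}\Lambda(G)_{X,Y}=\we\{XC+uD\mid X\in\F^\delta,u\in\F^k\}=\we\{(X,u)\big(\begin{smallmatrix}C\\ D\end{smallmatrix}\big)\}$ (here one uses that $B$ full rank makes $(X,u)\mapsto X$ and the induced map injective in the relevant sense, so the multiset of outputs is exactly $\im\big(\begin{smallmatrix}C\\D\end{smallmatrix}\big)$ with multiplicity $1$). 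Hence $\Lambda(G)=\Lambda(\Gbar)$ yields $\wt\big((X,u)\big(\begin{smallmatrix}C\\D\end{smallmatrix}\big)\big)$ and $\wt\big((X,u)\big(\begin{smallmatrix}\bar C\\ \bar D\end{smallmatrix}\big)\big)$ having the same multiset of values over all $(X,u)\in\F^{\delta+k}$; I would in fact want the stronger statement that these agree \emph{pointwise after a change of variables}, which is where the positivity and the structure of the CCF enter.

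Granting that, Theorem~\ref{T-MacWE} applied to the $(\delta+k)\times n$ constant matrices $\big(\begin{smallmatrix}C\\ D\end{smallmatrix}\big)$ and $\big(\begin{smallmatrix}\bar C\\ \bar D\end{smallmatrix}\big)$ produces a monomial matrix $M\in\cM_n(\F)$ with $\big(\begin{smallmatrix}C\\ D\end{smallmatrix}\big)M=\big(\begin{smallmatrix}\bar C\\ \bar D\end{smallmatrix}\big)$, equivalently $DM=\bar D$ and $CM=\bar C$. Finally, from the explicit reconstruction formula $G=B(z^{-1}I-A)^{-1}C+D$ (the identity recorded after~\eqnref{e-iso}), and using $A=\bar A$, $B=\bar B$ (both are the fixed nilpotent shift matrices of the CCF determined by the common row degrees), we get $\Gbar M^{-1}=B(z^{-1}I-A)^{-1}CM^{-1}+DM^{-1}=B(z^{-1}I-A)^{-1}\bar C+\bar D\cdot$\dots — wait, cleaner: $GM=B(z^{-1}I-A)^{-1}CM+DM=B(z^{-1}I-\bar A)^{-1}\bar C+\bar D=\Gbar$. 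Thus $\Gbar=GM$, i.e. $G$ and $\Gbar$ are ME by Definition~\ref{D-ME}(c), which is what we wanted.

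The main obstacle is the step that reconstructs a \emph{pointwise} weight identity (or at least a genuine block-code isometry on $\F^{\delta+k}$) from the mere equality of WAMs $\Lambda(G)=\Lambda(\Gbar)$: the WAM only records, for each admissible transition $(X,Y)$, the weight of a single output, so one must argue that the indexing by state pairs rigidifies the correspondence enough — the assumption that all Forney indices are positive is exactly what makes $B$ injective on rows and hence makes the state pair $(X,Y)$ determine the input $u$, so that the ``weighted graph'' and the block code $\im\big(\begin{smallmatrix}C\\D\end{smallmatrix}\big)$ carry the same information. Without positivity (zero Forney indices present), $B$ drops rank, several inputs share a transition, the inner set in~\eqnref{e-LambdaXY} is a nontrivial affine subcode rather than a point, and this rigidity fails — consistent with the remark in the introduction that ME cannot be expected in general. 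I would expect this to be the same argument already carried out in~\cite{GS07}, so in the final write-up it may suffice to cite that and only indicate the reduction; but the outline above is the self-contained route.
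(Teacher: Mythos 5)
The paper does not actually prove this theorem: the nontrivial direction is quoted from \cite[Thm.~3.7]{GS07}, so there is no in-paper argument to compare yours against line by line. Judged on its own merits, your outline captures what is surely the essential mechanism: positivity of all Forney indices makes $B$ (and $\bar B$) of full row rank, so each nonzero entry $\Lambda(G)_{X,Y}$ is a \emph{single monomial} $W^{\wt(XC+uD)}$ with $u$ the unique solution of $Y=XA+uB$; equality of WAMs then hands you a weight identity for the constant matrices $\Smalltwomat{C}{D}$ and $\Smalltwomat{\bar C}{\bar D}$, Theorem~\ref{T-MacWE} produces the monomial matrix $M$, and the transfer-function formula $G=B(z^{-1}I-A)^{-1}C+D$ converts $CM=\bar C$, $DM=\bar D$ into $\Gbar=GM$. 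Two remarks on the steps you flag as worrisome. First, the ``pointwise versus multiset'' issue you agonize over is a non-issue once $A=\bar A$ and $B=\bar B$: for each $(X,u)$ put $Y=XA+uB$; then both WAM entries at $(X,Y)$ are monomials indexed by the \emph{same} unique $u$, so $\Lambda(G)_{X,Y}=\Lambda(\Gbar)_{X,Y}$ literally is $\wt(XC+uD)=\wt(X\bar C+u\bar D)$. Second, the step you leave genuinely open --- that $\Lambda(G)=\Lambda(\Gbar)$ forces the two CCFs to share the same $(A,B)$, i.e.\ the same ordered row degrees --- does need an argument, but it closes cleanly: the supports of the two WAMs agree, the row $X=0$ gives $\im B=\im\bar B$, and in the CCF $\im B$ is the coordinate subspace spanned by the unit vectors at the block-start positions $1,\,1+\nu_1,\,1+\nu_1+\nu_2,\dots$, which pins down the ordered tuple $(\nu_1,\dots,\nu_k)$ and hence $A=\bar A$, $B=\bar B$. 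You should include this; ``recover the indices via active segment distances'' is not a proof.

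The one place where your proposal is actually wrong rather than merely incomplete is the reduction of the code-level consequence to the encoder-level statement. ``Replacing $\Gbar$ by $\Gbar T$'' is not a meaningful operation: $T\in GL_\delta(\F)$ acts on the state space, not on the $k\times n$ encoder, and Theorem~\ref{T-Lambdaunique} does not assert that every state isomorphism relating two WAMs is induced by a change of reduced encoder. Indeed Example~\ref{E-StrIsoMWAM} is built precisely on a $T$ that is \emph{not} absorbable this way (there with a zero Forney index, which is why the conclusion fails there). To get the consequence you must run the argument with a general $T$, as in the proof of Theorem~\ref{T-StrIsoOnePosFI}: compare the supports $\im\Smallfourmat{I}{A}{0}{B}$ and its $T$-transform, and show that under the all-positive hypothesis the resulting constraints still let you conclude monomial equivalence (or that $T$ can be normalized away). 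The paper sidesteps this by asserting that the proof in \cite{GS07} ``also implies the consequence''; your write-up should either do the $T$-analysis or cite that source for it, but cannot dismiss it as a cosmetic renaming.
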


The only-if part of the first statement is obvious and was mentioned already in Remark~\ref{R-ME}(2).
The converse is the non-trivial implication and has been proven in \cite[Thm.~3.7]{GS07}.
That proof also implies the consequence for the codes~$\cC$ and~$\cCbar$.
Notice that the result in Theorem~\ref{T-posFI} applies in particular to all codes of dimension one.
One should also observe that, for general dimension, the if-part is not true for block codes, see \cite[Exa.~1.61]{HP03},
and indeed, due to the positivity of the Forney indices, block codes are excluded from the result.

Let us now turn to investigating basic properties of (strong) isometries.

\begin{prop}\label{P-DelayPres}
For a nonzero polynomial vector $v=\sum_{t=0}^M v_tz^t\in\F[z]^n$, $v_t\in\F^n$, define the {\sl delay\/} of~$v$
as $\del(v):=\min\{t\in\N_0\mid v_t\not=0\}$ and put $\del(0):=\infty$.
Then every isomorphism $\varphi:\cC\longrightarrow\cC'$ between codes $\cC,\,\cC'\subseteq\F[z]^n$
is delay-preserving, that is, $\del(v)=\del\big(\varphi(v)\big)\text{ for all }v\in\cC$.
\end{prop}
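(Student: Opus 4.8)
The plan is to exploit the $\F[z]$-linearity of $\varphi$ together with the weight being unaffected by multiplication by powers of $z$. First I would observe that for any $v\in\cC$ and any $s\in\N_0$ one has $z^sv\in\cC$ (since $\cC$ is an $\F[z]$-module) and $\wt(z^sv)=\wt(v)$, because multiplying a polynomial vector by $z^s$ merely shifts all coefficient vectors without changing any of them. The same holds in $\cC'$. Now $\del(v)\geq s$ is equivalent to $v\in z^s\F[z]^n$, i.e. to $v=z^sw$ for some $w\in\F[z]^n$; and since $G$ is basic, $v\in\cC\cap z^s\F[z]^n$ forces $w\in\cC$ as well (if $v=uG=z^sw$ then, because a basic $G$ admits a polynomial right inverse, $u=z^s u'$ with $u'\in\F[z]^k$, whence $w=u'G\in\cC$). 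Thus $\del(v)\geq s$ holds inside $\cC$ iff $v$ lies in the submodule $z^s\cC$.

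The key step is then to show $\varphi(z^s\cC)=z^s\cC'$ for every $s$. One inclusion: if $v=z^sw$ with $w\in\cC$, then $\varphi(v)=z^s\varphi(w)\in z^s\cC'$ by $\F[z]$-linearity; the reverse inclusion follows by applying the same argument to $\varphi^{-1}$, which is also an $\F[z]$-isomorphism. Hence $v\in z^s\cC\iff\varphi(v)\in z^s\cC'$, which, by the characterization of delay in the previous paragraph applied to both $\cC$ and $\cC'$, gives $\del(v)\geq s\iff\del(\varphi(v))\geq s$ for all $s\in\N_0$. Taking $s=\del(v)$ and $s=\del(\varphi(v))$ respectively yields $\del(v)=\del(\varphi(v))$, with the convention $\del(0)=\infty$ handled by $\varphi(0)=0$.

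I should note that the statement as quoted does not even require $\varphi$ to be weight-preserving — pure $\F[z]$-linearity (plus bijectivity) suffices — and indeed the argument above never uses the weight. The only mildly delicate point, and the step I expect to need the most care, is the claim that $\cC\cap z^s\F[z]^n=z^s\cC$, i.e. that a codeword divisible by $z^s$ in $\F[z]^n$ is actually $z^s$ times a codeword; this is exactly where basicness of the encoder $G$ enters, via the existence of a polynomial right inverse of $G$ (equivalently, via $\eqnref{e-Gbasic}$). Everything else is a routine unwinding of definitions.
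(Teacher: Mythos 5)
Your proposal is correct and follows essentially the same route as the paper: both reduce the claim to showing that a codeword divisible by $z^s$ is $z^s$ times a codeword (you via the polynomial right inverse of the basic matrix $G$, the paper via the full row rank of $G(0)$ — two standard faces of basicness), and then conclude by $\F[z]$-linearity and symmetry between $\varphi$ and $\varphi^{-1}$. Your side remark that weight-preservation is never used is also consistent with the statement, which only assumes $\varphi$ is an isomorphism.
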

\begin{proof}
Assume $\cC=\im G$ for some basic matrix~$G\in\F[z]^{k\times n}$.
Each vector $v\in\cC$ such that $\del(v)=\alpha\in\N_0$ can be written as
$v=z^{\alpha}\hat{v}$ where $\hat{v}\in\F[z]^n$ and $\del(\hat{v})=0$.
Due to\eqnref{e-Gbasic} the matrix $G(0)\in\F^{k\times n}$ has full row rank  and thus
$v=z^{\alpha}\hat{v}=uG$ implies $u=z^{\alpha}\hat{u}$ for some $\hat{u}\in\F[z]^k$.
Hence $\hat{v}=\hat{u}G\in\cC$.
Now we have $\varphi(v)=z^{\alpha}\varphi(\hat{v})$ and thus
$\del(\varphi(v))=\alpha+\del(\varphi(\hat{v}))\geq\alpha=\del(v)$.
Since this is true for all $v\in\cC$ and~$\varphi$ is an isomorphism, symmetry between $\cC$ and $\cC'$
implies $\del(v)=\del(\varphi(v))$ for all $v\in\cC$.
\end{proof}

The result above is not true if we replace the codes by arbitrary $\F[z]$-submodules of $\F[z]^n$.
Indeed, the map $v\longmapsto zv$ is an isomorphism between $\im(1,1)$ and $\im(z,z)$ that is not delay-preserving.

\begin{prop}\label{P-StrIso}
Let $\cC,\,\cCbar\subseteq\F[z]^n$ be two codes and let $\varphi:\cC\longrightarrow\cCbar$ be an $\F[z]$-isomorphism.
Suppose $\cC=\im G$ for some encoder $G\in\F[z]^{k\times n}$.
Then
\begin{alphalist}
\item there exists an encoder matrix~$\Gbar\in\F[z]^{k\times n}$ of~$\cCbar$ such that $\varphi(uG)=u\Gbar$ for all $u\in\F[z]^k$;
\item if~$\varphi$ is degree-preserving and~$G$ is reduced, then the matrix~$\Gbar$ in part~(a) is reduced and therefore~$\cC$
      and~$\cCbar$ share the same Forney indices and the same degree.
\end{alphalist}
As a consequence, strongly isometric codes have strongly isometric reduced encoders.
\end{prop}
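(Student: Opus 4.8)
The plan is to build $\Gbar$ explicitly from $\varphi$ and the rows of $G$, to verify its basicness by a change-of-basis argument over $\F[z]$, and then to settle the degree bookkeeping in part~(b) by also exploiting that $\varphi^{-1}$ is degree-preserving.

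For part~(a) I would write $G$ with rows $g_1,\ldots,g_k$, set $\bar g_i:=\varphi(g_i)\in\F[z]^n$, and let $\Gbar$ be the matrix with these rows. Since $\varphi$ is $\F[z]$-linear, $\varphi(uG)=\varphi\big(\sum_i u_ig_i\big)=\sum_i u_i\bar g_i=u\Gbar$ for every $u\in\F[z]^k$, so in particular $\im\Gbar=\varphi(\cC)=\cCbar$. It then remains to see that $\Gbar$ is basic. For this, note that $\cCbar$, being a convolutional code, admits some basic encoder $\Gbar_0$; its rows form an $\F[z]$-basis of the free module $\cCbar$, so applying $\varphi^{-1}$ to them produces an $\F[z]$-basis of $\cC$, which must be related to the basis $g_1,\ldots,g_k$ by some $U\in GL_k(\F[z])$. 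Applying $\varphi$ to this relation gives $\Gbar_0=U\Gbar$, whence $\Gbar=U^{-1}\Gbar_0$ is basic, and the quoted fact that $\im\Gbar=\im\Gbar_0$ for basic matrices forces a unimodular factor, consistent with what we already have.

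For part~(b), assume in addition that $\varphi$ is degree-preserving and that $G$ is reduced with row degrees $\nu_1,\ldots,\nu_k$. Then $\deg\bar g_i=\deg\varphi(g_i)=\deg g_i=\nu_i$, so $\Gbar$ has row degrees $\nu_1,\ldots,\nu_k$ as well, and hence $\deg\Gbar\le\sum_i\nu_i=\deg G=\deg\cC$; here I use that any basic encoder of a code carries the degree of the code and that the sum of row degrees of a basic matrix is at least its degree. To get the reverse inequality I would run the construction of part~(a) for $\varphi^{-1}$, which is again degree-preserving, starting from a reduced encoder of $\cCbar$ whose row degrees are the Forney indices $\mu_1,\ldots,\mu_k$ of $\cCbar$; this yields a basic encoder of $\cC$ with row degrees $\mu_1,\ldots,\mu_k$, so $\deg\cC\le\sum_i\mu_i=\deg\cCbar$. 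Combining the two estimates forces $\deg\cCbar=\sum_i\nu_i=\deg\Gbar$, i.e.\ $\Gbar$ is reduced; its row degrees $\nu_i$ are then by definition the Forney indices of $\cCbar$, so $\cC$ and $\cCbar$ share the same Forney indices and the same degree.

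Finally, the consequence is immediate: given a strong isometry $\varphi:\cC\to\cCbar$, pick a reduced encoder $G$ of $\cC$ and let $\Gbar$ be the reduced encoder of $\cCbar$ supplied by parts~(a) and~(b); then the isomorphism $uG\mapsto u\Gbar$ is exactly $\varphi$, hence weight- and degree-preserving, so $G$ and $\Gbar$ are strongly isometric in the sense of Definition~\ref{D-isometries}(b). I expect the only non-routine point to be the degree accounting in~(b): rather than trying to argue directly that $\Gbar$ is reduced, one should compare $\cC$ and $\cCbar$ symmetrically through $\varphi$ and $\varphi^{-1}$ to pin down $\deg\cCbar$, after which reducedness of $\Gbar$ drops out; the rest is routine linear algebra over $\F[z]$ together with the standard facts about basic and reduced encoders recalled in Section~\ref{SS-Intro}.
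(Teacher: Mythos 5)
Your proposal is correct and follows essentially the same route as the paper: part~(a) via the images $\varphi(g_i)$ of the rows (equivalently, the matrix of $\varphi\circ(u\mapsto uG)$), and part~(b) via the inequality $\deg\cCbar\le\sum_i\deg\bar g_i=\sum_i\nu_i=\deg\cC$ combined with the symmetric estimate obtained from $\varphi^{-1}$. You merely spell out two points the paper leaves implicit, namely why $\Gbar$ is basic and how the symmetry argument pins down $\deg\cCbar$; both are handled correctly.
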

\begin{proof}
(a) This is a simple consequence of Linear Algebra.
Indeed, let $f:\;\F[z]^k\longrightarrow\cC$, $u\longmapsto uG$ be the encoder map corresponding to~$G$.
Then $g:=\varphi\circ f:\;\F[z]^k\longrightarrow\cCbar,\ u\longmapsto\varphi(uG)$ is an $\F[z]$-isomorphism.
Let~$\Gbar$ be the standard basis representation of~$g$, that is, $g(u)=u\Gbar$ for all $u\in\F[z]^k$.
Since~$g$ is an isomorphism the matrix~$\Gbar$ is indeed an encoder of~$\cCbar$ and
$\varphi(uG)=g(u)=u\Gbar$ for all $u\in\F[z]^k$.
\\
(b) Denote the rows of~$G$ and~$\Gbar$ by $g_1,\ldots,g_k$ and $\bar{g}_1,\ldots,\bar{g}_k$, respectively.
Let~$\delta$ and $\bar{\delta}$ be the degrees of~$\cC$ and~$\cCbar$, respectively.
Then the degree-preserving property of~$\varphi$ along with the reducedness of~$G$ yields
$\delta=\sum_{i=1}^k \deg(g_i)=\sum_{i=1}^k\deg(\bar{g}_i)\geq\bar{\delta}$, where the last
inequality follows from the definition of the degree, see \cite[p.~495]{Fo75}.
Now symmetry between~$\cC$ and~$\cCbar$ yields $\delta=\bar{\delta}$.
This in turn implies equality in the previous step and thus~$\Gbar$ is reduced by the very definition of
reducedness.
Thus,~$\cCbar$ has the same Forney indices as~$\cC$.
\end{proof}

This last result enables us to derive some first coding theoretic properties for isometries.

\begin{cor}\label{C-atomic}
\begin{alphalist}
\item Strong isometries map atomic codewords into atomic codewords.
\item Strongly isometric codes share the same weight enumerator, the same extended
      row distances, and the same active burst distances.
\item Strongly isometric reduced encoders have the same active row distances.
\end{alphalist}
\end{cor}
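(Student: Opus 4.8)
The plan is to derive all three statements from Proposition \ref{P-StrIso} together with the characterizations of the relevant distance parameters that were established earlier. For part (a), I would use the intrinsic (realization-independent) description of atomicity given right before Proposition \ref{P-CCF}: a codeword $v\in\cC$ is atomic if and only if its constant coefficient is nonzero and $v$ cannot be written as $\hat v+\tilde v$ with $\hat v,\tilde v\in\cC$ nonzero, $\deg\hat v\le L$ and $\tilde v\in z^{L+1}\F[z]^n$ for some $L\in\N_0$. Let $\varphi:\cC\to\cCbar$ be a strong isometry. Since $\varphi$ is delay-preserving by Proposition \ref{P-DelayPres}, it maps codewords with nonzero constant coefficient to codewords with nonzero constant coefficient. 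Moreover $\varphi$ is an $\F[z]$-isomorphism, so it respects sums; and since it is degree-preserving, it sends a decomposition $v=\hat v+\tilde v$ of the above type to a decomposition $\varphi(v)=\varphi(\hat v)+\varphi(\tilde v)$ with $\deg\varphi(\hat v)\le L$, while $\varphi(\tilde v)\in z^{L+1}\F[z]^n$ because $\del(\varphi(\tilde v))=\del(\tilde v)\ge L+1$ again by Proposition \ref{P-DelayPres}. Hence $v$ admits such a nontrivial decomposition if and only if $\varphi(v)$ does, and combining this with the constant-coefficient condition shows $v$ is atomic if and only if $\varphi(v)$ is atomic.

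For part (b), I would first note that $\varphi$ is weight-preserving and, by part (a), restricts to a bijection between the atomic codewords of $\cC$ of degree $l-1$ and those of $\cCbar$ of degree $l-1$ (the degree is preserved). Therefore the quantities $\omega_{l,\alpha}$ of Definition \ref{D-distparam}(b) agree for $\cC$ and $\cCbar$, so $\we(\cC)=\we(\cCbar)$. The equality of the extended row distances and the active burst distances is then immediate from Proposition \ref{P-extrowactiveburst}, which expresses both of these purely in terms of the coefficients $\Omega_l$ of the weight enumerator. Alternatively one may invoke Proposition \ref{P-WAMdistparam}(b), but the direct route through Proposition \ref{P-extrowactiveburst} is cleaner here and avoids reference to the WAM.

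For part (c), let $G,\Gbar$ be strongly isometric reduced encoders, so that $\varphi(uG)=u\Gbar$ for all $u\in\F[z]^k$ with $\varphi$ a strong isometry; by Proposition \ref{P-StrIso}(b) the two encoders share the same Forney indices $\nu_1,\dots,\nu_k$ and hence the same memory $m$. By Remark \ref{R-distparam}(3) the active row distance $a^r_j$ depends only on the reduced encoder (not on the realization), and from Definition \ref{D-distparam}(a) it is $a^r_j=\min\{\wt((uG)_{[0,m+j]})\mid uG\in\cS_j,\ \deg u=j\}$. Since $\varphi$ is weight-preserving and degree-preserving, it maps the truncation $(uG)_{[0,m+j]}$ to $(u\Gbar)_{[0,m+j]}$ with the same weight (degree-preservation plus delay-preservation control which coefficients can be nonzero), and, as the sets $\cS_j$ depend only on the state sequence which $\varphi$ respects (Remark \ref{R-distparam}(1)), the minimization runs over corresponding sets of messages $u$. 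Hence $a^r_j(G)=a^r_j(\Gbar)$ for all $j$.

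The only genuinely delicate point is the interplay between $\varphi$ and truncation: $\varphi$ is defined on full codewords, not on truncations, so in parts (a) and (c) I have to argue via the delay- and degree-preserving properties that $\varphi$ carries the relevant ``prefix'' structure faithfully. This is exactly where Proposition \ref{P-DelayPres} is used, and it is the step I would write out with the most care; everything else is bookkeeping against the earlier propositions.
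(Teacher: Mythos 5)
Parts (a) and (b) are correct and follow the paper's own route: for (a) you use the realization\mbox{-}free characterization of atomicity together with delay\mbox{-}preservation (Proposition~\ref{P-DelayPres}) and degree\mbox{-}preservation to transport a nontrivial decomposition $v=\hat v+\tilde v$ back and forth along $\varphi$ and $\varphi^{-1}$; the paper phrases exactly this argument through a pair of encoders $G,G'$ with $\varphi(uG)=uG'$ supplied by Proposition~\ref{P-StrIso}. Part (b) is, as you say, immediate from (a), the definition of $\omega_{l,\alpha}$, and Proposition~\ref{P-extrowactiveburst}.

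In part (c) one step is not justified as written. You need that, for the strongly isometric reduced encoders $G$ and $\Gbar$ and a message $u$ with $\deg u=j$, one has $uG\in\cS_j$ if and only if $u\Gbar\in\cS'_j$ (the counterpart of $\cS_j$ for $\im\Gbar$). You support this by citing Remark~\ref{R-distparam}(1), but that remark only says that for a \emph{single} code the sets $\cS_j$ do not depend on the chosen realization; it gives no relation between the state sequences of two \emph{different} codes under an isometry, and a priori there is none. The paper closes this gap concretely: since $G$ and $\Gbar$ have the same row degrees (Proposition~\ref{P-StrIso}(b)), their CCFs share the same matrices $A$ and $B$, so by\eqnref{e-state} the state sequence $x=uB(z^{-1}I-A)^{-1}$ attached to $uG$ and to $u\Gbar$ is literally the same polynomial in $u$; hence the strong isometry induces weight- and degree-preserving bijections $\cS_j\to\cS'_j$. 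Some argument of this kind (or an intrinsic characterization of the vanishing of a state $x_i$, in the spirit of the atomicity argument) must be supplied. The other point you flag --- that $\varphi$ acts on codewords rather than truncations --- is settled not by ``delay-preservation controlling which coefficients are nonzero'' but by the simpler observation, recorded after Definition~\ref{D-distparam}, that $\deg u=j$ and reducedness force $\deg(uG)\le m+j$ and $\deg(u\Gbar)\le m+j$, so both truncations $v_{[0,m+j]}$ are the full codewords and weight-preservation of $\varphi$ applies directly.
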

\begin{proof}
(a) Let $\varphi:\cC\longrightarrow\cC'$ be a strong isometry.
By Proposition~\ref{P-StrIso}(b) there exist reduced encoders $G$ and~$G'$ such that
$\varphi(uG)=uG'$ for all messages~$u$.
Let $v=uG\in\cC$ be atomic.
Then $v_0\not=0$ and thus the same is true for $\varphi(v)=uG'$ due to Proposition~\ref{P-DelayPres}.
Suppose that $uG'$ is not atomic.
Then $uG'=\tilde{u}G'+\hat{u}G'$ where $\tilde{u},\ \hat{u}\in\F[z]^k\backslash\{0\}$ and
$\deg(\tilde{u}G')<\del(\hat{u}G')$.
Since~$\varphi$ is degree- and delay-preserving this implies $\deg(\tilde{u}G)<\del(\hat{u}G)$.
Now, $\varphi(\tilde{u}G+\hat{u}G)=\tilde{u}G'+\hat{u}G'=\varphi(v)$ along with bijectivity yields
$v=\tilde{u}G+\hat{u}G$, contradicting atomicity of~$v$.
\\
(b) is a direct consequence of~(a) along with the definition of the weight enumerator in Definition~\ref{D-distparam}(b)
and Proposition~\ref{P-extrowactiveburst}.
\\
(c)
According to Proposition~\ref{P-StrIso} strongly isometric codes~$\cC$ and~$\cC'$ have strongly isometric reduced
encoders~$G$ and~$G'$ in, say, $\F[z]^{k\times n}$.
In particular,~$G$ and~$G'$ have the same row degrees and their CCF's are given by quadruples $(A,B,C,D)$ and $(A,B,C',D')$
with the same matrices~$A$ and~$B$.
Thus\eqnref{e-state} shows that each pair of codewords $uG$ and $uG'$ corresponding to the same message $u\in\F[z]^k$
share the same state sequence.
All this implies that the strong isometry $uG\longmapsto uG'$ induces weight- and degree-preserving bijections $\cS_j\longrightarrow\cS'_j$
between the sets $\cS_j,\,j\in\N_0$, defined in Definition~\ref{D-distparam} and their counterparts $\cS'_j$ for the code~$\cC'$.
Now the statement about the active row distances follows directly from Definition~\ref{D-distparam}(a)
because $v_{[0,m+j]}$ is a codeword.
\end{proof}

The rest of this section is devoted to examples showing the limitations of (strong) isometries.
Whereas according to the last result strongly isometric convolutional codes share the same
weight enumerator and many distance parameters they, in general,  do not share the
same column distances, active column and active segment distances.
This is, of course, not surprising because the latter are based on truncated rather than complete codewords.
As a consequence, there is no MacWilliams Equivalence Theorem for strong isometries on the class of
convolutional codes, that is, strongly isometric codes need not be ME.

\begin{exa}\label{E-exa1}
Let $G=(1,\,z,\,z,\,1+z)$ and $G'=(1,\,1,\,1,\,1+z)\in\F_2[z]^{1\times4}$.
Both matrices are basic and reduced.
Obviously, the two given encoders are strongly isometric and hence so are $\cC:=\im G$ and $\cC':=\im G'$ in $\F_2[z]^4$.
As a consequence, Corollary~\ref{C-atomic} yields that the codes share the same weight enumerator and the same extended
row distances, active burst distances, and active row distances (the latter are, in this case,  invariants of the code due to the
uniqueness of the reduced encoder for one-dimensional binary codes).
The remaining families of distance parameters do not coincide.
This can be seen as follows.
The WAM's associated with the given encoders are
\[
  \Lambda:=\Lambda(G)=\begin{pmatrix}1&W^2\\W^3&W^3\end{pmatrix},\
  \Lambda':=\Lambda(G')=\begin{pmatrix}1&W^4\\W&W^3\end{pmatrix}.
\]
Here we order the states in~$\F_2$ as $X_1=0,\,X_2=1$ so that $\Lambda_{i,j}=\Lambda_{X_i,X_j}$ in the sense of
Definition~\ref{D-Lambda} and likewise for~$\Lambda'$.
Since there are no state space isomorphisms to take into account this shows that
$\WAM(\cC)\not=\WAM(\cC')$, see Theorem~\ref{T-Lambdaunique}.
From  Proposition~\ref{P-WAMdistparam}(a) we see that $d^c_0=2$ whereas $d'^c_0=4$ (which is also obvious from the constant coefficient matrices of~$G$ and~$G'$).
All other column distances of both codes are equal to~$5$ (which is the distance of both codes).
It remains to show that~$\cC$ and~$\cC'$ do not share the same active column distances and active segment distances.
Consider the reduced WAM's $\tilde{\Lambda}$ and$\tilde{\Lambda'}$ in the sense of\eqnref{e-redWAM}.
Hence the first entry in~$\Lambda$ and~$\Lambda'$ is replaced by~$0$.
Define the matrices of the delays of the entries of $\tilde{\Lambda}^j$ and $\tilde{\Lambda'}^j$ as
$M_j=\big(\del(\widetilde{\Lambda}^j_{X,Y})\big)_{X,Y\in\F_2}$ and
$M'_j=\big(\del(\widetilde{\Lambda'}^j_{X,Y})\big)_{X,Y\in\F_2}$, respectively.
Hence $M_1=\big(\begin{smallmatrix}\infty&2\\3&3\end{smallmatrix}\big)$ and
$M'_1=\big(\begin{smallmatrix}\infty&4\\1&3\end{smallmatrix}\big)$.
It is easy to see by induction that
\[
  M_j=\begin{pmatrix}5\frac{j}{2}&5\frac{j}{2}\\5\frac{j}{2}+1&5\frac{j}{2}\end{pmatrix}\text{ and }
  M'_j=\begin{pmatrix}5\frac{j}{2}&5\frac{j}{2}+2\\5\frac{j}{2}-1&5\frac{j}{2}\end{pmatrix}\text{ for $j$ even}
\]
and
\[
  M_j=\begin{pmatrix}5\frac{j-1}{2}+3&5\frac{j-1}{2}+2\\5\frac{j-1}{2}+3&5\frac{j-1}{2}+3\end{pmatrix}\text{ and }
  M'_j=\begin{pmatrix}5\frac{j-1}{2}+3&5\frac{j-1}{2}+4\\5\frac{j-1}{2}+1&5\frac{j-1}{2}+3\end{pmatrix}\text{ for $j\geq3$ odd.}
\]
With the aid of Proposition~\ref{P-WAMdistparam}(a) and~(c) this shows that the codes have different active column distances for
even~$j\in\N_0$ and different active segment distances for all $j\in\N_0$.
\end{exa}

The following two examples show that even if a strong isometry preserves all distance parameters it need not be a monomial equivalence.
More precisely, the first example shows that strongly isometric codes that share all distance parameters introduced in
Definition~\ref{D-distparam} need not have the same WAM, and thus need not be ME.
This also tells us that the WAM contains much more detailed information about the code than the list of distance parameters.
The second example even shows that strong isometries that do preserve the WAM need not be monomial equivalences.

\begin{exa}\label{E-StrIsoDistanceParam}
Let $\F=\F_2$ and
\[
  G=\begin{pmatrix}1&1&z&1&1&1&1\\1+z&z&0&0&1&1&1\\1&0&0&0&0&0&0\end{pmatrix},\
  G'=\begin{pmatrix}1&1&1&1&1&z&1&\\1+z&z&0&0&1&z&1\\1&0&0&0&0&0&0\end{pmatrix}
  \in\F[z]^{3\times7}.
\]
Obviously, the matrices are \zME\ and reduced with the same row degrees.
Hence they are strongly isometric by Remark~\ref{R-ME}(3).
As a consequence, the two codes $\cC=\im G$ and $\cC'=\im G'\subseteq\F[z]^7$ are strongly isometric and thus,
due to Corollary~\ref{C-atomic}, share the same weight enumerator, extended row distances and active burst distances,
and~$G$ and~$G'$ have the same active row distances.
Furthermore, for $u=(0,0,1)$ we have $uG=uG'=(1,0,0,0,0,0,0)$ and therefore Definition~\ref {D-distparam}(a) shows that
both codes have all column distances equal to $1=\dist(\cC)=\dist(\cC')$.
As for the remaining distance parameters consider the WAM's associated with~$G$ and~$G'$.
They are given by
\begin{equation}\label{e-Lambda12}
   \Lambda=\!\!{\footnotesize \begin{pmatrix}1+W\!\!&\!\!W^3+W^4\!\!&\!\!W^5+W^6\!\!&\!\!W^2+W^3\\
                          W+W^2\!\!&\!\!W^4+W^5\!\!&\!\!W^4+W^5\!\!&\!\!W+W^2\\
                          W+W^2\!\!&\!\!W^4+W^5\!\!&\!\!W^6+W^7\!\!&\!\!W^3+W^4\\
                          W^2+W^3\!\!&\!\!W^5+W^6\!\!&\!\!W^5+W^6\!\!&\!\!W^2+W^3\end{pmatrix}},\;
   \Lambda'=\!\!{\footnotesize\begin{pmatrix}1+W\!\!&\!\!W^2+W^3\!\!&\!\!W^5+W^6\!\!&\!\!W^3+W^4\\
                          W^2+W^3\!\!&\!\!W^4+W^5\!\!&\!\!W^5+W^6\!\!&\!\!W^3+W^4\\
                          W+W^2\!\!&\!\!W^3+W^4\!\!&\!\!W^6+W^7\!\!&\!\!W^4+W^5\\
                          W+W^2\!\!&\!\!W^3+W^4\!\!&\!\!W^4+W^5\!\!&\!\!W^2+W^3\end{pmatrix}},
\end{equation}
where we order the states in~$\F^2$ as
\begin{equation}\label{e-states}
   X_1=(0,0),\ X_2=(0,1),\ X_3=(1,0),\ X_4=(1,1).
\end{equation}
Thus,~$\Lambda_{i,j}=\Lambda_{X_i,X_j}$ for all $i,\,j=1,\ldots,4$ in the sense of Definition~\ref{D-Lambda}
and likewise for~$\Lambda'$.
Since the entry $W^5+W^6$ appears three times in~$\Lambda$ but only twice in~$\Lambda'$
we conclude that $\WAM(\cC)\not=\WAM(\cC')$, see Theorem~\ref{T-Lambdaunique}, and thus the codes are not ME due to Remark~\ref{R-ME}(2).
Using Proposition~\ref{P-WAMdistparam}(c) and the same methods as in the previous example one can show by lengthy,
but straightforward computations that the two codes share the same active column distances
and active segment distances.
The details are given in the appendix.
\end{exa}

\begin{exa}\label{E-StrIsoMWAM}
Let $\F=\F_2$ and consider the two matrices
\[
   G=\begin{pmatrix}z+1&1&z&0&0&0&0&1\\0&0&1&1&0&0&z&1\\1&1&1&1&1&1&0&0\end{pmatrix},\quad
   \Gbar=\begin{pmatrix}z+1&1&z&0&0&0&0&z\\0&0&1&1&0&0&1&z\\1&1&1&1&1&1&0&0\end{pmatrix}.
\]
It is easy to see that both matrices are basic and reduced and obviously have the same row degrees.
Moreover,~$G$ and~$\Gbar$ are obviously \zME\ and therefore the codes $\cC=\im G$ and $\cCbar=\im\Gbar\subseteq\F[z]^8$
are strongly isometric, see Remark~\ref{R-ME}(3).
It is straightforward to show that the WAM's $\Lambda:=\Lambda(G)$ and $\bar{\Lambda}:=\Lambda(\Gbar)$ are given by
\[
  \Lambda\!=\!\!{\footnotesize\begin{pmatrix}1+W^6\!&\!W^3+W^5\!&\!W^3+W^5\!&\!W^2+W^4\\W+W^7\!&\!W^4+W^6\!&\!W^4+W^6\!&\!W^3+W^5\\
                         W^2+W^4\!&\!W^3+W^5\!&\!W^3+W^5\!&\!W^2+W^4\\W^3+W^5\!&\!W^4+W^6\!&\!W^4+W^6\!&\!W^3+W^5\end{pmatrix}}\!,\
  \bar{\Lambda}\!=\!\!{\footnotesize\begin{pmatrix}1+W^6\!&\!W^3+W^5\!&\!W^2+W^4\!&\!W^3+W^5\\W+W^7\!&\!W^4+W^6\!&\!W^3+W^5\!&\!W^4+W^6\\
                         W^3+W^5\!&\!W^4+W^6\!&\!W^3+W^5\!&\!W^4+W^6\\W^2+W^4\!&\!W^3+W^5\!&\!W^2+W^4\!&\!W^3+W^5\end{pmatrix}},
\]
respectively, where the states in $\F^2$ are ordered as in\eqnref{e-states}.
It is immediate to see that the WAM's are related as $\bar{\Lambda}=P\Lambda P^{-1}$, where~$P$ is the permutation matrix
\[
   P=\begin{pmatrix}1&0&0&0\\0&1&0&0\\0&0&0&1\\0&0&1&0\end{pmatrix}.
\]
That means, reordering the states in\eqnref{e-states} as $\bar{X}_1=(0,0),\,\bar{X}_2=(0,1),\,\bar{X}_3=(1,1),\,\bar{X}_4=(1,0)$
the WAM associated with~$G$ takes the form~$\bar{\Lambda}$.
Obviously, this reordering can be achieved by the state space isomorphism $T=\big(\begin{smallmatrix}1&1\\0&1\end{smallmatrix}\big)$, that is,
$X_iT=\bar{X}_i$ for $i=1,\ldots,4$.
Hence $\Lambda_{XT,YT}=\bar{\Lambda}_{X,Y}$ for all $X,\,Y\in\F^2$ and
Theorem~\ref{T-Lambdaunique} shows that $\WAM(\cC)=\WAM(\cCbar)$.
As a consequence, the codes also share the same weight enumerator and all distance parameters and~$G$ and~$\Gbar$ share the same active row distances, see Proposition~\ref{P-WAMdistparam} and Corollary~\ref{C-atomic}.
But the codes are not ME, as we will show by contradiction.
Suppose~$\cC$ and~$\cCbar$ are ME.
Then there exists a matrix
$U\in GL_3(\F[z])$ such that~$UG$ and~$\Gbar$ are ME.
That means in particular, that~$UG$ has to be reduced with row degrees $1,\,1,\,0$.
Hence, according to\eqnref{e-reduced} the matrix~$U$ has to satisfy
\[
   U=\begin{pmatrix}V&u\\0&1\end{pmatrix}, \text{ where }V\in GL_2(\F)\text{ and }u\in\F[z]^{2\times1}\text{ such that }\deg u\leq 1.
\]
But then the third rows of~$G$ and~$UG$ coincide and the last two columns of~$G$ and~$\Gbar$ show that
$V=T=\big(\begin{smallmatrix}1&1\\0&1\end{smallmatrix}\big)$.
Using all possibilities for $u\in\{a+bz\mid a,\,b\in\F^{2\times1}\}$ results in~$16$ options for the matrix~$U$.
Checking all those options shows that none of them leads to a matrix~$UG$ that is ME to~$\Gbar$.
\end{exa}

In Theorem~\ref{P-StrIsoSameWAM} we will see that strongly isometric reduced encoders~$G$ and~$\Gbar$ with the same WAM,
that is, $\Lambda(G)=\Lambda(\Gbar)$, are ME.
Therefore, the example above had to be constructed using a non-trivial state space isomorphism $T\in GL_2(\F)$.

The last example of this section shows that if~$\cC$ and~$\cCbar$ are isometric, but not {\sl strongly\/} isometric, then
the corresponding consequence given in Proposition~\ref{P-StrIso} is not true anymore.
More precisely, in that case there need not exist reduced encoders~$G$ and~$\Gbar$ of~$\cC$ and~$\cCbar$, respectively, such that
$\wt(uG)=\wt(u\Gbar)$ for all $u\in\F[z]^k$.

\begin{exa}\label{E-nonreduced}\
Let $\F=\F_2$ and $\cC=\im G$ and $\cCbar=\im \Gbar\subseteq\F[z]^3$, where
\[
   G=\begin{pmatrix}z^2+z+1&1&0\\z^2&z+1&z^2\end{pmatrix},\quad
   \Gbar=\begin{pmatrix}z^3+z^2+z&1&0\\z^3&z+1&1\end{pmatrix}.
\]
Both matrices are basic, and~$G$ is reduced whereas~$\Gbar$ is not.
The matrices are \zME\ and the mapping $uG\longmapsto u\Gbar$ is a (non degree-preserving) isometry
between~$\cC$ and~$\cCbar$.
A reduced encoder of~$\cCbar$ is given by
\[
   G'=\begin{pmatrix}z^2+z&z&1\\z^2&z^2+z+1&z+1\end{pmatrix},
\]
both codes have degree~$4$ and Forney indices $2,\,2$.
In order to show that there is no isometry of the form $uG\longmapsto u\tilde{G}$
with reduced encoders~$G$ and~$\tilde{G}$ of~$\cC$ and~$\cCbar$, respectively,
we only need to consider the reduced encoders of~$\cCbar$ where both rows have weight~$4$.
Since each reduced encoder of~$\cCbar$ is of the form $UG'$ where $U\in GL_2(\F)$, see\eqnref{e-reduced}, the only
such encoders are
\[
  \tilde{G}_1:=\begin{pmatrix}1&0\\1&1\end{pmatrix}G'=\begin{pmatrix}z^2+z&z&1\\z&z^2+1&z\end{pmatrix},\
  \tilde{G}_2:=\begin{pmatrix}1&1\\1&0\end{pmatrix}G'=\begin{pmatrix}z&z^2+1&z\\z^2+z&z&1\end{pmatrix}.
\]
Using now $u=(z+1,0)$ we compute
$\wt(uG)=4$, $\wt(u\tilde{G}_1)=6$, and $\wt(u\tilde{G}_2)=8$.
All this shows that there is no isometry of the form $uG\longmapsto u\tilde{G}$ between~$\cC$ and~$\cCbar$ where both
encoders are reduced.
Using Proposition~\ref{P-StrIso} all this shows that the codes are not strongly isometric and
do not possess isometric reduced encoders.
\end{exa}

\section{A MacWilliams Equivalence Theorem for Isometry and \zME}\label{SS-MacWE}
\setcounter{equation}{0}

The main result of this section is a MacWilliams Equivalence Theorem for convolutional codes.
It states that isometric codes are $z$-monomially equivalent.
After proving this result we will turn to some consequences about strongly isometric codes.

\begin{theo}[MacWilliams Equivalence Theorem]\label{T-IsozME}
Let~$G,\,\Gbar\in\F[z]^{k\times n}$ be two matrices, not necessarily of rank~$k$, satisfying
$\wt(uG)=\wt(u\Gbar)$ for all $u\in\F[z]^k$.
Then $G$ and~$\Gbar$ are \zME.
As a consequence, for any two codes $\cC,\,\cCbar\subseteq\F[z]^n$ one has
\[
   \cC,\,\cCbar \text{ are isometric}\Longleftrightarrow
   \cC,\,\cCbar\text{ are \zME.}
\]
\end{theo}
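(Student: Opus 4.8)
The plan is to reduce the convolutional statement to the classical block-code MacWilliams Extension Theorem (Theorem~\ref{T-MacWE}) by a column-by-column argument together with a careful bookkeeping of $z$-powers. First, observe that the consequence about codes follows from the statement about matrices: if $\cC$ and $\cCbar$ are isometric, pick a basic encoder $G$ of $\cC$, use Proposition~\ref{P-StrIso}(a) to obtain an encoder $\Gbar$ of $\cCbar$ with $\varphi(uG)=u\Gbar$, so $\wt(uG)=\wt(u\Gbar)$ for all $u\in\F[z]^k$; then the matrix statement gives $\Gbar=GM$ with $M\in\cMzn$, hence $\cCbar=\{vM\mid v\in\cC\}$, i.e.\ \zME. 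Conversely \zME\ implies isometric, as already noted after Definition~\ref{D-ME}. So everything rests on the matrix statement.

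For the matrix statement, the key idea is to strip off $z$-powers column by column and then apply the constant case. For each column index $j\in\{1,\dots,n\}$ let $G^{(j)},\Gbar^{(j)}\in\F[z]^k$ be the $j$th columns of $G,\Gbar$. The delay of a polynomial vector is well-defined (as in Proposition~\ref{P-DelayPres}), so set $\mu_j:=\del(G^{(j)})$ and $\bar\mu_j:=\del(\Gbar^{(j)})$, writing $G^{(j)}=z^{\mu_j}H^{(j)}$, $\Gbar^{(j)}=z^{\bar\mu_j}\bar H^{(j)}$ with $\del(H^{(j)})=\del(\bar H^{(j)})=0$; for a zero column put $\mu_j=\infty$ and discard it (a zero column contributes nothing to any weight, and one checks it must then be zero in both matrices by testing $u=e_i$). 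The first step is to show $H^{(j)}$ and $\bar H^{(j)}$ are actually \emph{constant} vectors in $\F^k$, i.e.\ that $G$ and $\Gbar$ are, up to the diagonal $z$-monomial rescaling $\mathrm{diag}(z^{\mu_1},\dots)$ and $\mathrm{diag}(z^{\bar\mu_1},\dots)$, constant matrices. For this I would test the hypothesis on inputs of the form $u=\alpha\in\F^k$ (constant messages): then $\wt(uG)=\sum_j\wt(\alpha\cdot G^{(j)})=\sum_j\wt(\alpha\cdot H^{(j)})$ counted with the appropriate $z$-shift, and similarly for $\Gbar$; since the shifts in different columns are different one can separate contributions by also testing $u=\alpha z^t$ and comparing weight distributions across time slots, forcing each $\wt(\alpha\cdot H^{(j)})$ pattern to match a $\wt(\beta\cdot\bar H^{(\sigma(j))})$ pattern for a suitable matching $\sigma$, and in particular forcing $H^{(j)}$ to have a single nonzero coefficient, hence to be constant. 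The cleanest way to organize this is: for a single index $i$, apply the hypothesis to $u=e_iz^0$; the set of codewords $\{$ multiples of row $i$ $\}$ is visible, and comparing the leading (delay-$0$) coefficients reduces to the constant MacWilliams theorem applied to the matrices $H:=(H^{(1)}|\cdots|H^{(n)})$ and $\bar H$, once we have matched up the column delays.

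Concretely, the second step is the \emph{matching of delays}: I claim the multisets $\{\mu_j\}$ and $\{\bar\mu_j\}$ coincide and there is a permutation $\sigma$ of $\{1,\dots,n\}$ with $\bar\mu_{\sigma(j)}=\mu_j$ and $\bar H^{(\sigma(j))}=c_j H^{(j)}$ for some $c_j\in\F^*$. To get this, apply Theorem~\ref{T-MacWE} to the constant matrices $H,\bar H\in\F^{k\times n}$ \emph{after} verifying $\wt(\alpha H)=\wt(\alpha\bar H)$ for all $\alpha\in\F^k$; this last identity I would extract from the convolutional hypothesis by looking only at the weight of the lowest-delay slab of $\alpha G$ versus $\alpha\Gbar$ when all column delays are normalized — i.e.\ by a substitution argument that isolates degree-zero behaviour. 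Theorem~\ref{T-MacWE} then yields a monomial matrix $N=P D_0\in\cM_n(\F)$ with $\bar H=HN$; the permutation $P$ is the $\sigma$ above. It remains to check that the permutation respects delays, i.e.\ $\bar\mu_{\sigma(j)}=\mu_j$: this follows because the isometry is delay-preserving in each coordinate, which one sees by feeding in $u=e_i z^0$ for each $i$ and comparing $\del$ of the $j$th coordinate of $e_iG$ with that of $e_i\Gbar$, using that these determine $\mu_j$ through the (basic, hence full-row-rank at $z=0$ after shift) structure. Assembling, $\Gbar = G\cdot\mathrm{diag}(z^{-\mu_1},\dots,z^{-\mu_n})\cdot N\cdot\mathrm{diag}(z^{\bar\mu_{\sigma(1)}},\dots) = G M$ where $M=\mathrm{diag}(z^{\mu_j^{-1}})\,PD_0\,\mathrm{diag}(z^{\bar\mu})$ simplifies, thanks to $\bar\mu_{\sigma(j)}=\mu_j$, to a genuine element of $\cMzn$ (the $z$-exponents are integers, possibly of either sign, which is exactly what $\cMzn$ allows).

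The main obstacle I anticipate is the \textbf{delay-matching / separation step}: different columns carry different $z$-shifts $\mu_j$, and from a single scalar weight $\wt(uG)=\sum_j\wt$(slab$_j$) one cannot immediately read off the individual column contributions — they are summed together. The trick will be to exploit that $\wt$ is additive over disjoint time slots and to probe with many inputs ($u=e_i$, $u=e_iz^t$, and more generally $u=\alpha z^t$) so that one recovers, slot by slot in the graded pieces of $\F[z]^n$, enough linear-algebraic data to invoke the constant MacWilliams theorem and to pin down the permutation \emph{together with} its action on delays. Handling the case where some $\mu_j<\bar\mu_{\sigma(j)}$ a priori (before one proves equality) requires care; here the delay-preserving property from Proposition~\ref{P-DelayPres}, applied not to whole codewords but coordinatewise after restricting attention to the submodule generated by a single row, is the tool that closes the gap. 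The remaining manipulations (products of permutation and diagonal matrices, checking $M\in\cMzn$) are routine.
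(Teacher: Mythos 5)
Your reduction of the code statement to the matrix statement via Proposition~\ref{P-StrIso}(a) is fine and matches the paper. But the core of your plan for the matrix statement rests on a false intermediate claim: you propose to show that, after factoring out the delay $z^{\mu_j}$ from each column, the normalized columns $H^{(j)}$ are \emph{constant} vectors in $\F^k$, i.e.\ that $G$ is a constant matrix times a diagonal $z$-monomial matrix. This is not true and is not implied by the hypothesis. Take $G=\Gbar=(1,\,1+z)\in\F_2[z]^{1\times2}$: the hypothesis $\wt(uG)=\wt(u\Gbar)$ holds trivially and the conclusion (\zME\ via the identity matrix) holds, yet the second column $1+z$ has delay $0$ and is not a monomial times a constant vector. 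The theorem asserts only that the columns of $\Gbar$ are, up to permutation, the columns of $G$ rescaled by $\alpha z^s$; the columns of $G$ themselves may be arbitrary polynomial vectors. Since your subsequent steps (applying Theorem~\ref{T-MacWE} to the constant matrices $H,\bar H$, matching the delays $\mu_j$ and $\bar\mu_{\sigma(j)}$, assembling $M$) all presuppose this constancy, the argument collapses at the first step.

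The paper's proof avoids this trap by a different device. After rescaling so that every nonzero column of $G$ and $\Gbar$ has nonzero constant term (so it suffices to prove ME), it forms the sliding block-Toeplitz generator matrix $S_\nu(G)\in\F^{k(\nu+1)\times n(2\nu+1)}$, where $\nu$ is the maximal degree occurring in $G$ and $\Gbar$; the hypothesis restricted to messages of degree at most $\nu$ says exactly that $S_\nu(G)$ and $S_\nu(\Gbar)$ are weight-equivalent as constant matrices, so Theorem~\ref{T-MacWE} makes them ME. One then recognizes the columns of $G$ of degree exactly $\nu$ inside $S_\nu(G)$ by their compound coefficient vectors $(a_\nu,\dots,a_1,a_0)\T$ with $a_0\ne0\ne a_\nu$, concludes that the degree-$\nu$ column blocks $K_1$ and $\bar K_1$ are ME, subtracts their weight contribution from the hypothesis, and inducts on $\nu$ for the remaining columns. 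The sliding-matrix construction is precisely the rigorous replacement for the ``probe with $u=\alpha z^t$ and separate time slots'' heuristic you sketch, and it never requires the individual columns to be monomial multiples of constant vectors.
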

It should be noted that, as opposed to block codes, an isometry between codes $\cC,\,\cCbar\in\F[z]^n$ does not
necessarily extend to an isometry on $\F[z]^n$.
This is due to the fact that $z$-monomial matrices are not necessarily polynomial.
We will come back to this when discussing convolutional codes over larger rings in Section~\ref{SS-Gen}.

\begin{proof}
The consequence follows directly from the first part with the aid of Proposition~\ref{P-StrIso}(a).
As for the first statement, let us start with applying a rescaling by $z$-monomial matrices such that
all nonzero columns of~$G$ and~$\Gbar$ have a nonzero constant term.
Then it remains to show that~$G$ and~$\Gbar$ are ME.
Write
\[
   G=\sum_{t=0}^{\nu} G_tz^t,\quad \Gbar=\sum_{t=0}^{\nu}\Gbar_t z^t,
\]
where $G_t,\,\Gbar_t\in\F^{k\times n}$ and~$\nu$ is the maximal degree appearing in~$G$ and~$\Gbar$.
We will proceed by induction on~$\nu$.
For~$\nu=0$ the statement is the MacWilliams Equivalence Theorem~\ref{T-MacWE} for block codes.
Thus let us assume~$\nu>0$.
After a column permutation we may assume
\begin{equation}\label{e-Gcolumns}
   G=\begin{pmatrix}K_1& K_2&0\end{pmatrix},\ \Gbar=\begin{pmatrix}\bar{K}_1&\bar{K}_2&0\end{pmatrix},
\end{equation}
where the matrix $K_1\in\F[z]^{k\times l}$ contains exactly the columns of~$G$ of degree~$\nu$ while
$K_2\in\F[z]^{k\times m}$ consists of the remaining nonzero columns of~$G$ and the last $n-l-m$ columns are zero.
Notice that $l,\,m$, or $n-l-m$ might be zero.
Likewise $\Gbar$ is partitioned such that $\bar{K}_1\in\F[z]^{k\times\bar{l}}$ contains all columns of~$\Gbar$
of degree~$\nu$ and $\bar{K}_2\in\F[z]^{k\times\bar{m}}$ contains the remaining nonzero columns of~$\Gbar$.
Define the sliding generator matrix
\[
  S_{\nu}(G)=\begin{pmatrix} G_0&G_1&\ldots&G_{\nu}&      &  & \\
                                &G_0&\ldots&G_{\nu-1}&G_{\nu}&  & \\
                                &   &\ddots&\vdots&\vdots&\ddots& \\
                                &   &      & G_0  &G_1   &\ldots&G_{\nu} \end{pmatrix}
  \in\F^{k(\nu+1)\times n(2\nu+1)}.
\]
For any message $u=\sum_{t=0}^\nu u_t z^t\in\F[z]^k$ of degree at most~$\nu$ we have $uG=\sum_{t=0}^{2\nu}v_tz^t$, where
\[
  (v_0,\ldots,v_{2\nu})=(u_0,\ldots,u_{\nu})S_{\nu}(G).
\]
Defining the sliding generator matrix $S_{\nu}(\Gbar)$ analogously, the assumption $\wt(uG)=\wt(u\Gbar)$ yields
\[
   \wt\big(\hat{u}S_{\nu}(G)\big)=\wt\big(\hat{u}S_{\nu}(\Gbar)\big) \text{ for all }\hat{u}\in\F^{k(\nu+1)}.
\]
Hence the MacWilliams Equivalence Theorem~\ref{T-MacWE} for block codes implies that
$S_{\nu}(G)$ and $S_{\nu}(\Gbar)$ are ME.
In other words, the columns of these two matrices are identical up to permutation and rescaling by a nonzero constant factor.
The specification of the columns in\eqnref{e-Gcolumns}tells us that $S_{\nu}(G)$ contains exactly $(2\nu+1)(n-l-m)$ zero columns and
exactly~$l$ columns of the form
\begin{equation}\label{e-Glongcolumns}
   \begin{pmatrix}a_{\nu}&a_{\nu-1}& \ldots&a_1&a_0\end{pmatrix}\T, \text{ where }
   a_0,\ldots,a_{\nu}\in\F^k\text{ and }a_0\not=0\not=a_{\nu}.
\end{equation}
These are the compound coefficient vectors of the columns of~$K_1$.
Likewise, the matrix $S_{\nu}(\Gbar)$ contains exactly $(2\nu+1)(n-\bar{l}-\bar{m})$ zero columns and~$\bar{l}$ columns as in\eqnref{e-Glongcolumns}.
But then the monomial equivalence of $S_{\nu}(G)$ and $S_{\nu}(\Gbar)$ yields $l=\bar{l}$ and $m=\bar{m}$ and
the submatrices consisting of the~$l$ columns in~$S_{\nu}(G)$ and $S_{\nu}(\Gbar)$ of the form\eqnref{e-Glongcolumns} are ME.
This shows that $K_1$ and $\bar{K}_1$ are ME and thus $\wt(uK_1)=\wt(u\bar{K}_1)$ for all $u\in\F[z]^k$.
Since $\wt(uG)=\wt(uK_1)+\wt(uK_2)$ and likewise for~$\Gbar$ the assumption on~$G,\,\Gbar$
implies $\wt(uK_2)=\wt(u\bar{K}_2)$ for all $u\in\F[z]^k$.
Now induction on~$\nu$ establishes that~$K_2$ and~$\bar{K}_2$ are ME as well.
This proves the desired result.
\end{proof}

Notice that in the proof above we actually only used that the map $uG\longmapsto u\Gbar$ is
weight-preserving for all $u\in\F[z]^k$ of degree not exceeding~$\nu$.
An immediate consequence of the previous result is obtained when restricting to codes that are delay-free in every
component in the following sense.

\begin{cor}\label{C-IsoME}
Let $G,\,\Gbar\in\F[z]^{k\times n}$ be encoders that are delay-free in every component, that is, each nonzero column
has a nonzero constant term.
Then~$G$ and~$\Gbar$ are isometric if and only if~$G$ and~$\Gbar$ are ME.
As a consequence, if $\cC,\,\cCbar\subseteq\F[z]^n$ are codes such that one, hence any, encoder matrix is delay-free in
every component, then
\[
  \cC,\ \cCbar \text{ isometric}\Longleftrightarrow \cC,\ \cCbar \text{ ME}.
\]
\end{cor}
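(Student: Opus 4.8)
The plan is to deduce Corollary~\ref{C-IsoME} directly from the MacWilliams Equivalence Theorem~\ref{T-IsozME}. The point is simply that under the delay-free hypothesis, $z$-monomial equivalence collapses to ordinary monomial equivalence.

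First I would prove the statement for encoder matrices. The implication from ME to isometric is already noted (Remark~\ref{R-ME}(3) and the text after Definition~\ref{D-ME}), so only the forward direction needs work. Assume $G,\,\Gbar$ are isometric, i.e.\ $\wt(uG)=\wt(u\Gbar)$ for all $u\in\F[z]^k$. By Theorem~\ref{T-IsozME} there is a $z$-monomial matrix $M=PD\in\cMzn$, with $P$ a permutation and $D=\mbox{diag}(\alpha_1z^{\mu_1},\ldots,\alpha_nz^{\mu_n})$, such that $\Gbar=GM$. The task is to show all exponents $\mu_j$ can be taken to be zero. Here I would use the delay-free hypothesis on both $G$ and $\Gbar$: for each $j$, the $j$th column of $\Gbar$ equals $\alpha_j z^{\mu_j}$ times the $\sigma(j)$th column of $G$ for the underlying permutation $\sigma$. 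If that column of $G$ is nonzero, then it has a nonzero constant term; likewise the corresponding column of $\Gbar$ has a nonzero constant term, which forces $\mu_j=0$. If that column of $G$ is zero, then the corresponding column of $\Gbar$ is zero as well, and we may simply redefine $\mu_j:=0$ without changing $GM$. Hence $M$ may be replaced by a genuine monomial matrix in $\cM_n(\F)$, and $\Gbar=GM$ shows $G$ and $\Gbar$ are ME.

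For the consequence about codes, I would first observe that the delay-free property is indeed independent of the chosen encoder: any two encoders of $\cC$ are related by a unimodular $U\in GL_k(\F[z])$, and one checks that $G$ is delay-free in every component if and only if $\cC$ contains no codeword of the form $z w$ with $w\in\F[z]^n$ having a nonzero entry in a coordinate where every codeword of $\cC$ is divisible by $z$ --- more simply, delay-freeness of a component is equivalent to the projection of $\cC$ onto that coordinate not being contained in $z\F[z]$, which is a property of $\cC$, not of $G$. Granting this, suppose $\cC,\,\cCbar$ are isometric via $\varphi$. By Proposition~\ref{P-StrIso}(a) there is an encoder $\Gbar$ of $\cCbar$ with $\varphi(uG)=u\Gbar$ for all $u$, so $G$ and $\Gbar$ are isometric encoders; since $G$ is delay-free in every component and $\cCbar$ is isometric to $\cC$, the encoder $\Gbar$ is delay-free in every component as well (using $\del(u\Gbar)$-preservation from Proposition~\ref{P-DelayPres} applied coordinatewise, or simply the encoder-level argument above). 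By the encoder case, $G$ and $\Gbar$ are ME, hence $\cC$ and $\cCbar$ are ME by Remark~\ref{R-ME}(1). The converse is Remark~\ref{R-ME}(3) together with the trivial fact that ME implies isometric.

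The only mild obstacle is the bookkeeping in the first step: verifying that the exponents in the $z$-monomial matrix genuinely vanish column by column, and being careful about zero columns, where the exponent is irrelevant and can be set to zero by hand. Everything else is a direct citation of Theorem~\ref{T-IsozME}, Proposition~\ref{P-StrIso}(a), and Remark~\ref{R-ME}.
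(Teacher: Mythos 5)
Your proof is correct and follows essentially the same route as the paper, which presents the corollary as an immediate consequence of Theorem~\ref{T-IsozME} (whose own proof begins by normalizing to the delay-free case and then establishing ME directly). Your variant --- applying the theorem as a black box and then checking that the exponents $\mu_j$ in the resulting $z$-monomial matrix must vanish on nonzero columns and may be set to zero on zero columns --- makes the word ``immediate'' precise, and your observation that delay-freeness in each component is an encoder-independent property of the code is a detail the paper leaves implicit.
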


\begin{exa}\label{E-notME}
The codes given by the encoders
\[
   G=\begin{pmatrix}1&1&z&z&0&0\\1&1&1&1&1&1\end{pmatrix},\quad
   \Gbar=\begin{pmatrix}z+1&1&z&0&0&0\\1&1&1&1&1&1\end{pmatrix}\in\F_2[z]^{2\times 6}
\]
are not isometric.
Indeed, it is straightforward to show that the codes are not ME,
see \cite[Exa~3.8(b)]{GS08}, and thus we may apply Corollary~\ref{C-IsoME}.
It is worth noting that the codes have the same WAM, see, again, \cite[Exa~3.8(b)]{GS08}.
\end{exa}

Using Corollary~\ref{C-IsoME} we can provide an example showing that the duals of strongly isometric
codes need not even be isometric.
It also shows that the property of being \zME\ is not preserved under taking duals.

\begin{exa}\label{E-StrIsoDual}
Let $\F=\F_2$ and $G=(1,\,z,\,1+z)$ and $\Gbar=(z,\,z,\,1+z)$.
Then the codes $\cC=\im G$ and $\cCbar=\im\Gbar$ are obviously strongly isometric (but not ME).
The dual codes are given by $\cC^{\perp}=\im H$ and $\cCbar^{\perp}=\im\bar{H}$, where
\begin{equation}\label{e-HHbar}
  H=\begin{pmatrix} 1&1&1\\z&1&0\end{pmatrix},\quad
  \bar{H}=\begin{pmatrix}1&1&0\\z&1&z\end{pmatrix}.
\end{equation}
The codes ~$\cC^{\perp}$ and $\cCbar^{\perp}$ are not isometric because if they were
then $\cC^{\perp}$ and
\[
  \hat{\cC}:=\im\begin{pmatrix}1&1&0\\z&1&1\end{pmatrix}
\]
would have to be ME due to Corollary~\ref{C-IsoME}.
But that is impossible because their only nonzero constant codewords have weight~$3$ and~$2$, respectively.
As a consequence,~$\cC^{\perp}$ and~$\cCbar^{\perp}$ are not isometric.
Along the same line of arguments one can show that the duals of the codes in Example~\ref{E-StrIsoMWAM} are not
isometric even though the primary codes are strongly isometric and even have the same WAM.
These examples are due to the fact that \zME\ is not a group action on the set of codes in~$\F[z]^n$.
We will come back to this in Section~\ref{SS-Gen} when discussing convolutional codes over rings where \zME\ is a group action.
\end{exa}

Finally we want to turn to codes that are strongly isometric and share the same WAM.
Recall from Example~\ref{E-StrIsoMWAM} that such codes are, in general, not ME.
This is due to the fact that the WAM of a code is an equivalence class rather than a single matrix.
Indeed, in that example we had $\Lambda(G)\not=\Lambda(\Gbar)$, but $\WAM(\cC)=\WAM(\cCbar)$.
If, however, we restrict to the WAM's of the encoders themselves then we obtain the following stronger result.

\begin{theo}\label{P-StrIsoSameWAM}
Let $G, \Gbar\in\F[z]^{k\times n}$ be two reduced and strongly isometric encoders of degree~$\delta$
and suppose that $\Lambda(G)_{X,Y}=\Lambda(\Gbar)_{X,Y}$ for all $X,Y\in\F^{\delta}$ for the
WAM's associated with~$G$ and~$\Gbar$.
Then~$G$ and~$\Gbar$ are ME.
\end{theo}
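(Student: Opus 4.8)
The plan is to reduce the assertion to the classical MacWilliams Equivalence Theorem~\ref{T-MacWE} by reading the two weight adjacency matrices coefficient by coefficient. Since the map $uG\longmapsto u\Gbar$ is degree-preserving, taking $u=e_i$ gives $\deg\bar g_i=\deg g_i$ for each~$i$; writing $\nu_1,\ldots,\nu_k$ for the Forney indices of~$G$ and $\delta=\sum_i\nu_i=\deg(G)$, Proposition~\ref{P-StrIso}(b) then shows that~$\Gbar$ is reduced with the same row degrees in the same order. Consequently the CCF's $(A,B,C,D)$ of~$G$ and $(A,B,\bar C,\bar D)$ of~$\Gbar$ occurring in Definition~\ref{D-Lambda} share the matrices~$A$ and~$B$ (which depend only on the $\nu_i$; see Proposition~\ref{P-CCF}). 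Write $g_{i,\ell}\in\F^n$ for the coefficient of $z^\ell$ in the $i$-th row of~$G$, and similarly $\bar g_{i,\ell}$ for~$\Gbar$, put $P=\{i\mid\nu_i>0\}$ and $Z=\{i\mid\nu_i=0\}$, and let $G_0,\,\Gbar_0$ be the constant submatrices of~$G,\,\Gbar$ with rows indexed by~$Z$, so $\im G_0$ is the block part of the code. Note that only the $P$-rows of an input feed into~$B$, and the corresponding submatrix $B^{(1)}$ of~$B$ has full row rank.

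First I would put the hypothesis $\Lambda(G)_{X,Y}=\Lambda(\Gbar)_{X,Y}$ into coset form. For a reachable state pair $(X,Y)$ the input realizing the transition is determined only up to the kernel of~$B$, which contributes an arbitrary summand $u^{(0)}G_0$ to the output, so reading $XC+uD$ off the block structure of the CCF yields
\[
   \Lambda(G)_{X,Y}=\we(w_c+\im G_0),\qquad w_c:=\sum_{i\in P}\sum_{\ell=0}^{\nu_i}c_{i,\ell}\,g_{i,\ell},
\]
where the coefficient vector $c=(c_{i,\ell})\in\F^{\delta+|P|}$ is a linear image of the state together with the uniquely determined part of the input, and ranges over all of $\F^{\delta+|P|}$ as $(X,Y)$ runs through the reachable pairs. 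Since~$B$ is the same for~$\Gbar$, the same vector~$c$ governs $\Lambda(\Gbar)_{X,Y}=\we(\bar w_c+\im\Gbar_0)$ with $\bar w_c=\sum_{i\in P}\sum_\ell c_{i,\ell}\bar g_{i,\ell}$. Hence the hypothesis is equivalent to the identities $\we(w_c+\im G_0)=\we(\bar w_c+\im\Gbar_0)$ for all~$c$; the case $c=0$, i.e.\ $(X,Y)=(0,0)$, gives $\we(\im G_0)=\we(\im\Gbar_0)$, so $\rank G_0=\rank\Gbar_0$ and, by Theorem~\ref{T-MacWE}, $G_0$ and~$\Gbar_0$ are monomially equivalent.

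The crucial step is to upgrade these equalities of \emph{coset} weight enumerators to coordinatewise weight identities, and this is where the \emph{strong} isometry enters. For a fixed~$c$, put $m=\max_i\nu_i$ and feed in the polynomial message $u=\sum_{i\in P}\big(\sum_{\ell}c_{i,\ell}z^{m-\ell}\big)e_i+\sum_{i\in Z}c^{(0)}_iz^{m}e_i$. A short expansion shows that the $z^m$-coefficient of $uG$ is exactly $w_c+c^{(0)}G_0$, while every other coefficient of $uG$ depends on~$c$ alone, and the analogous statement holds for~$\Gbar$. Since $\wt(uG)=\wt(u\Gbar)$, cancelling the ($c^{(0)}$-independent) remaining contributions gives
\[
   \wt(w_c+c^{(0)}G_0)-\wt(\bar w_c+c^{(0)}\Gbar_0)=\Delta(c)\qquad\text{independently of }c^{(0)}.
\]
Summing the corresponding powers of~$W$ over $c^{(0)}\in\F^{|Z|}$ turns this into $q^{|Z|-\rank G_0}\we(w_c+\im G_0)=W^{\Delta(c)}\,q^{|Z|-\rank\Gbar_0}\we(\bar w_c+\im\Gbar_0)$; as $\rank G_0=\rank\Gbar_0$ and $\we(w_c+\im G_0)=\we(\bar w_c+\im\Gbar_0)$, this forces $W^{\Delta(c)}=1$, i.e.\ $\Delta(c)=0$. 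Therefore $\wt(w_c+c^{(0)}G_0)=\wt(\bar w_c+c^{(0)}\Gbar_0)$ for all~$c$ and all~$c^{(0)}$; when $Z=\emptyset$ this step is vacuous and one is simply re-deriving the mechanism behind Theorem~\ref{T-posFI}.

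Finally, the last conclusion is precisely that $\wt(x\hat G)=\wt(x\hat{\bar G})$ for every $x\in\F^{k+\delta}$, where $\hat G,\hat{\bar G}\in\F^{(k+\delta)\times n}$ are the constant matrices whose rows are the $g_{i,0}$ for $i\in Z$ together with the $g_{i,\ell}$ for $i\in P$ and $0\le\ell\le\nu_i$ (and the corresponding rows of~$\Gbar$). By the MacWilliams Equivalence Theorem~\ref{T-MacWE} there is a monomial matrix $N\in\cM_n(\F)$ with $\hat{\bar G}=\hat GN$; comparing rows yields $\bar g_{i,\ell}=g_{i,\ell}N$ for all the relevant indices, hence $\bar g_i=\sum_\ell\bar g_{i,\ell}z^\ell=\big(\sum_\ell g_{i,\ell}z^\ell\big)N=g_iN$ for every row, and therefore $\Gbar=GN$. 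Thus~$G$ and~$\Gbar$ are monomially equivalent. The step I expect to demand the most care is this passage from coset to coordinatewise information: one has to choose the auxiliary polynomial messages so that exactly one time-slice carries the free parameter~$c^{(0)}$, and then use that a finite multiset of nonnegative integers cannot coincide with a nontrivial translate of itself.
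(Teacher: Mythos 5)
Your proof is correct, but it takes a genuinely different route from the paper's. After the common first step (the strong isometry forces equal row degrees in the same order, hence common matrices $A,B$ in the two CCF's), the paper invokes Theorem~\ref{T-IsozME} to conclude that $G$ and $\Gbar$ are $z$-monomially equivalent, splits the \emph{columns} into those meeting the constant rows nontrivially and the rest, normalizes the first column block to be literally equal, and only then reads the required pointwise weight identity off the factored WAM entries before applying Theorem~\ref{T-MacWE} to the second column block. You never use Theorem~\ref{T-IsozME} and never split columns: you write each WAM entry as the coset weight enumerator $\we(w_c+\im G_0)$ of the block part, and you upgrade this coset-level information to the pointwise identity $\wt(w_c+c^{(0)}G_0)=\wt(\bar w_c+c^{(0)}\Gbar_0)$ by evaluating the encoder isometry on the tailored messages $u=\sum_{i\in P}\bigl(\sum_{\ell}c_{i,\ell}z^{m-\ell}\bigr)e_i+\sum_{i\in Z}c^{(0)}_iz^m e_i$, isolating the $z^m$-slice and eliminating the offset $\Delta(c)$ by averaging over $c^{(0)}$; a single application of Theorem~\ref{T-MacWE} to the full stacked coefficient matrix then finishes. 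Your route is self-contained modulo the block-code theorem, at the cost of the message construction and the translation-invariance argument; the paper's route avoids both but consumes Theorem~\ref{T-IsozME}. One harmless slip: from $\we(\im G_0)=\we(\im\Gbar_0)$ you may deduce $\rank G_0=\rank\Gbar_0$ (set $W=1$), but Theorem~\ref{T-MacWE} does not let you conclude from this equality alone that $G_0$ and $\Gbar_0$ are ME --- equal weight enumerators do not imply monomial equivalence for block codes of dimension greater than one, as the paper remarks after Theorem~\ref{T-posFI}. You never use that claim, and it is separately true (restrict the isometry to constant messages supported on $Z$), so nothing in your argument breaks.
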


The assumption $\Lambda(G)=\Lambda(\Gbar)$ expresses the fact that if we fix the same basis on the common state space
of~$\cC$ and~$\cCbar$ the associated minimal state space realizations share the same WAM.

\begin{proof}
By assumption~$G,\,\Gbar$ have the same row degrees in the same ordering.
If all row degrees are positive then we know from Theorem~\ref{T-posFI} that $G,\,\Gbar$ are ME.
Thus, let us assume that, for some $r<k$, the first~$r$ row degrees of~$G,\,\Gbar$ are positive while the last $k-r$
degrees are zero.
Theorem~\ref{T-IsozME} implies that the matrices~$G$ and~$\Gbar$ are \zME.
After applying a suitable column permutation we may assume without loss of generality that
\[
   G=\begin{pmatrix}G_1&G_2\\ D_1&0\end{pmatrix},\ \Gbar=\begin{pmatrix}\Gbar_1&\Gbar_2\\ \bar{D}_1&0\end{pmatrix}
\]
where $D_1\in\F^{(k-r)\times l},\, \bar{D}_1\in\F^{(k-r)\times\bar{l}}$  have no zero columns and the remaining
matrices are of suitable sizes with entries in $\F[z]$.
Then $G,\,\Gbar$ being \zME\ along with the fact that the last~$k-r$ row degrees of~$G$ and~$\Gbar$
are zero implies that $l=\bar{l}$ and that
\begin{equation}\label{e-ME1}
   \begin{pmatrix}G_1\\ D_1\end{pmatrix}\text{ and }\begin{pmatrix}\Gbar_1\\ \bar{D}_1\end{pmatrix}
   \text{ are ME.}
\end{equation}
Hence we may assume without loss of generality that $G_1=\Gbar_1$ and $D_1=\bar{D}_1$ and it remains to
show that~$G_2$ and~$\Gbar_2$ are ME.
One should bear in mind that $\Lambda(G)$ and $\Lambda(\Gbar)$ do not change after applying a monomial equivalence.
Let us turn to the CCF's of~$G$ and~$\Gbar$ and the associated WAM's.
By the very definition in Proposition~\ref{P-CCF} the CCF's of~$G$ and~$\Gbar$ are given by matrix quadruples
$(A,B,C,D)$ and $(\bar{A},\bar{B},\bar{C},\bar{D})$, where
$A=\bar{A},\, B=\bar{B}=\big(\begin{smallmatrix}\tilde{B}\\ 0\end{smallmatrix}\big)$ with
$\tilde{B}\in\F^{r\times \delta}$ being a matrix of full row rank and where the remaining matrices are of the form
\[
  C=\begin{pmatrix}C_1&C_2\end{pmatrix},\ D=\begin{pmatrix}G_{10}&G_{20}\\D_1&0\end{pmatrix},\
  \bar{C}=\begin{pmatrix}C_1&\bar{C}_2\end{pmatrix},\ \bar{D}=\begin{pmatrix}G_{10}&\Gbar_{20}\\D_1&0\end{pmatrix},
\]
where in each of those four matrices the first block contains~$l$ columns and the remaining matrices
are accordingly.
Notice that $G_{i0}$ and $\Gbar_{i0}$ are the constant coefficient matrices of~$G_i$ and~$\Gbar_i$, respectively, for $i=1,2$.
For brevity, write $\Lambda:=\Lambda(G)$ and $\bar{\Lambda}:=\Lambda(\Gbar)$.
By the very definition of the WAM, see~\ref{D-Lambda}, we have $\Lambda_{X,Y}=\bar{\Lambda}_{X,Y}\not=0$ iff $Y=XA+uB$ for
some $u\in\F^k$.
The latter is the case iff $Y=XA+\tilde{u}\tilde{B}$ for a (unique) $\tilde{u}\in\F^r$.
As a consequence, the nonzero entries of~$\Lambda$ are given by
\begin{align*}
 \Lambda_{X,XA+\tilde{u}\tilde{B}}
     &=\we\{XC+uD\mid u\in\F^k: XA+\tilde{u}\tilde{B}=XA+uB\}\\
     &=\we\{XC+\tilde{u}(G_{10},\,G_{20})+\hat{u}(D_1,0)\mid \hat{u}\in\F^{k-r}\}\\
     &=\we\big((XC_1+\tilde{u}G_{10},\, XC_2+\tilde{u}G_{20})+\im(D_1,\,0)\big)\\
     &=\we\big(XC_1+\tilde{u}G_{10}+\im D_1\big)W^{\text{wt}(XC_2+\tilde{u}G_{20})}
\end{align*}
and likewise $\bar{\Lambda}_{X,XA+\tilde{u}\tilde{B}}=\we(XC_1+\tilde{u}G_{10}+\im D_1)W^{\text{wt}(X\bar{C}_2+\tilde{u}\bar{G}_{20})}$
for all $(X,\tilde{u})\in\F^{\delta+r}$.
Hence $\Lambda=\bar{\Lambda}$ implies
\[
  \wt\bigg((X,\tilde{u})\begin{pmatrix}C_2\\G_{20}\end{pmatrix}\bigg)=
  \wt\bigg((X,\tilde{u})\begin{pmatrix}\bar{C}_2\\\Gbar_{20}\end{pmatrix}\bigg)
  \text{ for all }(X,\tilde{u})\in\F^{\delta+r},
\]
and the MacWilliams Equivalence Theorem~\ref{T-MacWE} for block codes yields that
$\big(\begin{smallmatrix}C_2\\G_{20}\end{smallmatrix}\big)$ and
$\big(\begin{smallmatrix}\bar{C}_2\\\bar{G}_{20}\end{smallmatrix}\big)$ are ME.
Since the columns of these matrices consist exactly of the (appropriately stacked) coefficient vectors of
the columns of~$G_2$ and~$\Gbar_2$, respectively, this means that~$G_2$ and~$\Gbar_2$ are ME.
Together with\eqnref{e-ME1} this shows the desired result.
\end{proof}

\begin{theo}\label{T-StrIsoOnePosFI}
Suppose $\cC,\,\cC'\subseteq\F[z]^n$ are two codes with at most one positive Forney index.
Then the following are equivalent.
\begin{romanlist}
\item $\cC,\,\cC'$ are strongly isometric and $\WAM(\cC)=\WAM(\cC')$.
\item $\cC,\,\cC'$ are ME.
\end{romanlist}
In particular, the equivalence is true for all codes of degree at most~$1$.
\end{theo}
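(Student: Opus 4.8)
The direction (ii)$\Rightarrow$(i) is immediate: $z$-monomial equivalence (indeed monomial equivalence) implies strongly isometric by Remark~\ref{R-ME}(3), and monomially equivalent codes share the same WAM by Remark~\ref{R-ME}(2). So the real content is (i)$\Rightarrow$(ii), and here the hypothesis "at most one positive Forney index" is what we must exploit. The plan is to reduce to Theorem~\ref{P-StrIsoSameWAM} by promoting the equality of WAM-orbits $\WAM(\cC)=\WAM(\cC')$ to an equality of honest WAM-matrices $\Lambda(G)=\Lambda(G')$ for a suitable choice of reduced encoders. If all Forney indices are zero, both codes are block codes and the statement is just Theorem~\ref{T-MacWE}. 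If exactly one Forney index is positive, say equal to~$m$, then $\delta=m$ and the state space is $\F^m$; by Proposition~\ref{P-StrIso} we may pick reduced encoders $G,G'$ of $\cC,\cC'$ with the same row degrees (one row of degree~$m$, the rest constant) realizing the strong isometry, i.e.\ $\varphi(uG)=uG'$.

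The key step is to show that with this choice we in fact have $\Lambda(G)=\Lambda(G')$, not merely $\WAM(\cC)=\WAM(\cC')$. By Theorem~\ref{T-Lambdaunique} there is some $T\in GL_m(\F)$ with $\Lambda(G)_{X,Y}=\Lambda(G')_{XT,YT}$ for all $X,Y$. I would argue that $T$ can be taken to be the identity (or, if not, absorbed into a change of encoder). The point is that with a single row of positive degree, the CCF of $G$ has $B=\big(\begin{smallmatrix}\tilde B\\0\end{smallmatrix}\big)$ where $\tilde B=(1,0,\dots,0)\in\F^{1\times m}$, and the states reachable in one step from~$0$ are exactly $\spann\{\tilde B\}$, a $1$-dimensional subspace; more generally the state sequence associated to a message~$u$ is determined by the single polynomial entry of~$u$ in the long row via\eqnref{e-state}. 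One then checks that any state-space isomorphism $T$ relating $\Lambda(G)$ and $\Lambda(G')$ must preserve the distinguished structure, and after composing $G'$ with the corresponding basis change — which by Proposition~\ref{P-CCF} is realized by replacing $G'$ with $UG'$ for an appropriate $U\in GL_k(\F[z])$ of the block form appearing in the proof of Example~\ref{E-StrIsoMWAM}, namely $U=\big(\begin{smallmatrix}V&u\\0&I\end{smallmatrix}\big)$ — we may assume $T=\id$ while keeping $G'$ reduced with the same row degrees and $\varphi$ still the map $uG\mapsto uG'$. (Here one uses that changing $G'$ by such a $U$ corresponds exactly to the state-space change of basis, since the single positive-degree row controls the entire state.)

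Once $\Lambda(G)=\Lambda(G')$ has been arranged, $G$ and $G'$ are reduced, strongly isometric, and have equal WAM-matrices, so Theorem~\ref{P-StrIsoSameWAM} applies directly and yields that $G$ and $G'$ are ME; hence $\cC$ and $\cC'$ are ME, proving (i)$\Rightarrow$(ii). The final sentence follows since a code of degree at most~$1$ has Forney indices all zero (degree~$0$, a block code) or a single~$1$ with the rest~$0$, so in either case at most one positive Forney index. The main obstacle I anticipate is the middle step: carefully showing that the state-space isomorphism~$T$ from Theorem~\ref{T-Lambdaunique} can be absorbed into an encoder change that keeps $G'$ reduced with unchanged row degrees and keeps $\varphi$ of the form $uG\mapsto uG'$. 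This is exactly the subtlety that made Example~\ref{E-StrIsoMWAM} (two positive Forney indices) fail, so the argument has to use in an essential way that only one row has positive degree — presumably that the group of relevant encoder changes $U$ surjects onto $GL_m(\F)$ via its action on the state space when the positive part is one-dimensional in the appropriate sense.
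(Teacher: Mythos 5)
Your overall skeleton matches the paper's: (ii)$\Rightarrow$(i) is immediate, the all-zero-Forney-index case is Theorem~\ref{T-MacWE}, and in the remaining case you reduce to Theorem~\ref{P-StrIsoSameWAM} by upgrading $\WAM(\cC)=\WAM(\cC')$ to an equality of actual matrices $\Lambda(G)=\Lambda(G')$. But the step you yourself flag as the ``main obstacle'' is a genuine gap, and the mechanism you propose for closing it does not work. You hope to absorb the state-space isomorphism $T$ of Theorem~\ref{T-Lambdaunique} into a change of reduced encoder $G'\mapsto UG'$, on the grounds that the relevant group of encoder changes surjects onto $GL_m(\F)$ via its action on the state space. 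It does not: with a single row of positive degree~$m$ the state sequence\eqnref{e-state} is governed only by the message coordinate belonging to that row, and a reducedness-preserving change of encoder $U=\Smallfourmat{\alpha}{u}{0}{V}$ rescales that coordinate by $\alpha^{-1}\in\F^*$, so the induced action on the state space $\F^{m}$ consists of the scalar matrices $\alpha I_m$ only --- a proper subgroup of $GL_m(\F)$ as soon as $m\geq2$. Hence, if the $T$ produced by Theorem~\ref{T-Lambdaunique} were not scalar, your absorption strategy would fail outright, and nothing in your sketch rules that out.

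The actual content of the paper's proof is precisely to show that $T$ \emph{must} be scalar. This is extracted not from the full identity $\Lambda(G)_{X,Y}=\Lambda(G')_{XT,YT}$ but already from the equality of supports: $\Lambda(G)_{X,Y}\neq0$ iff $(X,Y)\in\im\Smallfourmat{I}{A}{0}{B}$, and comparing this image with its transform under~$T$ yields $V=0$, $S=T^{-1}$ and the equations $RB=AT^{-1}-T^{-1}A$, $WB=BT^{-1}$ of\eqnref{e-RW}; with the explicit nilpotent shift~$A$ and rank-one~$B$ of\eqnref{e-AB} these force $T=\alpha I_{\delta}$ for some $\alpha\in\F^*$. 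The paper then finishes not by changing the encoder but by observing that a scalar~$T$ acts trivially on the WAM, since $Y=XA+uB$ iff $\alpha Y=\alpha XA+\alpha uB$ and Hamming weight is invariant under scaling, whence $\Lambda(G')_{XT,YT}=\Lambda(G')_{X,Y}$ and therefore $\Lambda(G)=\Lambda(G')$, at which point Theorem~\ref{P-StrIsoSameWAM} applies. You need to supply this computation (or an equivalent one): as written, your argument asserts the conclusion of the hard step rather than proving it, and the heuristic you offer for why it should hold is false.
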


Recall that, in general, neither of the conditions in part~(i) implies the other one of that part.
For instance, the codes in Example~\ref{E-notME} have the same WAM (see \cite[Exa.~3.8(b)]{GS08}, but are
not strongly isometric, while Example~\ref{E-exa1} provides us with strongly isometric codes that do
not share the same WAM.
One should also observe that the result extends Theorem~\ref{T-MacWE} as well as Theorem~\ref{T-posFI}.
In fact, if all Forney indices are zero, then this is just the MacWilliams Equivalence Theorem
for block codes and the second condition in~(i) may be omitted.
If all Forney indices are positive, the result is simply Theorem~\ref{T-posFI} and the first condition in~(i) may be omitted.
On the other hand, Example~\ref{E-StrIsoMWAM} shows that the result above is not true if the codes have more than one
positive Forney index and at least one zero Forney index.

\begin{proof}
The implication~(ii)~$\Rightarrow$~(i) is obvious.
Thus let us assume~(i).
Then Proposition~\ref{P-StrIso} implies that there exist reduced encoder matrices $G,\,G'$ of~$\cC,\,\cC'$,
respectively, such that $uG\longmapsto uG'$ is a strong isometry.
Consequently,~$G$ and~$G'$ have the same Forney indices.
If all Forney indices of~$G$ and~$G'$ are zero, the codes are block codes and the result follows from
Theorem~\ref{T-MacWE}.
Thus, let us assume that the first row of~$G$ and~$G'$ each has degree $\delta>0$, while all other rows
are constant.
Consider the WAM's~$\Lambda(G)$ and~$\Lambda(G')$  of~$G$ and~$G'$, respectively.
The assumption $\WAM(\cC)=\WAM(\cC')$ implies the existence of a matrix $T\in GL_{\delta}(\F)$ such that
\begin{equation}\label{e-LambdaLambda'}
   \Lambda(G)_{X,Y}=\Lambda(G')_{XT,YT} \text{ for all }(X,Y)\in\F^{\delta}\times\F^{\delta},
\end{equation}
see Theorem~\ref{T-Lambdaunique}.
In particular,
\begin{equation}\label{e-LambdaXY=0}
   \Lambda(G)_{X,Y}\not=0\Longleftrightarrow\Lambda(G')_{XT,YT}\not=0.
\end{equation}
In order to exploit this equivalence, we need the CCF's $(A,B,C,D)$ and $(A,B,C', D')$ of~$G$ and~$G'$,
respectively.
Notice that the first two matrices of the CCF's of~$G$
and~$G'$ are both of the form
\begin{equation}\label{e-AB}
   A=\left(\begin{smallmatrix} 0&1& & \\ & &\ddots& \\& & &1\\ & & &0\end{smallmatrix}\right)\in\F^{\delta\times\delta},\
   B=\left(\begin{smallmatrix}1& & & \\ &0& & \\ &&\ddots &\\ & & &0\end{smallmatrix}\right)\in\F^{k\times\delta}.
\end{equation}
Using the definition of $\Lambda(G)_{X,Y}$ in\eqnref{e-LambdaXY} we may rewrite\eqnref{e-LambdaXY=0} as
\[
   (X,Y)\in\im\begin{pmatrix}I&A\\0&B\end{pmatrix}\Longleftrightarrow
   (XT,YT)\in\im\begin{pmatrix}I&A\\0&B\end{pmatrix}\Longleftrightarrow
   (X,Y)\in\im\begin{pmatrix}T^{-1}&AT^{-1}\\0&BT^{-1}\end{pmatrix}
\]
for all $(X,Y)\in\F^{\delta}\times\F^{\delta}$. Hence
\[
   \im\begin{pmatrix}I&A\\0&B\end{pmatrix}=\im\begin{pmatrix}T^{-1}&AT^{-1}\\0&BT^{-1}\end{pmatrix},
\]
which in turn yields
\[
  \begin{pmatrix}S&R\\ V&W\end{pmatrix}\begin{pmatrix}I&A\\0&B\end{pmatrix}
  =\begin{pmatrix}T^{-1}&AT^{-1}\\0&BT^{-1}\end{pmatrix}
  \text{ for some }\begin{pmatrix}S&R\\ V&W\end{pmatrix}\in \F^{(\delta+k)\times(\delta+k)}.
\]
But then $V=0$ and $S=T^{-1}$ as well as
\begin{equation}\label{e-RW}
  RB=AT^{-1}-T^{-1}A\text{ and }WB=BT^{-1}.
\end{equation}
Using the form of the matrices~$A$ and~$B$ given in\eqnref{e-AB} it is straightforward to show that the
existence of matrices $R\in\F^{k\times\delta}$ and
$W\in\F^{k\times k}$ such that\eqnref{e-RW} is satisfied implies that $T=\alpha I_{\delta}$ for some
$\alpha\in\F^*$.
Since $Y=XA+uB$ iff $\alpha Y=\alpha XA+\alpha uB$ and $\wt(XC+uD)=\wt(\alpha XC+\alpha uD)$
we obtain from the very definition of the WAM in Definition~\ref{D-Lambda} that
$\Lambda(G')_{X,Y}=\Lambda(G')_{\alpha X,\alpha Y}=\Lambda(G')_{XT,YT}$ for all $(X,Y)\in\F^{\delta}\times\F^{\delta}$.
Hence\eqnref{e-LambdaLambda'} leads to $\Lambda(G)=\Lambda(G')$ and Theorem~\ref{P-StrIsoSameWAM} yields
the desired result.
\end{proof}

\section{Other Notions of Isometries}\label{SS-Gen}
\setcounter{equation}{0}

The isometries considered in the previous sections were based on the fact that, firstly, codes are defined
as $\F[z]$-modules and, secondly, isometries are expected to respect the module structure.
In this section we will briefly address modifications of both the notion of isometry and the
definition of a convolutional code.
Let us begin with the first option.
If, in Definition~\ref{D-isometries}, we require isomorphisms to be just $\F$-linear rather than $\F[z]$-linear,
then isometry becomes a weaker concept and, as we will see next, does not imply $z$-monomial equivalence anymore.

\begin{exa}\label{E-Flinear}
Let $\F=\F_4=\{0,1,\alpha,\alpha^2=\alpha+1\}$ and consider the matrices
\begin{align*}
  &G=\begin{pmatrix}1+z+z^2&1+\alpha z+\alpha^2 z^2& 1+\alpha^2 z+\alpha z^2\end{pmatrix},\\[.6ex]
  &\Gbar=\begin{pmatrix}1+\alpha z+z^2&1+\alpha^2 z+\alpha^2 z^2& 1+z+\alpha z^2\end{pmatrix}.
\end{align*}
Both matrices are basic and thus define codes $\cC:=\im G$ and $\cCbar:=\im\Gbar\subseteq\F[z]^3$.
Define
\[
    S=\begin{pmatrix}0&1&0\\0&0&1\\1&0&0\end{pmatrix}
    \text{ and }\
    M_t=\left\{\begin{array}{ll} I_3,&\text{if }t\in3\Z,\\[.6ex]
                                 S^2,&\text{if }t\in 3\Z+1,\\[.6ex]
                                 \alpha^2S,&\text{if }t\in 3\Z+2.
        \end{array}\right.
\]
Notice that $M_t\in\cM_3$ for all $t\in\N_0$.
Obviously, the map
\begin{equation}\label{e-varphi1}
  \varphi:\F[z]^3\longrightarrow\F[z]^3,\quad \sum_{t\geq0}v_tz^t\longmapsto \sum_{t\geq0}v_tM_tz^t
\end{equation}
is a degree- and weight-preserving $\F$-automorphism of $\F[z]^3$.
It is straightforward to show that
\begin{equation}\label{e-varphi2}
  \varphi(z^tG)=\left\{\begin{array}{ll} \alpha^2z^t\Gbar,&\text{if } t\in 3\Z+2,\\[.6ex]
                                          z^t\Gbar,&\text{else.}
                \end{array}\right.
\end{equation}
Hence, $\F$-linearity implies $\varphi(uG)=\bar{u}\Gbar$, where for $u=\sum_{t\geq0}u_tz^t\in\F[z]$ one
defines $\bar{u}=\sum_{t\geq0}\bar{u}_tz^t\in\F[z]$ via $\bar{u}_t:=\alpha^2 u_t$ if $t\in 3\Z+2$ and $\bar{u}_t:=u_t$ else.
As a consequence, $\varphi(\cC)=\cCbar$ and these codes are strongly $\F$-isometric in the sense that there is
a degree- and weight-preserving $\F$-isomorphism between them.
It is obvious that the codes are not \zME, and thus not ME.
Hence Theorem~\ref{T-IsozME} and Corollary~\ref{C-IsoME} do not remain valid for $\F$-isometries.
From Theorem~\ref{T-posFI} we conclude that $\WAM(\cC)\not=\WAM(\cCbar)$.
On the other hand, it is easy to verify that the codes have identical weight enumerators.
\end{exa}

It should be noted that, in general the image of a code under an $\F$-isometry need not be an $\F[z]$-submodule and thus
not a code.
Indeed, it is straightforward to show, see also \cite[Thm.~8.2]{Pi88b}, that, for an $\F$-isomorphism~$\varphi$, the
image $\varphi(\cC)$ of a code~$\cC$ is a code only if $z^{-1}\varphi^{-1}z\varphi$ is an $\F$-automorphism of~$\cC$.
Obviously, if this  automorphism is the identity then $\varphi$ is $\F[z]$-linear and we arrive at the situation of the
previous sections.
The example above has been constructed using the theory of cyclic convolutional codes.
Such codes provide us with at least one non-trivial $\F$-automorphism and thus allows for the construction of
Example~\ref{E-Flinear}.
Both codes in that example are $\sigma$-cyclic convolutional codes in the sense of \cite[Def.~2.8]{GS04}, where
the $\F$-automorphism~$\sigma$ on $A=\F[x]/_{\langle x^3-1\rangle}\cong\F^3$ is defined by $\sigma(x)=\alpha x$.
Thus, by definition, these codes are left ideals in the skew-polynomial ring $A[z;\sigma]\cong\F[z]^3$ defined by
$az=z\sigma(a)$ for all $a\in A$.
Translated back into the setting of $\F[z]^3$ this means that $\psi(\cC)=\cC$ where
$\psi(v)=\sum_{t\geq0}\alpha^tv_tSz^t$ for all $v=\sum_{t\geq0}v_tz^t\in\F[z]^3$.
The map~$\varphi$ in\eqnref{e-varphi1} has been constructed such that it satisfies $z^{-1}\varphi^{-1}z\varphi=\psi$.
It has to remain open to future research if one can characterize $\F$-isometries (or degree-preserving $\F$-isometries)
of convolutional codes explicitly.
The work in \cite[Ch.~8]{Pi88b} should provide an excellent starting point for this.

Next we turn to the situation where convolutional codes are defined based on rational functions or Laurent
polynomials (or series) rather than polynomials.
In the sequel let $\cR$ be either of the two rings
\begin{equation}\label{e-BiggerRings}
     \F(z)\subseteq \F(\!(z)\!),
\end{equation}
where $\F(\!(z)\!):=\{\sum_{t=l}^\infty v_tz^t\mid l\in\Z,\ v_t\in\F\text{ for all }t\}$ is the
field of Laurent series over~$\F$.
For any matrix~$G\in\F(z)^{k\times n}$ we define
\begin{equation}\label{e-CCR}
   \cC:=\imR G:=\{uG\mid u\in \cR^k\}\subseteq\cR^n
\end{equation}
to be the $\cR$-{\sl (convolutional) code\/}  generated by~$G$
(remember that it is not common to consider convolutional codes that do not possess a rational encoder).
Since the nonzero polynomials are units in~$\cR$ for either of the rings in\eqnref{e-BiggerRings}
it is clear that each $\cR$-code possesses a basic encoder matrix~$G$, that is, a polynomial encoder satisfying\eqnref{e-Gbasic}.
Moreover, two basic encoders $G,\,G'\in\F[z]^{k\times n}$ satisfy $\imR G =\imR G'$ if and only if
$UG=G'$ for some $U\in GL_k(\F[z])$.
Therefore, we have the same notion of reduced encoders and Forney indices as in the polynomial setting.
The dual of an $\cR$-code~$\cC$ is defined, as usual, as $\cC^{\perp}:=\{w\in\cR^n\mid vw\T=0\text{ for all }v\in\cC\}$.
Finally, we define
\begin{equation}\label{e-cCpol1}
    \cCpol:=\cC\cap\F[z]^n
\end{equation}
to be the space of polynomial codewords.
It is easy to see that if $\cC=\imR(G)$ for some basic encoder~$G\in\F[z]^{k\times n}$ then
\begin{equation}\label{e-cCpol2}
    \cCpol=\imFz G:=\{uG\mid u\in\F[z]^k\},
\end{equation}
thus $\cCpol$ is a convolutional code as defined in the previous sections.
Conversely, one can determine the $\cR$-code $\cC$ from $\cCpol$ via
\begin{equation}\label{e-cCpol3}
   \cC=\{\lambda v\mid \lambda\in\cR,\, v\in\cCpol\}.
\end{equation}
From this it follows immediately that $(\cC^{\perp})_{\text{pol}}=(\cCpol)^{\perp}$ in $\F[z]^n$ for each $\cR$-code~$\cC$.
In other words, the notions of duality for the polynomial and the $\cR$-setting are consistent.
Defining ME and \zME\ just like in Definition~\ref{D-ME} it is obvious that $\cR$-codes
$\cC$ and $\cCbar$ are ME if and only if $\cCpol$ and $\cCbarpol$ are.
For \zME, however, the situation is different.
Indeed, the group $\cMzn$ of $z$-monomial matrices is a subgroup of $GL_n(\cR)$ for any choice of~$\cR$, but not
of $GL_n(\F[z])$.
The impact of this becomes most apparent by the following example.

\begin{exa}\label{E-zMER}
Consider the matrices~$H$ and~$\bar{H}$ defined in\eqnref{e-HHbar}.
In Example~\ref{E-StrIsoDual} it has been shown that the polynomial codes $\cCpol:=\imFz H$ and
$\cCbarpol:=\imFz\bar{H}$ are not \zME.
However, the $\cR$-codes $\cC:=\imR H$ and $\cCbar:=\imR\bar{H}$ are \zME\ because
\[
   U\bar{H}\begin{pmatrix}0&z^{-1}&0\\1&0&0\\0&0&z^{-1}\end{pmatrix}=H
   \text{ for }U=\begin{pmatrix}0&1\\z&0\end{pmatrix}\in GL_2(\cR).
\]
In general, since transformation by a $z$-monomial matrix becomes a group action on the set
of all $\cR$-codes in $\cR^n$, we have just like in the block code case
\begin{equation}\label{e-RdualzME}
   \cC,\,\cCbar\subseteq\cR^n\text{ \zME }\Longrightarrow \cC^{\perp},\,\cCbar^{\perp}\text{ \zME}.
\end{equation}
\end{exa}

Let us now turn to general isometries.
First of all, any vector in $\cR^n$ can be written as $\sum_{t\geq l}^{\infty}v_tz^t$, where $v_t\in\F^n$,
and thus we may define the weight of~$v$ as $\wt(v)=\sum_{t\geq l}\wt(v_t)\in\N_0\cup\{\infty\}$.
Notice that, of course, $\dist(\cC)=\dist(\cCpol)$ for any $\cR$-code~$\cC$.
Moreover, we define the delay of~$v$ as $\del(v):=\min\{t\mid v_t\not=0\}$.
We call a map $\varphi:\cC\longrightarrow\cCbar$ between $\cR$-codes~$\cC$ and~$\cCbar$ an $\cR$-{\sl isometry\/} if
$\varphi$ is a weight-preserving $\cR$-isomorphism.
Obviously, the isometries on $\cR^n$ are given by the $z$-monomial matrices.
We have the following straightforward properties.
\begin{prop}\label{P-IsoR}
Let~$\cR$ be any of the rings in\eqnref{e-BiggerRings} and let $\cC,\,\cCbar\subseteq\cR^n$ be two $k$-dimensional
$\cR$-codes.
Let $G\in\F[z]^{k\times n}$ be basic such that $\cC=\imR G$ and suppose that $\varphi:\cC\longrightarrow\cCbar$
is an $\cR$-isometry. Then
\begin{alphalist}
\item there exists a matrix $\Gbar\in\F[z]^{k\times n}$ such that $\cCbar=\imR\Gbar$ and $\wt(uG)=\wt(u\Gbar)$
      for all $u\in\F[z]^k$;
\item $\cC$ and~$\cCbar$ are \zME;
\item if~$\varphi$ is delay-preserving then $\cCpol$ and $\cCbarpol$ are \zME.
\end{alphalist}
\end{prop}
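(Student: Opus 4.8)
The plan is to adapt the argument of Proposition~\ref{P-StrIso}(a) to the ring~$\cR$ and then invoke the MacWilliams Equivalence Theorem~\ref{T-IsozME} twice. For part~(a): since~$G$ is basic, the encoder map $f\colon\cR^k\to\cC,\ u\mapsto uG$, is an $\cR$-isomorphism, hence so is $g:=\varphi\circ f\colon\cR^k\to\cCbar$. Let $\Gbar'\in\cR^{k\times n}$ be the matrix whose $i$th row is $g(e_i)=\varphi(g_i)$, where $g_i$ denotes the $i$th row of~$G$; then $g(u)=u\Gbar'$ for all $u\in\cR^k$. The key observation is that each $g_i$ has finite weight, so $\varphi(g_i)$ has finite weight and hence finite support, which means $\Gbar'\in\F[z,z^{-1}]^{k\times n}$. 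Choosing $s\in\N_0$ with $\Gbar:=z^s\Gbar'\in\F[z]^{k\times n}$ and using that $z^s$ is a unit of~$\cR$ which moreover preserves weights, we obtain $\cCbar=\imR\Gbar'=\imR\Gbar$ and $\wt(uG)=\wt(u\Gbar')=\wt(u\Gbar)$ for all $u\in\cR^k$, in particular for all $u\in\F[z]^k$.

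Part~(b) then follows at once: $G$ and~$\Gbar$ are polynomial matrices in $\F[z]^{k\times n}$ with $\wt(uG)=\wt(u\Gbar)$ for all $u\in\F[z]^k$, so Theorem~\ref{T-IsozME} provides a matrix $M\in\cMzn$ with $\Gbar=GM$. Consequently $\cCbar=\imR\Gbar=\{uGM\mid u\in\cR^k\}=\{vM\mid v\in\cC\}$, which is precisely what it means for $\cC$ and~$\cCbar$ to be \zME\ in the sense of the $\cR$-analogue of Definition~\ref{D-ME}.

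For part~(c), assume in addition that $\varphi$ is delay-preserving. For $v\in\cCpol$ we have $\del(v)\geq0$, hence $\del(\varphi(v))=\del(v)\geq0$; since $\varphi(v)$ also has finite support by the weight argument above, this forces $\varphi(v)\in\F[z]^n$, and because $\varphi(v)\in\cCbar$ we get $\varphi(v)\in\cCbarpol$. The inverse map $\varphi^{-1}$ is again a delay- and weight-preserving $\cR$-isomorphism, so the same reasoning gives $\varphi^{-1}(\cCbarpol)\subseteq\cCpol$, and therefore $\varphi$ restricts to a weight-preserving $\F[z]$-isomorphism $\cCpol\to\cCbarpol$. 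By\eqnref{e-cCpol2} both $\cCpol$ and $\cCbarpol$ are convolutional codes in the sense of Section~\ref{SS-Intro}, so a further application of Theorem~\ref{T-IsozME} yields that $\cCpol$ and $\cCbarpol$ are \zME.

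The calculations involved are routine; the only point needing a little care is the passage from ``$\cR$-linear and weight-preserving'' to ``the representing matrix has Laurent-polynomial (and, under delay-preservation, even polynomial) entries'', together with the bookkeeping that the clearing factor $z^s$ is a unit of~$\cR$, so that replacing $\Gbar'$ by $z^s\Gbar'$ changes neither the generated $\cR$-code nor any weights, and that the restriction in~(c) is surjective onto $\cCbarpol$ (which follows by applying $\varphi$ to the inclusion $\varphi^{-1}(\cCbarpol)\subseteq\cCpol$). All the remaining content is a direct appeal to Theorem~\ref{T-IsozME}.
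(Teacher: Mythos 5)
Your proposal is correct and follows essentially the same route as the paper: represent $\varphi\circ(u\mapsto uG)$ by a matrix, use finite weight to see its rows are Laurent polynomials, clear the $z$-power (a weight-preserving unit of $\cR$), and then apply Theorem~\ref{T-IsozME}; for (c) the paper takes $\alpha=0$ in this construction rather than restricting $\varphi$ to $\cCpol$, but that is only a cosmetic difference. Your explicit treatment of the finite-support point and of the surjectivity of the restriction onto $\cCbarpol$ matches what the paper leaves implicit.
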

\begin{proof}
(a) Denote the rows of $G$ by $g_1,\ldots,g_k\in\F[z]^n$.
Then there exists some $\alpha\in\Z$ such that
$\varphi(g_i)=z^{\alpha}\bar{g}_i$ for some $\bar{g}_i\in\F[z]^n$, $i=1,\ldots,k$.
Let~$\Gbar\in\F[z]^{k\times n}$ be the matrix with rows $\bar{g}_1,\ldots,\bar{g}_k$.
It is clear that $\cCbar=\imR\Gbar$ since $z^{\alpha}$ is a unit in~$\cR$.
Moreover, for $u\in\F[z]^k$ we have $\wt(uG)=\wt(\varphi(uG))=\wt(z^{\alpha}u\Gbar)=\wt(u\Gbar)$.
\\
(b) By Theorem~\ref{T-IsozME} the matrices~$G$ and~$\Gbar$ are \zME\ and thus so are the
$\cR$-codes $\cC$ and~$\cCbar$.
\\
(c) From\eqnref{e-cCpol2} we have $\cCpol=\imFz G$.
The delay-preserving property of~$\varphi$ implies that we may choose $\alpha=0$ in part~(a) of this proof.
As a consequence, $\varphi(uG)=u\Gbar$ for all $u\in\F[z]^k$.
But then $\cCbarpol=\imFz\Gbar$.
Indeed, if $\bar{v}\in\cCbarpol$, then $\bar{v}=\varphi(v)$ for some $v\in\cC$,
which must be even in $\cCpol$ since~$\varphi$ is delay-preserving.
Hence $\bar{v}=\varphi(uG)=u\Gbar$ for some $u\in\F[z]^k$ and thus $\bar{v}\in\cCbarpol$.
Now~(b) shows that $\cCpol$ and $\cCbarpol$ are \zME.
\end{proof}

Notice that part~(c) above is not true if~$\varphi$ is not delay-preserving.
Indeed, the data in Example~\ref{E-zMER} provide us with an $\cR$-isometry
$\cCbar\longrightarrow\cC,\ uU\bar{H}\longmapsto uH$ that is not delay-preserving and we know already
that $\cCpol$ and $\cCbarpol$ are not \zME.

Since $z$-monomial matrices are in $GL_n(\cR)$, we have the following situation, resembling that for block codes.

\begin{cor}[MacWilliams Extension Theorem]\label{c-zMER}
Let~$\cR$ be any of the rings in\eqnref{e-BiggerRings}.
Then each $\cR$-isometry $\cC\longrightarrow\cCbar$ between $\cR$-codes $\cC,\,\cCbar\subseteq\cR^n$
extends to an $\cR$-isometry on~$\cR^n$.
\end{cor}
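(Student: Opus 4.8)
The plan is to reduce the claim to a single explicit extension map built from a $z$-monomial matrix and a scaling by a power of~$z$, leaning on the two results already established, namely Proposition~\ref{P-IsoR} and the MacWilliams Equivalence Theorem~\ref{T-IsozME}. Let $\varphi:\cC\longrightarrow\cCbar$ be an $\cR$-isometry; being an $\cR$-isomorphism, it forces $\cC$ and $\cCbar$ to have the same dimension~$k$, and I would fix a basic encoder $G\in\F[z]^{k\times n}$ with $\cC=\imR G$, having rows $g_1,\ldots,g_k$. The proof of Proposition~\ref{P-IsoR}(a) then produces an integer $\alpha\in\Z$ (for instance $\alpha=\min_i\del(\varphi(g_i))$) and a basic polynomial matrix $\Gbar\in\F[z]^{k\times n}$ whose rows are $\bar g_i:=z^{-\alpha}\varphi(g_i)$, so that $\varphi(g_i)=z^{\alpha}\bar g_i$ for all~$i$ and $\wt(uG)=\wt(u\Gbar)$ for every $u\in\F[z]^k$.

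Next I would apply Theorem~\ref{T-IsozME}: since $G$ and $\Gbar$ have equal weights on all of $\F[z]^k$, they are \zME, that is, $\Gbar=GM$ for some $z$-monomial matrix $M\in\cMzn$. With this in hand, define $\Phi:\cR^n\longrightarrow\cR^n$ by $\Phi(v)=z^{\alpha}vM$. Because $z^{\alpha}$ is a unit of~$\cR$ for either ring in \eqnref{e-BiggerRings} and $\cMzn$ is a subgroup of $GL_n(\cR)$, the map $\Phi$ is an $\cR$-automorphism of $\cR^n$; and it preserves weight, since multiplication by $z^{\alpha}$ merely shifts the coefficient support and the $z$-monomial matrix~$M$ only permutes the $n$ coordinates and rescales each by a monomial $\beta z^{\mu}$ with $\beta\in\F^*$, so the Hamming weight of the coefficient sequence is unchanged. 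Hence $\Phi$ is an $\cR$-isometry on $\cR^n$.

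It remains to see that $\Phi$ extends~$\varphi$. For each row of~$G$ one computes $\Phi(g_i)=z^{\alpha}g_iM=z^{\alpha}\bar g_i=\varphi(g_i)$, using $\Gbar=GM$ and the definition of $\bar g_i$. Since $g_1,\ldots,g_k$ generate $\cC$ as an $\cR$-module and both $\varphi$ and $\Phi|_\cC$ are $\cR$-linear, we get $\Phi|_\cC=\varphi$; in particular $\Phi(\cC)=\varphi(\cC)=\cCbar$, so $\Phi$ is the desired extension.

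I do not anticipate a genuine obstacle, since the substantive work is already packaged in Proposition~\ref{P-IsoR} and Theorem~\ref{T-IsozME}. The points needing care are purely bookkeeping: choosing the exponent~$\alpha$ uniformly for all rows of~$G$ (so that $\varphi$ factors through the single scalar $z^{\alpha}$ rather than row-dependent shifts), and checking that $z$-monomial transformations paired with unit scalars really are weight-preserving automorphisms of $\cR^n$ --- this is exactly where enlarging the coefficient ring from $\F[z]$ to $\F(z)$ or $\F(\!(z)\!)$ is used, because a $z$-monomial matrix is invertible over $\cR$ but not over $\F[z]$, which is why the extension phenomenon holds in the $\cR$-setting just as for block codes.
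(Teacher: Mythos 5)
Your proof is correct and follows essentially the route the paper intends (the paper leaves the verification implicit, noting only that $z$-monomial matrices lie in $GL_n(\cR)$): Proposition~\ref{P-IsoR}(a) supplies $\Gbar$ with $\varphi(g_i)=z^{\alpha}\bar{g}_i$ and $\wt(uG)=\wt(u\Gbar)$, Theorem~\ref{T-IsozME} yields $\Gbar=GM$ for some $M\in\cMzn$, and $v\longmapsto z^{\alpha}vM$ is the desired weight-preserving extension restricting to~$\varphi$ on~$\cC$. The only tiny slip is calling $\Gbar$ basic --- it need not be, since the rows $z^{-\alpha}\varphi(g_i)$ may still admit a nontrivial common factor --- but Theorem~\ref{T-IsozME} is stated for arbitrary matrices, so the argument is unaffected.
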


Finally, Proposition~\ref{P-IsoR}(b) along with\eqnref{e-RdualzME} yields the following.
\begin{cor}\label{C-dualR}
If $\cC,\,\cCbar\subseteq\cR^n$ are $\cR$-isometric, then so are~$\cC^{\perp}$ and $\cCbar^{\perp}$.
\end{cor}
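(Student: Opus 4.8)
The plan is to obtain Corollary~\ref{C-dualR} by concatenating two facts already at our disposal: that an $\cR$-isometry forces $z$-monomial equivalence of the two codes, and that, in the $\cR$-setting, $z$-monomial equivalence passes to the duals.

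First I would apply Proposition~\ref{P-IsoR}(b) to the given $\cR$-isometry $\varphi:\cC\longrightarrow\cCbar$; this yields that $\cC$ and $\cCbar$ are \zME. Next I would invoke the implication\eqnref{e-RdualzME}, which rests on the fact that transformation by a $z$-monomial matrix is a group action on the set of all $\cR$-codes in $\cR^n$; it gives that $\cC^{\perp}$ and $\cCbar^{\perp}$ are \zME\ as well. It then remains to convert this $z$-monomial equivalence back into an $\cR$-isometry: if $\cCbar^{\perp}=\{vM\mid v\in\cC^{\perp}\}$ for some $M\in\cMzn$, then $v\longmapsto vM$ is an $\cR$-isomorphism from $\cC^{\perp}$ onto $\cCbar^{\perp}$, and it is weight-preserving because every $z$-monomial matrix is an isometry of $\cR^n$; indeed it is a permutation of the coordinates followed by rescaling of the individual coordinates by units of the form $\alpha z^{s}$ with $\alpha\in\F^*$, $s\in\Z$, and multiplication of a Laurent series by such a unit leaves its Hamming weight unchanged since it merely shifts the support. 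This furnishes the desired $\cR$-isometry $\cC^{\perp}\longrightarrow\cCbar^{\perp}$.

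I do not anticipate any genuine obstacle: each of the three steps is either a direct citation of an earlier statement or an immediate unwinding of the definitions. The only point that perhaps deserves an explicit word is the last one, namely that a $z$-monomial matrix really does act as a weight-preserving $\cR$-isomorphism; but this is exactly the remark made in the text immediately preceding Proposition~\ref{P-IsoR} (``the isometries on $\cR^n$ are given by the $z$-monomial matrices''), so it can be quoted rather than reproved.
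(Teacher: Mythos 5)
Your proposal is correct and follows exactly the paper's own (one-line) argument: Proposition~\ref{P-IsoR}(b) gives that $\cC$ and $\cCbar$ are \zME, and\eqnref{e-RdualzME} transfers this to the duals, with the final observation that a $z$-monomial matrix acts as a weight-preserving $\cR$-isomorphism being implicit in the paper's remark that the isometries on $\cR^n$ are given by the $z$-monomial matrices. Your only addition is to spell out that last step explicitly, which is a harmless elaboration of the same route.
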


It appears natural to call an $\cR$-isometry {\sl strong\/} if it is degree- and delay-preserving.
Recall from Proposition~\ref{P-DelayPres} that in the polynomial setting isometries are always delay-preserving.
Using this concept, Examples~\ref{E-StrIsoDual} and~\ref{E-zMER} show that Corollary~\ref{C-dualR} does not carry over to
strongly $\cR$-isometric codes.
Indeed, the codes $\imR (1,\,z,\,1+z)$ and $\imR(z,\,z,\,1+z)\subseteq\cR^3$ are strongly $\cR$-isometric, but their
duals are not due the results in Example~\ref{E-StrIsoDual} along with Proposition~\ref{P-IsoR}(c).

Despite these last two results we may conclude that, all in all, there is no essential difference concerning module-theoretic
isometries for polynomial versus $\cR$-convolutional codes.

\appendix
\section*{Appendix}
\setcounter{section}{1}
\renewcommand{\theequation}{A.\arabic{equation}}
\setcounter{equation}{0}
We will show that the codes $\cC$ and $\cC'$ in Example~\ref{E-StrIsoDistanceParam} share the same active column and active segment
distances.
We need the reduced WAM's $\widetilde{\Lambda}$ and$\widetilde{\Lambda'}$ in the sense of\eqnref{e-redWAM}.
The weight adjacency matrices associated with the given encoders~$G$ and~$G'$ are given in\eqnref{e-Lambda12}.
Hence in the reduced WAM's $\widetilde{\Lambda}$ and$\widetilde{\Lambda'}$ the first entry in~$\Lambda$ and~$\Lambda'$
is replaced by~$0$.
For $j\in\N$ define the matrices of delays
$M_j=\big(\del(\widetilde{\Lambda}^j_{X,Y})\big)_{X,Y\in\F^2}$ and
$M'_j=\big(\del(\widetilde{\Lambda'}^j_{X,Y})\big)_{X,Y\in\F^2}$.
It is straightforward to show by induction
\[
  M_1=\!\!{\footnotesize\begin{pmatrix}\infty&3&5&2\\1&4&4&1\\1&4&6&3\\2&5&5&2\end{pmatrix}},\,
  M_2=\!\!{\footnotesize\begin{pmatrix}4&7&7&4\\3&4&6&3\\5&4&6&3\\4&5&7&4\end{pmatrix}},\,
  M_3=\!\!{\footnotesize\begin{pmatrix}6&7&9&6\\5&6&8&5\\5&8&8&5\\6&7&9&6\end{pmatrix}},\,
\]
and
\[
  M_j=\!\!{\footnotesize\begin{pmatrix}2j&2j+1&2j+3&2j\\2j-1&2j&2j+2&2j-1\\2j-1&2j&2j+2&2j-1\\2j&2j+1&2j+3&2j\end{pmatrix}}
  \text{ for }j\geq4
\]
as well as
\[
  M'_1=\!\!{\footnotesize\begin{pmatrix}\infty&2&5&3\\2&4&5&3\\1&3&6&4\\1&3&4&2\end{pmatrix}},\
  M'_2=\!\!{\footnotesize\begin{pmatrix}4&6&7&5\\4&4&7&5\\5&3&6&4\\3&3&6&4\end{pmatrix}},\
  M'_3=\!\!{\footnotesize\begin{pmatrix}6&6&9&7\\6&6&9&7\\5&7&8&6\\5&5&8&6\end{pmatrix}}
\]
and
\[
  M'_j=\begin{pmatrix}2j&2j&2j+3&2j+1\\2j&2j&2j+3&2j+1\\2j-1&2j-1&2j+2&2j\\2j-1&2j-1&2j+2&2j\end{pmatrix}
  \text{ for }j\geq4.
\]
Using Proposition~\ref{P-WAMdistparam}(c) we see that the codes have the same active
column distances $a^c_j=a'^c_j=2(j+1)$ and the same active segment distances $a^s_j=a'^s_j=2j+1$
for all $j\in\N_0$.

\bibliographystyle{abbrv}
\bibliography{literatureAK,literatureLZ}

\begin{thebibliography}{10}

\bibitem{DiLP04a}
H.~Q. Dinh and S.~R. L\'opez-Permouth.
\newblock On the equivalence of codes over rings and modules.
\newblock {\em Finite Fields \& their Appl.}, 10:615--625, 2004.

\bibitem{Fo75}
{G.~D. Forney, Jr.}
\newblock Minimal bases of rational vector spaces, with applications to
  multivariable linear systems.
\newblock {\em SIAM J. on Contr.}, 13:493--520, 1975.

\bibitem{GL05p}
H.~Gluesing-Luerssen.
\newblock On the weight distribution of convolutional codes.
\newblock {\em Linear Algebra and its Applications}, 408:298--326, 2005.

\bibitem{GS04}
H.~Gluesing-Luerssen and W.~Schmale.
\newblock On cyclic convolutional codes.
\newblock {\em Acta Applicandae Mathematicae}, 82:183--237, 2004.

\bibitem{GS07}
H.~Gluesing-Luerssen and G.~Schneider.
\newblock State space realizations and monomial equivalence for convolutional
  codes.
\newblock {\em Linear Algebra and its Applications}, 425:518--533, 2007.

\bibitem{GS08}
H.~Gluesing-Luerssen and G.~Schneider.
\newblock On the {M}ac{W}illiams identity for convolutional codes.
\newblock {\em IEEE Trans. Inform. Theory}, IT-54:1536--1550, 2008.

\bibitem{GS08p}
H.~Gluesing-Luerssen and G.~Schneider.
\newblock A {M}ac{W}illiams identity for convolutional codes: {T}he general
  case.
\newblock Preprint 2008. Accepted for publication in IEEE-IT. arXiv:
  cs.IT/0805.3484v1.

\bibitem{GrSch00}
M.~Greferath and S.~E. Schmidt.
\newblock Finite ring combinatorics and {M}ac{W}illiams' {E}quivalence
  {T}heorem.
\newblock {\em J.~Combin.\ Theory Ser.~A}, 92:17--28, 2000.

\bibitem{HJZZ99}
S.~H{\"o}st, R.~Johannesson, K.~S. Zigangirov, and V.~V. Zyablov.
\newblock Active distances for convolutional codes.
\newblock {\em IEEE Trans. Inform. Theory}, IT-45:658--669, 1999.

\bibitem{HP03}
W.~C. Huffman and V.~Pless.
\newblock {\em Fundamentals of Error-Correcting Codes}.
\newblock Cambridge University Press, Cambridge, 2003.

\bibitem{JoZi99}
R.~Johannesson and K.~S. Zigangirov.
\newblock {\em Fundamentals of Convolutional Coding}.
\newblock IEEE Press, New York, 1999.

\bibitem{JPB90}
J.~Justesen, E.~Paaske, and M.~Ballan.
\newblock Quasi-cyclic unit memory convolutional codes.
\newblock {\em IEEE Trans. Inform. Theory}, IT-36:540--547, 1990.

\bibitem{MacW62}
F.~J. MacWilliams.
\newblock {\em Combinatorial problems of elementary abelian groups}.
\newblock PhD thesis, Harvard University, 1962.

\bibitem{MacW63}
F.~J. MacWilliams.
\newblock A theorem on the distribution of weights in a systematic code.
\newblock {\em Bell Syst.\ Tech.~J.}, 42:79--94, 1963.

\bibitem{McE98}
R.~J. Mc{E}liece.
\newblock The algebraic theory of convolutional codes.
\newblock In V.~S. Pless and W.~C. Huffman, editors, {\em Handbook of Coding
  Theory, Vol.~1}, pages 1065--1138. Elsevier, Amsterdam, 1998.

\bibitem{McE98a}
R.~J. Mc{E}liece.
\newblock How to compute weight enumerators for convolutional codes.
\newblock In M.~Darnell and B.~Honory, editors, {\em Communications and Coding
  (P.~G.~Farrell 60th birthday celebration)}, pages 121--141. Wiley, New York,
  1998.

\bibitem{Pi88b}
P.~Piret.
\newblock {\em Convolutional Codes; {A}n Algebraic Approach}.
\newblock MIT Press, Cambridge, MA, 1988.

\bibitem{Vi71}
A.~J. Viterbi.
\newblock Convolutional codes and their performance in communication systems.
\newblock {\em IEEE Trans. Commun. Technol.}, COM-19:751--772, 1971.

\bibitem{WaWo96}
H.~N. Ward and J.~A. Wood.
\newblock Characters and the equivalence of codes.
\newblock {\em J.~Combin.\ Theory Ser.~A}, 73:348--352, 1996.

\bibitem{Wo99a}
J.~A. Wood.
\newblock Weight functions and the extension theorem for linear codes over
  finite rings.
\newblock {\em Contemp.\ Math.}, 225:231--243, 1999.

\end{thebibliography}
\end{document}